\newtheorem{thm}{Theorem}[section] 
\newtheorem{cor}[thm]{Corollary} 
\newtheorem{lem}[thm]{Lemma} 
\newtheorem{prop}[thm]{Proposition}
\newtheorem{ex}[thm]{Example}
\theoremstyle{definition} 
\newtheorem{defn}[thm]{Definition}
\theoremstyle{remark}  
\newtheorem{rem}[thm]{Remark}  
\def\beq{\begin{eqnarray}}  
\def\eeq{\end{eqnarray}}  
\def\bsp{\begin{split}}  
\def\esp{\end{split}}
\def\R{ \mathbb{R}}
\def\A{{\sf A}}  
\def\C{{\mathbb{C}}}
\newcommand{\mb}[1]{{\mathbb #1}}
\begin{document}   
   
\title{\Large\textbf{Real GIT with applications to compatible representations and Wick-rotations}}  
\author{{\large\textbf{Christer Helleland} and \textbf{Sigbj\o rn Hervik }    }
 \vspace{0.3cm} \\     
Faculty of Science and Technology,\\     
 University of Stavanger,\\  N-4036 Stavanger, Norway         
\vspace{0.3cm} \\      
\texttt{ christer.helleland@uis.no}, \\
\texttt{sigbjorn.hervik@uis.no} }     
\date{\today}     
\maketitle   
\pagestyle{fancy}   
\fancyhead{} 
\fancyhead[EC]{Helleland and Hervik}   
\fancyhead[EL,OR]{\thepage}   
\fancyhead[OC]{Real GIT: compatible representations and Wick-rotations}   
\fancyfoot{} 

\begin{abstract} 
Motivated by Wick-rotations of pseudo-Riemannian manifolds, we study real geometric invariant theory (GIT) and compatible representations. We extend some of the results from earlier works \cite{W2,W1}, in particular,
we give some sufficient as well as necessary conditions for when pseudo-Riemannian manifolds are Wick-rotatable to other signatures.  For arbitrary signatures, we consider a Wick-rotatable pseudo-Riemannian manifold with closed $O(p,q)$-orbits, and thus generalise the existence condition found in \cite{W1}. Using these existence conditions we also derive an invariance theorem for Wick-rotations of arbitrary signatures. 

\end{abstract}

\section{Introduction}
 Let $(M,g)$ be a real analytic pseudo-Riemannian manifold. Here we will ask the question: When can such a manifold be Wick-rotated to a (different) pseudo-Riemannian manifold? 

A partial answer to this question has already been given in the special case where $(M,g)$ (of arbitrary signature) is Wick-rotated to a Riemannian space at a fixed point $p$, implying that $(M,g)$ would have to be Riemann purely electric (RPE), see \cite{W1}. Standard examples of Wick-rotations can be found within Lie groups, indeed any two semi-simple real forms: $G\subset G^{\mb{C}}\supset \tilde{G}$, of a complex Lie group are Wick-rotated, where the Lie groups are equipped with their left-invariant Killing forms: $-\kappa(\cdot,\cdot)$ respectively. As explored in \cite{W2}, the existence of a Wick-rotation at a fixed point $p$ implies the existence of a Wick-rotation of the isometry groups of the pseudo-inner products on the tangent spaces at $p$: $O(p,q)\subset O(n,\mb{C})\supset O(\tilde{p},\tilde{q})$ at the identity element. We continue this study by using results of real GIT applied to actions of these groups. The results are then applied to Wick-rotations, and we give partial answers to the question above in the case of arbitrary signatures (not necessarily Riemannian). 

Another motivation behind studying such Wick-rotations are considering pseudo-Riemannian spaces having identical polynomial curvature invariants \cite{VSI,OP,HC,GW}. Consider two pseudo-Riemannian manifolds $(M,g)$ and $(\tilde{M},\tilde{g})$. Assume that all of their polynomial curvature invariants are identical, what can we then say about the relation between the two spaces? Indeed, here we will address this question locally and we reach a partial classification of spaces with identical invariants. Indeed, again, the Wick-rotations play an important role in this classification. 

Our paper is organised as follows. We begin by the study of real GIT, and apply the results to \emph{compatible representations}, which are defined and purely motivated by the study of Wick-rotations in \cite{W2,W1}.  Many of these results obtained are generalisations of previous results \cite{RS,W2,W1,PV}. These results are then applied to pseudo-Riemannian manifolds and holomorphic Riemannian manifolds. The main GIT results of our paper is Section 5, which we apply to the setting of Wick-rotations (Section 6).

\

In this paper we will reserve the notion of \emph{Riemannian space} to the case when the metric is positive definite (of signature $(++..+)$) while a \emph{Lorentzian space} has signature $(-++..+)$. Note also that the existence of the "anti-isometry" which switches the sign of the metric, $g\mapsto -g$ which induces the group isomorphism $O(p,q)\rightarrow O(q,p)$. 
   
\section{Mathematical Preliminaries}

\subsection{Real slices and compatibility} 

\begin{defn}
A \emph{holomorphic inner product space} is a complex vector space $E$ equipped with a non-degenerate complex bilinear form $g$. 
\end{defn}
For a holomorphic inner product space $E$ we can always choose an orthonormal basis. By doing so we can identify $E$ with $\C^n$ and the holomorphic inner product can be written as
\beq \label{g0}
g_0(X,Y)=X_1Y_1+...+X_nY_n, 
\eeq
where $X=(X_1,...,X_n)$ and $Y=(Y_1,...,Y_n)$. 

Using this orthonormal basis it is also convenient to consider the group of transformation leaving the holomorphic inner product invariant. Consider a complex-linear map $A: E\longrightarrow E$. Using an orthonormal basis, we can represent the map as a complex matrix $ \A: \C^n \longrightarrow \C^n$. Requiring that $g_0(A(X),A(Y))=g_0(X,Y)$, for all $X,Y$, implies that $\A^t\A={\sf 1}$. Consequently, the matrix $\A$ must be  a complex orthogonal matrix; i.e., $\A\in O(n,\C)$. 

\begin{defn}\label{realslice}
Given a holomorphic inner product space $(E,g)$. Then if $W\subset E$ is a real linear subspace for which $g\big{|}_W$ is non-degenerate and real valued, i.e., $g(X,Y)\in \R,~ \forall X,Y\in W$, we will call $W$ a \emph{real slice}. 
\end{defn}

A non-degenerate symmetric real bilinear form shall be called a pseudo-inner product.
\\

We recall that a \emph{conjugation map} $\sigma$ of a complex vector space $E$, is a real linear isomorphism: $E\xrightarrow{\sigma} E$, which is anti-linear, i.e $\sigma(ix)=-i\sigma(x)$ for all $x\in E$. The fix points of such a map, defines what is called a \emph{real form} of $E$. Thus for a complex Lie group $G$, an \emph{anti-holomorphic involution} (or \emph{real structure}): $G\xrightarrow{F} G$, is an involution of real Lie groups such that the differential at $1$: $\mathfrak{g}\xrightarrow{dF}\mathfrak{g}$, is a conjugation map.

Let $W\subset (E,g)$ be a real slice of dimension: $Dim_{\mb{R}}(W)=Dim_{\mb{C}}(E)$ (i.e $W$ is a real form of $E$). Denote $(p,q)$ for the signature of the restricted pseudo-inner product: $g\big{|}_W(-,-)$. Let $O(p,q)$ denote the real Lie group consisting of isometries of the pseudo-inner product space: $\Big{(}W, g\big{|}_W(-,-)\Big{)}$, then $O(p,q)$ is a real form of $O(n,\mb{C})$ (the isometries of $(E,g)$), by noting the anti-holomorphic involution (real structure): $A\mapsto \sigma\circ A\circ \sigma$, where $\sigma$ is the conjugation map of $W$ in $E$. 

\begin{defn} Let $W\subset (E,g)$ be a real slice. We say an involution $W\xrightarrow{\theta} W$, is a \emph{Cartan involution} of $W$, if $g_{\theta}(\cdot,\cdot):=g\big{|}_{W}(\cdot,\theta(\cdot))$, is an inner product on $W$. \end{defn}

We note that the definition generalises the notion of a Cartan involution of a semi-simple Lie algebra. 

\begin{defn} Two real forms $V$ and $\widetilde{V}$ of $E$ are said to be compatible if their conjugation maps commute, i.e $[\sigma,\tilde{\sigma}]=0$.  \end{defn}

Let $V, \tilde{V}$ and $W$ be real slices of $(E,g)$ (all of the same real dimension as $Dim_{\mathbb{C}}(E)$). Assume $g\big{|}_W(-,-)$ is an inner product, such a real slice is referred to as a \emph{compact real slice}. If all of their conjugation maps are pairwise compatible, then we shall refer to the triple: $\Big{(}V,\tilde{V}, W \Big{)}$, as a \textsl{compatible triple}. 

We shall say that $V\subset (E,g)$ is a real form, to mean that $V$ is a real slice and $Dim_{\mathbb{R}}(V)=Dim_{\mathbb{C}}(E)$.
\\

For Lie groups we define compatibility locally:

\begin{defn} Let $G\subset G^{\mathbb{C}}\supset \tilde{G}$ be two real Lie subgroups of a complex Lie group such that the real Lie algebras are real forms of $\mathfrak{g}^{\mb{C}}$. Then we say $G$ and $\tilde{G}$ are \emph{compatible} if the Lie algebras are compatible. \end{defn}

For example the abelian Lie groups: $S^1\subset \mathbb{C}^{\times}\supset \mathbb{R}^{\times}$ are compatible w.r.t to the real structures: $z\mapsto\frac{1}{\overline{z}}$ and $z\mapsto \overline{z}$ respectively. This is also an example of a compatible triple: $\Big{(}\mathbb{R}^{\times}, S^1, S^1\Big{)}$, in the sense of the following definition:

\begin{defn} Let $G\subset G^{\mathbb{C}}\supset \tilde{G}$ and $U\subset G^{\mathbb{C}}$ be real Lie subgroups of a complex Lie group such that the real Lie algebras are real forms of $\mathfrak{g}^{\mb{C}}$. Moreover assume $U$ is compact. Then we say $\Big{(}G, \tilde{G}, U\Big{)}$ is a \emph{compatible triple} if the Lie algebras are pairwise compatible. \end{defn}

\subsection{A Wick-rotation implies a standard Wick-rotation}

We recall some definitions from \cite{W2}, and prove the equivalence: $$\exists \textsf{ A Wick-rotation} \Leftrightarrow \exists \textsf{ A\ standard Wick-rotation}.$$

\begin{defn}
Given a complex manifold $M^{\mb{C}}$ with complex Riemannian metric $g^{\mb{C}}$. If a submanifold $M\subset M^{\mb{C}}$ for any point $p\in M$ we have that $T_pM$ is a real slice of $(T_pM^{\mb{C}},g^{\mb{C}})$  (in the sense of  Defn. \ref{realslice}), we will call $M$ a real slice of $(M^{\mb{C}},g^{\mb{C}})$. 
\end{defn}
This definition implies that the induced metric from $M^{\mb{C}}$ is real valued on $M$. $M$ is therefore a pseudo-Riemannian manifold. This further implies that real slices are totally real manifolds. 

\begin{defn}[Wick-related spaces]
Two pseudo-Riemannian manifolds $M$ and $\tilde{M}$ are said to be \emph{Wick-related} if there exists a holomorphic Riemannian manifold $(M^{\mb{C}},g^{\mb{C}})$ such that $M$ and $\tilde{M}$ are embedded as real slices of $M^{\mb{C}}$. 
\end{defn}
Wick-related spaces were defined in \cite{PV}. However, we also find it useful to define: 
\begin{defn}[Wick-rotation]
	If two Wick-related spaces (of the same real dimension) intersect at a point $p$ in $M^{\mb{C}}$, then we will use the term \emph{Wick-rotation}: the manifold $M$ can be Wick-rotated to the manifold $\tilde{M}$ (with respect to the point $p$).  
\end{defn}

\begin{rem}Throughout this paper, we shall always assume that $Dim_{\mathbb{R}}(M)=Dim_{\mathbb{R}}(\tilde{M})=Dim_{\mathbb{C}}(M^{\mathbb{C}})$.\end{rem}

\begin{defn}[Standard Wick-rotation]\label{sW}
	Let the $M$ and $\tilde{M}$ be Wick-related spaces (of the same dimension) having a common point $p$. Then if the tangent spaces $T_pM$ and $T_p\tilde{M}$ are embedded: $$T_pM, T_p\tilde{M}\hookrightarrow (T_pM)^{\C}\cong (T_p\tilde{M})^{\C}\hookrightarrow T_p M^{\mb{C}},$$ such that they form a compatible triple with a compact real slice $W\subset (T_pM)^{\C}\cong (T_p\tilde{M})^{\C}$, then we say that the spaces $M$ and $\tilde{M}$ are related through a \emph{standard Wick-rotation}. 
\end{defn}

It is useful to note that a standard Wick-rotation: $(M,g)\subset (M^{\mathbb{C}},g^{\mb{C}})\supset (\tilde{M},\tilde{g})$, at a common point $p$, induces a Wick-rotation of Lie groups at $1$: $O(p,q)\subset O(n,\mb{C})\supset O(\tilde{p},\tilde{q})$. This observation is for instance used in \cite{W1}, and is seen as follows. Let $\{e_1,\dots, e_p,\dots e_n\}$ be a pseudo-orthonormal basis of $g(-,-)$, and $\theta$ denote the Cartan involution of $g$ w.r.t this basis. Then $\{e_1,\dots, e_p, ie_{p+1},\dots ie_n\}:=\{y_1,\dots y_n\}$ is an orthonormal basis of $g^{\mb{C}}(-,-)$. Thus define a holomorphic inner product $\textbf{g}^{\mb{C}}$ on $End(T_pM^{\mb{C}})$ by: $$\textbf{g}^{\mb{C}}(f,h):=\sum_{1\leq l\leq n} g^{\mb{C}}(f(y_l), h(y_l)).$$ It is easy to check that $End(T_pM)\subset \Big{(}End(T_pM^{\mb{C}}), \textbf{g}^{\mb{C}} \Big{)}\supset End(T_p\tilde{M})$ are real forms, precisely because $T_pM$ and $T_p\tilde{M}$ are compatible with the compact real slice: $W:=\langle y_1,\dots, y_p, iy_{p+1}, \dots iy_n\rangle$. A natural choice of Cartan involution $\Theta$ of the induced pseudo-inner product $\textbf{g}$ on $End(T_pM)$ is given by: $$f\mapsto \theta f\theta, \ f\in End(T_pM).$$ Note that if we restrict to the pseudo-orthogonal Lie algebra $\mathfrak{o}(p,q)\subset End(T_pM)$, then $\Theta$ leaves invariant $\mathfrak{o}(p,q)$. Moreover if $p+q\geq 3$ then $\Theta$ is a Cartan involution of the semi-simple Lie algebra: $\mathfrak{o}(p,q)$. An easy calculation shows that $\textbf{g}$ is invariant under the conjugation action of $O(p,q)$ on $End(T_pM)$: $$h\cdot f:=h f h^{-1}, \ h\in O(p,q), \ f\in End(T_pM).$$ Thus $\textbf{g}$ induces a bi-invariant metric on $O(p,q)$. If $p+q\neq 4$ but $p+q\geq 3$, then the Lie algebra $\mathfrak{o}(n,\mb{C})$ is simple, thus $\textbf{g}$ is proportional to the Killing form. If $p+q=4$, then because $\mathfrak{o}(4)$ is simple, and $\Theta$ is a Cartan involution of $\mathfrak{o}(p,q)$ and of $\textbf{g}$, then it follows that $\textbf{g}$ is again proportional to the Killing form. 

Finally we note that the setup above is really just a tensor action by viewing $f^{\mb{C}}\in End(T_pM^{\mb{C}})$ as a tensor in the tensor product $v^{\mb{C}}\in T_pM^{\mb{C}}\otimes T_pM^{\mb{C}}$ w.r.t to a $O(n,\mb{C})$-module isomorphism $f^{\mb{C}}\mapsto v^{\mb{C}}$. Indeed let $f^{ij}\in End(T_pM^{\mb{C}})$ be defined by the matrix $(f^{ij})_{ij}=1$, and otherwise zero, w.r.t the basis: $Y:=\{y_1,\dots, y_n\}$ defined above. Then $\{f^{ij}\}_{ij}$ running over all $1\leq i,j\leq n$ form a basis for $End(T_pM^{\mb{C}})$. We define an isomorphism: $$End(T_pM^{\mb{C}})\xrightarrow{\phi^{\mb{C}}} T_pM^{\mb{C}}\otimes T_pM^{\mb{C}}, \ \ f^{ij}\mapsto y_i\otimes y_j.$$ An easy calculation shows that $\phi^{\mb{C}}(gfg^{-1})=g\cdot \phi^{\mb{C}}(f)$, where $g$ acts on tensors by $g\cdot (v_1\otimes v_2):= g(v_1)\otimes g(v_2)$, i.e $\phi^{\mb{C}}$ is an isomorphism of $O(n,\mb{C})$-modules. An easy calculation shows that $\phi^{\mb{C}}$ maps $End(T_pM)\mapsto T_pM\otimes T_pM$ by noting that $$\{f^{ij}|1\leq j\leq p\}\cup \{if^{ij}|p+1\leq j\leq n\}$$ is a basis for $End(T_pM)$. Trivially it maps $End(W)\mapsto W\otimes W$. It remains to show that it also maps $End(T_p\tilde{M})\mapsto T_p\tilde{M}\otimes T_p\tilde{M}$. To see this we note that since $T_p\tilde{M}$ is compatible with $W$, then we may choose a pseudo-orthogonal basis: $\{\tilde{e}_1,\dots, \tilde{e}_{\tilde{p}},\dots, \tilde{e}_n\}$ of $\tilde{g}$, and define analogously a map: $$\tilde{\phi}^{\mb{C}}: \tilde{f}^{ij}\mapsto \tilde{y}_i\otimes \tilde{y}_j,$$ w.r.t the real basis: $\{\tilde{y}_1,\dots, \tilde{y}_n\}:=\{\tilde{e}_1,\dots, \tilde{e}_{\tilde{p}}, i\tilde{e}_{\tilde{p}+1},\dots, i\tilde{e}_n\}$ of $W$. Thus let $g\in O(n)$ be the map sending $y_j\mapsto \tilde{y}_j$, then $\tilde{f}^{ij}=gf^{ij}g^{-1}$, i.e $${\phi}^{\mb{C}}(\tilde{f}^{ij})={\phi}^{\mb{C}}(gf^{ij}g^{-1})=g\cdot {\phi}^{\mb{C}}(f^{ij}):=g(y_i)\otimes g(y_j)=\tilde{y}_i\otimes \tilde{y}_j=\tilde{\phi}^{\mb{C}}(\tilde{f}^{ij}).$$ Thus since $\tilde{\phi}^{\mb{C}}$ maps analogously $End(T_p\tilde{M})$ into $T_p\tilde{M}\otimes T_p\tilde{M}$, then so does $\phi^{\mb{C}}$.
Therefore we conclude that the map $\phi^{\mb{C}}$ also induce an isomorphism of $O(p,q), \ O(\tilde{p},\tilde{q})$ and $O(n)$ modules respectively. 

We explore the induced isometry action of $O(n,\mb{C})$ on a more general tensor product space in Section 6. 
\\

The motivation behind the definition of a standard Wick-rotation comes from the following lemma together with results from real GIT.

\begin{lem} [\cite{W2}, Lemma 3.6] \label{op} The triple of real forms: $\Big{(} \mathfrak{o}(p,q), \mathfrak{o}(\tilde{p},\tilde{q}), \mathfrak{o}(n)\Big{)}$, embedded into $\mathfrak{o}(n,\C)$ under a standard Wick-rotation is a compatible triple of Lie algebras.     \end{lem}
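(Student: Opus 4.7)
The plan is to pass from commuting conjugation maps on the holomorphic inner product space $E = T_pM^{\mathbb{C}}$ to commuting induced real structures on $\mathfrak{o}(n,\mathbb{C}) \subset \mathrm{End}(E)$.

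First I would name the three conjugation maps explicitly: let $\sigma, \tilde{\sigma}, \tau$ be the conjugation maps of $T_pM, T_p\tilde{M}, W$ respectively, viewed as anti-linear involutions of $E$. By the hypothesis that the standard Wick-rotation gives a compatible triple, these three maps pairwise commute. Then I would define the induced involutions on $\mathrm{End}(E)$ by
\[ F_\sigma(A) := \sigma \circ A \circ \sigma, \quad F_{\tilde{\sigma}}(A) := \tilde{\sigma} \circ A \circ \tilde{\sigma}, \quad F_\tau(A) := \tau \circ A \circ \tau, \]
and verify that each is a real structure (i.e.\ an anti-holomorphic involution) on $\mathrm{End}(E)$ in the sense defined earlier in the paper.

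Next I would check that each $F$ preserves $\mathfrak{o}(n,\mathbb{C})$. This is the only mildly non-trivial computation: starting from the identity $g(\sigma x, \sigma y) = \overline{g(x,y)}$ that characterises $\sigma$ as the conjugation of a real slice on which $g$ is real-valued, a direct manipulation shows that if $g(Ax,y) + g(x,Ay) = 0$ then the same vanishing holds for $\sigma A \sigma$ in place of $A$. One then identifies the fixed point set of $F_\sigma$ on $\mathfrak{o}(n,\mathbb{C})$: $A$ commutes with $\sigma$ precisely when $A$ preserves the real form $T_pM$, and on $T_pM$ the condition $A \in \mathfrak{o}(n,\mathbb{C})$ restricts to $A \in \mathfrak{o}(p,q)$. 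The analogous statements hold for $F_{\tilde{\sigma}}$ and $F_\tau$, giving $\mathfrak{o}(\tilde{p},\tilde{q})$ and $\mathfrak{o}(n)$ as the respective fixed-point subalgebras, hence as real forms of $\mathfrak{o}(n,\mathbb{C})$.

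The key step is now essentially immediate: compatibility of real forms means pairwise commutativity of the corresponding conjugation maps on $\mathfrak{o}(n,\mathbb{C})$. A one-line computation gives
\[ F_\sigma \circ F_{\tilde{\sigma}}(A) = \sigma \tilde{\sigma} A \tilde{\sigma} \sigma = \tilde{\sigma} \sigma A \sigma \tilde{\sigma} = F_{\tilde{\sigma}} \circ F_\sigma(A), \]
where the middle equality uses $\sigma \tilde{\sigma} = \tilde{\sigma} \sigma$ twice. The same argument handles the other two pairs. I would then note that the differential of each $F$ at $1$ is the restriction of the anti-linear map $F$ itself to the tangent Lie algebra, so the induced conjugations on the Lie algebras $\mathfrak{o}(p,q), \mathfrak{o}(\tilde{p},\tilde{q}), \mathfrak{o}(n) \subset \mathfrak{o}(n,\mathbb{C})$ inherit commutativity, which is precisely the definition of a compatible triple of Lie algebras. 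No step here is a genuine obstacle; the only bookkeeping point requiring care is ensuring the real structures descend from $\mathrm{End}(E)$ to $\mathfrak{o}(n,\mathbb{C})$ with the claimed fixed-point subalgebras, and this is a direct consequence of the characterising identity $g(\sigma \cdot, \sigma \cdot) = \overline{g(\cdot, \cdot)}$.
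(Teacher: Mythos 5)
Your proof is correct and follows exactly the route the paper intends: the paper does not reprove this lemma (it cites \cite{W2}, Lemma 3.6), but its own setup in Section 2 already introduces the induced real structure $A\mapsto \sigma\circ A\circ\sigma$ on the isometry algebra, and your argument simply verifies that these induced conjugations preserve $\mathfrak{o}(n,\C)$, have the claimed fixed-point subalgebras, and inherit pairwise commutativity from $[\sigma,\tilde\sigma]=[\sigma,\tau]=[\tilde\sigma,\tau]=0$. All steps check out, including the key identity $g(\sigma x,\sigma y)=\overline{g(x,y)}$ used to show $\sigma A\sigma$ is again skew with respect to $g$.
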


We begin by observing that the definition of a Wick-rotation is in fact equivalent to the definition of a standard Wick-rotation, i.e we may always find such an embedding of the tangent spaces, we only need to use the following lemma:

\begin{lem} \label{wick} Let $\Big{(}\mathbb{C}^n, \langle-,-\rangle\Big{)}$ be the standard holomorphic inner product space, i.e $\langle Z_1,Z_2 \rangle:= \sum_{i=1}^n z^1_iz^2_i$ for any $Z_1:=(z_1^1,\dots, z_1^n)\in \mathbb{C}^n\ni Z_2:=(z_2^1,\dots, z_2^n)$. Then there exist a compatible triple: $\Big{(}\mathbb{R}^n(p,q), \mathbb{R}^n(\tilde{p},\tilde{q}), \mathbb{R}^n(n,0)\Big{)}$ of any signatures $p+q=\tilde{p}+\tilde{q}=n+0=n$.\end{lem}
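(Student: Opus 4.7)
The plan is to produce the three real forms explicitly by choosing, for each desired signature, which basis vectors of an orthonormal basis of $(\C^n,\langle-,-\rangle)$ to ``imaginarize'', and then observe that the resulting conjugations all have the form $z\mapsto D\bar z$ with $D$ a real diagonal sign matrix. Since real diagonal matrices commute, pairwise compatibility is automatic.

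More concretely, let $\{e_1,\ldots,e_n\}$ be the standard basis of $\C^n$, which is orthonormal for $\langle-,-\rangle$. I would define
\begin{align*}
W &= \mathrm{span}_{\R}\{e_1,\ldots,e_n\},\\
V &= \mathrm{span}_{\R}\{e_1,\ldots,e_p,\,ie_{p+1},\ldots,ie_n\},\\
\tilde V &= \mathrm{span}_{\R}\{e_1,\ldots,e_{\tilde p},\,ie_{\tilde p+1},\ldots,ie_n\}.
\end{align*}
Each is a real form of $\C^n$ (as $\R$-dimension equals $n$), and one checks directly from $\langle ie_k,ie_k\rangle=-1$, $\langle e_k,e_k\rangle=1$ that the restricted bilinear form has signature $(n,0)$, $(p,q)$ and $(\tilde p,\tilde q)$ respectively. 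In particular $W$ is a compact real slice in the sense required.

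Next I would identify the three conjugations. Writing $z=(z_1,\ldots,z_n)$ in the standard coordinates and letting $\epsilon_{p,q}=\diag(\underbrace{1,\ldots,1}_{p},\underbrace{-1,\ldots,-1}_{q})$, a short computation shows that the fixed-point sets of
\[
\sigma_W(z)=\bar z,\qquad \sigma_V(z)=\epsilon_{p,q}\bar z,\qquad \sigma_{\tilde V}(z)=\epsilon_{\tilde p,\tilde q}\bar z
\]
are exactly $W$, $V$ and $\tilde V$; each map is clearly antilinear and involutive, so they are the conjugation maps of the three real forms.

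Finally, pairwise compatibility is immediate: for any two real diagonal matrices $D_1,D_2$ with $\pm1$ entries,
\[
(D_1\bar{\,\cdot\,})\circ(D_2\bar{\,\cdot\,})(z)=D_1\overline{D_2\bar z}=D_1D_2\,z=D_2D_1\,z=(D_2\bar{\,\cdot\,})\circ(D_1\bar{\,\cdot\,})(z),
\]
since $D_1,D_2$ are real and diagonal, hence commute. Applying this to the three pairs shows that $\sigma_W,\sigma_V,\sigma_{\tilde V}$ pairwise commute, so $(V,\tilde V,W)$ is a compatible triple of the required signatures. There is no real obstacle here beyond keeping track of signatures; the construction is rigid once one commits to diagonal sign patterns, which is precisely what forces the commuting property.
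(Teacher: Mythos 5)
Your proof is correct and follows essentially the same route as the paper: both construct the conjugations as $Z\mapsto D\bar Z$ for diagonal sign matrices $D=I_{p,q}$ and deduce compatibility from $[I_{p,q},I_{\tilde p,\tilde q}]=0$. Your version merely makes the fixed-point sets and signature checks more explicit.
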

\begin{proof} For a signature $p+q=n$, there is a conjugation map $\mathbb{C}^n\rightarrow \mathbb{C}^n$, defined by $Z\mapsto I_{p,q}\bar{Z}$ where $I_{p,q}$ is the diagonal matrix with diagonal entries: $(+1,\dots, +1, -1,\dots, -1)$ ($+1$ $p$-times, $-1$ $q$-times). It gives rise to a real slice $\mathbb{R}(p,q)\subset \mathbb{C}^N$, so because $[I_{p,q},I_{\tilde{p},\tilde{q}}]=0$ we have a compatible triple: $$\Big{(}\mathbb{R}^n(p,q), \mathbb{R}^n(\tilde{p},\tilde{q}), \mathbb{R}^n(n,0)\Big{)}.$$ The lemma is proved.\end{proof} 

\begin{cor} \label{sw} If $M$ and $\tilde{M}$ are Wick-rotated at $p\in M\cap \tilde{M}$, then they are also Wick-rotated by a standard Wick-rotation.  \end{cor}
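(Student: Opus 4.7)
The plan is to use Lemma~\ref{wick} at the tangent-space level and then lift to the manifold level by analytically continuing the metrics in normal coordinates. Concretely, I want to build a new holomorphic Riemannian manifold $N^{\mathbb{C}}$ containing $M$ and $\tilde M$ as real slices through $p$ in such a way that $T_pM$ and $T_p\tilde M$ land inside $T_pN^{\mathbb{C}}\cong \mathbb{C}^n$ as standard compatible real slices of Lemma~\ref{wick}; this would then be a standard Wick-rotation by definition.

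At the tangent-space level, I would pick pseudo-orthonormal bases of $T_pM$ and $T_p\tilde M$ and use them to identify these spaces with the standard real slices $\mathbb{R}^n(p,q)$ and $\mathbb{R}^n(\tilde p,\tilde q)$ of $(\mathbb{C}^n,\langle-,-\rangle)$. Lemma~\ref{wick} guarantees that these two slices, together with the compact slice $\mathbb{R}^n(n,0)$, form a compatible triple, and the chosen bases extend complex-linearly to identifications $T_pM^{\mathbb{C}}\cong \mathbb{C}^n\cong T_p\tilde M^{\mathbb{C}}$ of holomorphic inner product spaces.

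To lift to the manifold level, I would use the pseudo-Riemannian exponential map of each of $(M,g)$ and $(\tilde M,\tilde g)$ at $p$ to obtain real-analytic normal coordinates (real-analyticity follows from the existence of the original holomorphic continuation $M^{\mathbb{C}}$), and then analytically continue the metric coefficients to a common neighbourhood $\Omega\subset \mathbb{C}^n$ of the origin. This produces two germs of holomorphic Riemannian structures on $\Omega$, which by hypothesis are both abstractly isomorphic to $M^{\mathbb{C}}$ near $p$ and hence to each other. The main obstacle is to arrange these two germs to coincide literally as subsets of $\mathbb{C}^n$, not merely up to isomorphism: the intertwining isomorphism is determined at the tangent level by some $h\in O(n,\mathbb{C})$, and I would absorb this $h$ by replacing the chosen basis of $T_p\tilde M$ by its $h$-image (and correspondingly reselecting the compatible triple from Lemma~\ref{wick} of the prescribed signatures, which is possible since $O(n,\mathbb{C})$ acts transitively on real slices of each fixed signature). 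Once aligned, the common germ defines a single holomorphic Riemannian manifold $N^{\mathbb{C}}$ containing $M$ and $\tilde M$ as real slices through $p$, and by construction the triple $(T_pM,T_p\tilde M,\mathbb{R}^n(n,0))$ inside $T_0N^{\mathbb{C}}$ is the compatible triple of Lemma~\ref{wick}; so the Wick-rotation through $N^{\mathbb{C}}$ is standard.
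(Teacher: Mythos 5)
Your proposal aims at a strictly stronger statement than the corollary requires, and the step where you try to achieve it is exactly where the argument breaks. Definition~\ref{sW} only asks that $T_pM$ and $T_p\tilde M$ can be \emph{embedded} into $T_pM^{\mathbb{C}}$ via some isomorphism $(T_pM)^{\mathbb{C}}\cong(T_p\tilde M)^{\mathbb{C}}$ of holomorphic inner product spaces so that the images form a compatible triple with a compact real slice; it does not ask that the actual tangent subspaces $T_pM,\,T_p\tilde M\subset T_pM^{\mathbb{C}}$ of the given real slices be compatible. Since the hypothesis already supplies Wick-related spaces with a common point, nothing needs to be rebuilt at the manifold level: the paper's proof is pure linear algebra. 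It uses Lemma~\ref{wick} to produce a slice $V\subset T_pM^{\mathbb{C}}$ of signature $(\tilde p,\tilde q)$ forming a compatible triple with $T_pM$ and a compact $W$, chooses a linear isometry $\psi:V\to T_p\tilde M$, extends it to $T_pM^{\mathbb{C}}$, and takes $\psi^{-1}$ as the embedding of $T_p\tilde M$. Your normal-coordinate and analytic-continuation machinery is therefore unnecessary for the statement being proved.

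The genuine gap is in your ``absorption'' step. The relative position of the pair $(T_pM,T_p\tilde M)$ inside $T_pM^{\mathbb{C}}$ is an invariant of the given Wick-rotation: any identification of the two germs of holomorphic metrics is a local holomorphic isometry, so every realization reproduces the same configuration up to a \emph{simultaneous} $O(n,\mathbb{C})$-motion of both slices. After you compose the embedding of $\tilde M$ with the intertwiner, $T_p\tilde M$ lands on $h\bigl(\mathbb{R}^n(\tilde p,\tilde q)\bigr)$ for the particular $h\in O(n,\mathbb{C})$ dictated by the geometry, and there is no reason this slice is compatible with $\mathbb{R}^n(p,q)$ and a common compact slice: already for a generic $A\in O(n,\mathbb{C})$ the conjugations of $\mathbb{R}^n$ and $A(\mathbb{R}^n)$ fail to commute. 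Transitivity of $O(n,\mathbb{C})$ on slices of a fixed signature lets you place \emph{one} slice anywhere, but not reposition two slices independently while keeping their relative position, so ``reselecting the compatible triple'' is not available. Indeed, if the germ of $(M^{\mathbb{C}},g^{\mathbb{C}})$ at $p$ has trivial isotropy, the literal positions of the two tangent slices cannot be altered at all, and your $N^{\mathbb{C}}$ merely reproduces the original, in general non-compatible, configuration. The repair is to give up the literal realization and twist only the linear embedding of $T_p\tilde M$, as in the paper's proof.
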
 
\begin{proof} Let $M$ and $\tilde{M}$ be Wick-rotated at $p\in M\cap \tilde{M}$. By Lemma \ref{wick} and since $(T_pM)^{\mathbb{C}}\cong \mathbb{C}^n$ as holomorphic inner product spaces, then we can also find a real slice $V$ of $(T_pM)^{\mathbb{C}}$ with signature $(\tilde{p},\tilde{q})$, such that $T_pM$ and $V$ form a compatible triple with a compact real slice $W$. Thus we can extend a real isomorphism: $V\xrightarrow{\psi} T_p\tilde{M}$, to an isomorphism $T_pM^{\mathbb{C}}\rightarrow T_pM^{\mathbb{C}}$, such that $\Big{(}\psi^{-1}(T_p\tilde{M}), T_p{M}, W\Big{)}$ form a compatible triple. This proves that $M$ and $\tilde{M}$ are Wick-rotated by a standard Wick-rotation. The corollary is proved.\end{proof}

Thus the results from \cite{W2}, \cite{W1} hold for Wick-rotated spaces, and we shall therefore always assume a Wick-rotation instead of a standard Wick-rotation. 

\subsection{Real GIT for semi-simple groups}

\textsl{Convention}: For a Lie group $G$ which has finitely many connected components we say $G$ is fcc.

\

Let $G$ be a real semi-simple linear group which is fcc, and $G\xrightarrow{\rho^G_V} GL(V),$ be a real representation. Denote: $G=Ke^{\mathfrak{p}}$, to be the Cartan decomposition w.r.t a global Cartan involution: $G\xrightarrow{\Theta}G$, where $\mathfrak{g}=\mathfrak{k}\oplus \mathfrak{p}$ is the Cartan decomposition of $\mathfrak{g}$ w.r.t $d\Theta:=\theta.$ Let $\langle-,-\rangle$ be a $K$-invariant inner product on $V$ such that $d\rho^G_V(\mathfrak{p})$ consists of symmetric operators w.r.t $\langle-,-\rangle$. A vector $v\in V$ is said to be a \emph{minimal vector} if: $$(\forall g\in G)(||g\cdot v||\geq ||v||),$$ the set of minimal vectors shall be denoted by $\mathcal{M}(G,V)\subset V$.

\

The following theorem by \textsl{Richardson} and \textsl{Slodowy} in \cite{RS}, which relates the closure of a real orbit to the existence of a minimal vector, is worth mentioning:

\begin{thm}[RS] \label{RS} The following statements hold:
\begin{enumerate}
\item{} A real orbit $Gv$ is closed if and only if $Gv\cap \mathcal{M}(G,V)\neq\emptyset$.
\item{} If $v$ is a minimal vector then $Gv\cap \mathcal{M}(G,V)=Kv$.
\item{} If $Gv$ is not closed then there exist $p\in \mathfrak{p}$ such that $e^{tp}\cdot v\rightarrow \alpha\in V$ exist as $t\rightarrow \infty$, and $G\alpha\subset V$ is closed. Moreover $G\alpha\subset \overline{Gv}$ is the unique closed orbit in the closure. 
\item{} A vector $v\in V$ is minimal if and only if $\Big{(} \forall x\in \mathfrak{p}\Big{)}\Big{(}\langle x\cdot v,v  \rangle=0\Big{)}$, where $x\cdot v$ is the differential action $d\rho^G_V(x)(v)$.
\end{enumerate}
  \end{thm}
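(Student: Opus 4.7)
The plan is to reduce all four statements to a single analytic object, the Kempf--Ness type function $F_v: G \to \mathbb{R}_{\geq 0}$ defined by $F_v(g) = \|\rho^G_V(g)v\|^2$. By $K$-invariance of $\langle-,-\rangle$ and the Cartan decomposition $G = Ke^{\mathfrak{p}}$, $F_v$ descends to $\psi_v: \mathfrak{p} \to \mathbb{R}_{\geq 0}$ via $\psi_v(x) = \|e^{d\rho^G_V(x)}v\|^2$. The crucial structural fact is that $\psi_v$ is \emph{log-convex along rays}: fixing $x \in \mathfrak{p}$, symmetry of $d\rho^G_V(x)$ gives an eigenspace decomposition $v = \sum_\lambda v_\lambda$, so $\psi_v(tx) = \sum_\lambda e^{2t\lambda}\|v_\lambda\|^2$. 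This function of $t$ is log-convex, has derivative $2\langle d\rho^G_V(x)v,v\rangle$ at $t = 0$, and grows to $+\infty$ as $|t|\to\infty$ unless every eigenvalue with $v_\lambda \neq 0$ vanishes, i.e.\ unless $d\rho^G_V(x)v = 0$.

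From this, part (4) is immediate: by convexity, $\psi_v$ attains its minimum at $0$ iff the ray through $0$ in every direction $x \in \mathfrak{p}$ has vanishing derivative at $0$, and that derivative is exactly $2\langle x\cdot v,v\rangle$. Part (2) follows by a strict-convexity argument: if $v$ and $gv = ke^x v$ are both minimal they lie in the same orbit, so they share the minimum norm on that orbit, giving $\psi_v(0) = \psi_v(x)$. Log-convexity along $t \mapsto tx$ with coincident values at $t=0$ and $t=1$ and minimum at $t=0$ forces the function to be constant in $t$, which forces every nonzero eigenvalue to vanish, so $d\rho^G_V(x)v = 0$, hence $e^x v = v$ and $gv = kv \in Kv$. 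For part (1), the $(\Rightarrow)$ direction is direct: on a closed orbit the continuous function $\|\cdot\|^2$ attains its infimum, producing a minimal vector. For $(\Leftarrow)$, take a minimal $w\in Gv$ and a convergent sequence $g_n w \to u$; write $g_n = k_n e^{x_n}$, extract a convergent subsequence $k_n \to k$ by compactness of $K$, and use the coercivity of $\psi_w$ on $\mathfrak{p}/\mathfrak{p}_w$ (where $\mathfrak{p}_w = \{x \in \mathfrak{p} : d\rho^G_V(x)w = 0\}$) together with the fact that $\exp(\mathfrak{p}_w)$ stabilises $w$ to replace $x_n$ by its component transverse to $\mathfrak{p}_w$, which is then bounded; passing to the limit yields $u = ke^x w \in Gw$.

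For part (3), let $m = \inf_{u \in \overline{Gv}}\|u\|^2$; since $Gv$ is not closed, $m$ is attained by some $\alpha \in \overline{Gv}$, which is minimal and so lies in a closed orbit $G\alpha \subset \overline{Gv}$ by part (1). To produce the specific curve $e^{tp}\cdot v \to \alpha$, take $g_n v \to \alpha$ with $g_n = k_n e^{x_n}$; since $\alpha \notin Gv$ the $\|x_n\|$ must diverge, so after normalising $x_n/\|x_n\| \to p_0 \in \mathfrak{p}$ one uses the log-convexity of $t \mapsto \psi_v(tp_0)$ to identify a rescaled direction $p$ along which $e^{tp}\cdot v$ genuinely converges to $\alpha$ as $t \to \infty$. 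For uniqueness of the closed orbit in the closure, any other closed $G\beta \subset \overline{Gv}$ would contain a minimal vector of norm $\leq m$ (being a limit of points in $\overline{Gv}$), in fact of norm $= m$; a convexity/separation argument on $\psi$-values then identifies the $K$-orbits of such minimisers and forces $G\alpha = G\beta$. The hard part will be this last step and the rigorous extraction of the limit direction in part (3): in the real setting one lacks the categorical GIT quotient that makes uniqueness automatic over $\mathbb{C}$, so the argument rests entirely on exploiting log-convexity of $\psi_v$ along rays together with coercivity modulo the stabiliser.
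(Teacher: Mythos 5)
The paper does not prove this theorem: it is quoted verbatim from Richardson and Slodowy \cite{RS} (see also \cite{CR}), so there is no in-paper argument to compare against. Judged on its own terms, your reduction to $\psi_v(x)=\|e^{d\rho^G_V(x)}v\|^2$ and the ray-convexity computation are the right starting point, and your arguments for (4) and (2) are complete and correct --- these are exactly the standard Kempf--Ness arguments.

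The gaps are in (1)($\Leftarrow$) and in both halves of (3). For (1)($\Leftarrow$), the step ``replace $x_n$ by its component transverse to $\mathfrak{p}_w$'' is not legitimate: $\mathfrak{p}$ is not a subalgebra, $e^{x_n}$ does not factor as $e^{y_n}e^{z_n}$ with $y_n\in\mathfrak{p}_w^{\perp}$ and $z_n\in\mathfrak{p}_w$, and convexity of $\psi_w$ along rays through the origin does not by itself give coercivity modulo the stabiliser. What Richardson--Slodowy actually establish is convexity of $gK\mapsto\|gw\|^2$ along \emph{all} geodesics of the symmetric space $G/K$, from which the minimum set is totally geodesic and properness of $G/G_w\rightarrow Gw$ follows; that stronger convexity statement is the missing lemma. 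For (3), the existence of a single $p\in\mathfrak{p}$ with $e^{tp}\cdot v\rightarrow\alpha$ is the real Hilbert--Mumford theorem, and it cannot be extracted by normalising $x_n/\|x_n\|\rightarrow p_0$: for the naive limit direction the eigenvalues of $d\rho^G_V(p_0)$ on the nonzero components of $v$ may well include positive ones, so $e^{tp_0}\cdot v$ need not converge at all. Kempf's instability theory (adapted to $\mathbb{R}$ in \cite{RS} by a rationality/descent argument) is what produces the optimal destabilising direction, and no amount of ray-convexity substitutes for it. Likewise the uniqueness of the closed orbit in $\overline{Gv}$ is delicate precisely because, over $\mathbb{R}$, invariant polynomials do not separate real orbits (a single closed complex orbit can meet $V$ in several closed real orbits --- compare Theorem \ref{clos} of the paper); the standard route is to pass to $\overline{G^{\mathbb{C}}v}$, use algebraic GIT there, and then show that the minimal vectors in $\overline{Gv}$ form a single $K$-orbit. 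Your ``convexity/separation argument'' is a placeholder for that entire descent, which you correctly flag as the hard part but do not supply.
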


Parts (1), (2) and (4) of the theorem is known as the \emph{Kempf-Ness Theorem}, for which it was first proved for linearly complex reductive groups. One shall also remark that Theorem \ref{RS} also holds for a more general class of real reductive Lie groups which includes the class of semi-simple linear groups which are fcc (\cite{CR}).

\

We also recall:

\begin{defn} Let $G^{\mb{C}}$ be a complex Lie group. A closed real Lie subgroup $G\subset (G^{\mb{C}})_{\mb{R}}$ is said to be a \emph{real form}, if $\mathfrak{g}\subset \mathfrak{g}^{\mb{C}}$ is a real form, and $G^{\mb{C}}=G\cdot G^{\mb{C}}_0$ (abstract group product). If $U\subset G^{\mb{C}}$ is a real form which is compact, then we shall say it is a \emph{compact real form}. \end{defn}

Note that $G$ is fcc if and only if $G^{\mb{C}}$ is fcc, and moreover if $G^{\mb{C}}$ is fcc, and $U$ a compact real form, then $U$ must be a maximally compact subgroup of $G^{\mb{C}}$.

\

For a real form $G\subset G^{\mb{C}}$, a complex action $G^{\mb{C}}\xrightarrow{\rho^{\mb{C}}} GL(V^{\mb{C}})$ is a \textsl{complexified action} of a real action: $G\xrightarrow{\rho^G_V} GL(V)$, if $\rho^{\mb{C}}(G)(V)=\rho^G_V(G)(V)$. Let $G$ be semi-simple and the notation as above, then if $\tau$ denotes the conjugation map of the compact real form: $\mathfrak{u}:=\mathfrak{k}\oplus i\mathfrak{p}\subset \mathfrak{g}^{\mb{C}}$, then $\tau$ restricted to $\mathfrak{g}$ is precisely $\theta$. If $(G^{\mb{C}})_{\mb{R}}=Ue^{i\mathfrak{u}}$ is the corresponding Cartan decomposition w.r.t $\tau$, then it is possible to choose a $U$-invariant Hermitian inner product: $H(-,-)$ on $V^{\mb{C}}$ which is compatible with $V$, note that $K\subset U$, and we have that: $$\mathcal{M}(G,V)\subset \mathcal{M}(G^{\mb{C}},V^{\mb{C}}).$$ 

\

Let $G\subset GL(V)$ ($V$ a real vector space) be a semi-simple linear Lie group which is fcc, and $G^{\mb{C}}\subset GL(V^{\mb{C}})$ be the Zariski-closure of $G$. We recall the following known result:

\begin{thm} [\cite{EJ}, Lemma 2.2 + Remark p.3] \label{clos} If $v\in V$, then $Gv\subset V$ is closed if and only if $G^{\mb{C}}v\subset V^{\mb{C}}$ is closed. Also $G^{\mb{C}}v\cap V$ is a finite disjoint union of real orbits: $Gv_j\subset V$.  \end{thm}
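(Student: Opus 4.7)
The plan is to combine the real and complex versions of the Kempf--Ness theorem (Theorem \ref{RS}) via the inclusion $\mathcal{M}(G,V)\subset \mathcal{M}(G^{\mb{C}},V^{\mb{C}})$ established earlier, together with a local dimension argument controlling the structure of $G^{\mb{C}}v\cap V$. The Zariski closure $G^{\mb{C}}$ is complex semi-simple (since $\mathfrak{g}$ is), so Theorem \ref{RS} applies to it as well through its compact real form.

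First I would dispatch the direction ``$Gv$ closed $\Rightarrow G^{\mb{C}}v$ closed''. By Theorem \ref{RS}(1) applied to $G$, there is a $G$-minimal $w\in Gv$; the stated inclusion of minimal vectors upgrades $w$ to a $G^{\mb{C}}$-minimal vector, and Theorem \ref{RS}(1) in the complex setting applied to $G^{\mb{C}}v=G^{\mb{C}}w$ gives its closedness in $V^{\mb{C}}$.

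Next I would establish the disjoint-union decomposition. Fix any $u\in G^{\mb{C}}v\cap V$: the tangent space satisfies $T_u(Gu)=\mathfrak{g}\cdot u$, while a short check using $\mathfrak{g}^{\mb{C}}=\mathfrak{g}\oplus i\mathfrak{g}$ together with $\mathfrak{g}\cdot u\subset V$ and $V\cap iV=0$ yields $(\mathfrak{g}^{\mb{C}}\cdot u)\cap V=\mathfrak{g}\cdot u$. Since the complex orbit $G^{\mb{C}}v$ is a smooth submanifold of $V^{\mb{C}}$, its totally real intersection with $V$ is, near $u$, a smooth submanifold of the same real dimension as $Gu$; hence $Gu$ is open in $G^{\mb{C}}v\cap V$. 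Finiteness of the number of such $G$-orbits would then follow by observing that $G^{\mb{C}}v\cap V$ is a semi-algebraic subset of the real vector space $V$ (being the $\mb{R}$-points of a constructible set) and therefore has only finitely many connected components; since $G$ is fcc each orbit itself has finitely many components, so one concludes $G^{\mb{C}}v\cap V=\bigsqcup_{j=1}^N Gv_j$.

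The remaining direction ``$G^{\mb{C}}v$ closed $\Rightarrow Gv$ closed'' is then immediate: $G^{\mb{C}}v\cap V$ is closed in $V$, each $Gv_j$ is open in this finite disjoint union, hence also closed in $V$; in particular $Gv$ is closed. I expect the main obstacle to be the finiteness step: the tangent-space dimension matching is routine, but arguing that $G^{\mb{C}}v\cap V$ has finitely many components requires either invoking the semi-algebraic finiteness theorem or, equivalently, finiteness of the Galois cohomology $H^1(\mathrm{Gal}(\mb{C}/\mb{R}),H^{\mb{C}})$ with $H^{\mb{C}}=\mathrm{Stab}_{G^{\mb{C}}}(v)$ (Borel--Serre); both invoke machinery beyond the Kempf--Ness framework assumed here.
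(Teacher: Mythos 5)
The paper does not actually prove this theorem --- it is quoted from Eberlein--Jablonski --- and your argument is a correct reconstruction of the standard proof behind that citation: a $G$-minimal vector in a closed $Gv$ is $G^{\mb{C}}$-minimal via $\mathcal{M}(G,V)\subset\mathcal{M}(G^{\mb{C}},V^{\mb{C}})$, so Theorem \ref{RS}(1) gives closedness of $G^{\mb{C}}v$; each real orbit is open in $G^{\mb{C}}v\cap V$ by the tangent-space computation $(\mathfrak{g}^{\mb{C}}\cdot u)\cap V=\mathfrak{g}\cdot u$; finiteness comes from semialgebraicity (equivalently Borel--Serre); and the converse direction then follows since each orbit is open and closed in the closed set $G^{\mb{C}}v\cap V$. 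The only step you state too quickly is the smoothness of $G^{\mb{C}}v\cap V$ near $u$, which is not a generic intersection statement but holds because $G^{\mb{C}}v$ is stable under the conjugation $\sigma$ of $V^{\mb{C}}$ fixing $V$ (as $G^{\mb{C}}$ is the Zariski closure of a subgroup of $GL(V)$ and $v\in V$), so $G^{\mb{C}}v\cap V$ is the fixed locus of an antiholomorphic involution of a complex manifold and hence a smooth totally real submanifold whose real dimension equals $\dim_{\mb{C}}G^{\mb{C}}v=\dim_{\mathbb{R}}Gu$.
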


If $U\subset G^{\mb{C}}\supset G$ are compatible real forms with $U$ a compact real form, together with real representations: $\rho^G_V$ and $\rho^U_W$ which have the same complexification, and $V, W$ are compatible real forms of $V^{\mb{C}}$ then the following result hold:

\begin{thm} [\cite{W2}] \label{W1r} Assume the assumptions above. Then there exist $v\in V$ and $w\in W$ such that $Uw\subset G^{\mb{C}}v\supset Gv$ if and only if $Gw\cap Gv\neq \emptyset$.  \end{thm}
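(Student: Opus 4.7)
Both directions rely on the Kempf--Ness / Richardson--Slodowy theorem (Theorem \ref{RS}) together with the real--complex closedness correspondence (Theorem \ref{clos}). The direction $(\Leftarrow)$ is essentially formal: given $v \in V$ and $w \in W$ with $Gw \cap Gv \neq \emptyset$, any common element $g_1 v = g_2 w$ forces $w = g_2^{-1} g_1 v \in Gv \subset V$, so $w \in V \cap W$; then $Uw \subset G^{\mathbb{C}} w = G^{\mathbb{C}} v$ follows from $U \subset G^{\mathbb{C}}$, while $Gv \subset G^{\mathbb{C}} v$ is tautological. The substance of the theorem therefore lies in $(\Rightarrow)$.

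So assume $Uw \subset G^{\mathbb{C}} v$. The first step is to upgrade this hypothesis to closedness of the $G^{\mathbb{C}}$-orbit. Since $U$ is compact, $Uw$ is closed in $W$, and Theorem \ref{clos} applied to the pair $(U, W)$ (whose Zariski closure is $G^{\mathbb{C}}$) forces $G^{\mathbb{C}} w$ to be closed in $V^{\mathbb{C}}$; combined with $w \in G^{\mathbb{C}} v$ one obtains $G^{\mathbb{C}} v = G^{\mathbb{C}} w$ closed, and a second application of Theorem \ref{clos} to $(G, V)$ shows $Gv$ is closed in $V$.

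The second and most delicate step is to verify that $w$ is itself a minimal vector for $G^{\mathbb{C}}$ on $V^{\mathbb{C}}$. I would fix a $U$-invariant Hermitian form $H$ on $V^{\mathbb{C}}$ simultaneously compatible with $V$ and $W$ (achievable from the compatibility of the real forms $V, W$) and check the Kempf--Ness criterion $H(x \cdot w, w) = 0$ for all $x$ in the $\mathfrak{p}$-part $i\mathfrak{u} = \mathfrak{p} \oplus i\mathfrak{k}$ of $G^{\mathbb{C}}$. For $x \in i\mathfrak{k}$, the element $x$ acts on $W$ as $i$ times a skew-symmetric operator, giving $H(x \cdot w, w) = 0$ immediately. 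For $x \in \mathfrak{p}$, the compatibility of $(G, U)$ places $i\mathfrak{p}$ inside $\mathfrak{u}$, and $\mathfrak{u}$ preserves $W$; hence $x \cdot w \in iW$, so $H(x \cdot w, w) \in i\mathbb{R}$. On the other hand $d\rho^{\mathbb{C}}(x)$ is Hermitian with respect to $H$ (because $d\rho^G_V(\mathfrak{p})$ is symmetric on $V$), so $H(x \cdot w, w) \in \mathbb{R}$; the two conclusions force $H(x \cdot w, w) = 0$.

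With $w$ now a minimal vector, Theorem \ref{RS}(2) applied to $G^{\mathbb{C}}$ identifies $Uw$ as the full set of minimal vectors inside the closed orbit $G^{\mathbb{C}} v$. Since $Gv$ is closed, Theorem \ref{RS}(1) furnishes some $v_0 \in Gv \cap \mathcal{M}(G, V)$, and the inclusion $\mathcal{M}(G, V) \subset \mathcal{M}(G^{\mathbb{C}}, V^{\mathbb{C}})$ stated just before Theorem \ref{clos} then places $v_0$ in $Uw \subset W$. Hence $v_0 \in Gv \cap W \subset V \cap W$, and taking $(v', w') := (v_0, v_0)$ gives $Gv' \cap Gw' = Gv_0 \neq \emptyset$. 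The main obstacle I expect is the minimal-vector verification above: it is precisely the compatibility of both $(G, U)$ in $G^{\mathbb{C}}$ and $(V, W)$ in $V^{\mathbb{C}}$ that produces the inclusion $\mathfrak{p} \cdot W \subset iW$, and without this the reality argument collapses, so $w$ would generally not lie in $\mathcal{M}(G^{\mathbb{C}}, V^{\mathbb{C}})$.
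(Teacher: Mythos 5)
Your argument is correct and follows essentially the same route the paper uses (the theorem itself is only cited from \cite{W2}, but the paper's proofs of Theorem \ref{closed} and Theorem \ref{t}, which subsume it, run exactly this way): establish closedness of the complex and real orbits, pick a simultaneously $(V,W)$-compatible $U$-invariant Hermitian form so that $\mathcal{M}(U,W)=W$ and $\mathcal{M}(G,V)$ both sit inside $\mathcal{M}(G^{\mb{C}},V^{\mb{C}})$, and use Theorem \ref{RS}(2) to identify all minimal vectors of the closed orbit $G^{\mb{C}}v$ with the single $U$-orbit $Uw$, forcing a minimal $v_0\in Gv$ to lie in $Uw$. Your explicit Kempf--Ness verification that $w$ is minimal merely rederives the stated inclusion $\mathcal{M}(U,W)\subset\mathcal{M}(G^{\mb{C}},V^{\mb{C}})$ (Corollary \ref{m}), and your conclusion $v_0\in Uw\cap Gv$ is precisely condition $C$ of Theorem \ref{t}, which the paper identifies with this theorem, so the final relabelling of witnesses is harmless.
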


We end the section with an example. Consider the notation of the example in the previous section (paragraph after Defn \ref{sW}), i.e the conjugation action: $$O(n,\mb{C})\rightarrow GL\Big{(}End(T_pM^{\mb{C}})\Big{)}, \ \ g\cdot f:=gfg^{-1}, \ \ n\geq 3.$$ Put the $O(n)$-invariant Hermitian inner product: $H:=\textbf{g}^{\mb{C}}(\cdot,\mathcal{T}(\cdot))$ on $End(T_pM^{\mb{C}})$, where $\mathcal{T}$ is the conjugation map: $\mathcal{T}:=f\mapsto \tau f\tau$ of $End(W)\subset End(T_pM^{\mb{C}})$, and $\tau$ is the conjugation map of the compact real slice $W\subset T_pM^{\mb{C}}$. It is not difficult to see that $f\in End(T_pM^{\mb{C}})$ is a minimal vector if and only if $$\textbf{g}^{\mb{C}}(x, [f_+,f_-])=0, \forall x\in i\mathfrak{o}(n),$$ where $f=f_++f_-$ is the eigenspace decomposition w.r.t to $\mathcal{T}$. Thus the closed orbits: $O(n,\mathbb{C})\cdot f$, are precisely those which intersect $\mathcal{M}(O(n,\mb{C}), End(T_pM^{\mb{C}}))$. If we moreover restrict our vector space to the Lie algebra: $\mathfrak{o}(n,\mb{C})$, then the action is just the adjoint action, and thus the minimal vectors are precisely those $f\in \mathfrak{o}(n,\mb{C})$ satisfying $[f_+,f_-]=0$. 

\subsection{Real GIT for linearly real reductive groups}

In this subsection we shall extend Theorem \ref{clos} to real forms: $G\subset G^{\mb{C}}$, which are linearly real reductive. 

\

\begin{rem}Note that in the definition of a real form, although $G^{\mb{C}}\subset GL(V^{\mb{C}})$ for some real vector space $V$, then $G$ is not necessarily contained in $G^{\mb{C}}\cap GL(V)$. For example: $SU(2)\subset SL_2(\mb{C})\subset GL_2(\mathbb{C})$, but $SU(2)$ is not contained in $SL_2(\mb{C})\cap GL_2(\mathbb{R})=SL_2(\mb{R})$. However since $SL_2(\mb{C})$ is the universal complexification group of $SU(2)$, then we may find a real vector space $V$ such that $SU(2)\subset GL(V)$, and $SL_2(\mb{C})\subset GL(V^{\mb{C}})$, but this is not part of our assumptions in the definition of a real form. \end{rem}

\begin{defn} [\cite{Neeb}]  A \textsl{linearly complex reductive Lie group} $G^{\mb{C}}$ is a complex Lie group containing a compact subgroup $U$ such that $G^{\mb{C}}$ is the universal complexification group of $U$.  \end{defn}

In fact the complex Lie groups $G^{\mb{C}}$ which are fcc and have a compact real form are precisely the linearly complex reductive groups. Thus this class of groups are all self-adjoint by (\cite{Mos55}, Lemma 5.1), and so the class of groups lends itself to Theorem \ref{RS} by (\cite{CR}). All such groups $G^{\mb{C}}$ are algebraic (canonically), and so are fcc (\cite{RS}, 8.3). One should also note that a complex Lie group $G^{\mb{C}}$ which is fcc and has a reductive Lie algebra is linearly complex reductive if and only if $Z(G^{\mb{C}}_0)_0\cong (\mathbb{C}^{\times})^k$ (a complexified tori), see for example (\cite{Neeb}, Chapter 15). 

\begin{defn} A real linear group $G$ shall be called \textsl{linearly real reductive} if $G$ is fcc and $G_0$ is linearly real reductive in the sense of (\cite{Neeb}, Definition 16.1.4), i.e $Z(G_0)$ is compact and $\mathfrak{g}$ is reductive. \end{defn}

Thus $G$ is also a real reductive Lie group in the sense of (\cite{CR}), i.e there is a faithful representation with closed image: $G\subset GL(V)$, together with a global Cartan involution of $GL(V)$ leaving $G$ invariant.

All semi-simple complex Lie groups are linearly complex reductive, and all real semi-simple linear groups which are fcc are linearly real reductive. One shall also note that the class of linearly real reductive Lie groups $G$ are precisely the Lie groups (fcc) which are completely reducible (i.e every representation is completely reducible). 

In contrary to semi-simple real forms, not all real forms of a linearly complex reductive group are linearly real reductive. Indeed take $G^{\mb{C}}:= \mathbb{C}^{\times}$, then it is linearly complex reductive, with a compact real form $U\cong S^1$. But $G:=\mathbb{R}^{\times}$ is also a real form, however it is not linearly real reductive, but it's nevertheless a real reductive Lie group in the sense of (\cite{CR}). We also see that if $G$ is linearly real reductive and is a real form of some complex group $G^{\mb{C}}$, then $Z(G^{\mb{C}}_0)_0$ has compact real form: $Z(G_0)_0$, which must be a torus, and thus $G^{\mb{C}}$ is linearly complex reductive.  

\

The following extends a property of semi-simple groups:

\begin{lem} \label{seq} Suppose $G$ is linearly real reductive. Then the image of $G$ under any real representation $\rho^G_V(G)\subset GL(V)$ is closed. \end{lem}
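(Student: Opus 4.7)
The plan is to reduce the statement to a polar-decomposition argument in $GL(V)$, after putting $\rho$ in a form adapted to a Cartan decomposition of $G$.

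First, since $G$ is fcc, $\rho(G)$ is a finite union of cosets of $\rho(G_0)$, so it suffices to prove $\rho(G_0)$ is closed. Fix a Cartan decomposition $G_0 = K \exp(\mathfrak{p})$ coming from the reductive structure of $\mathfrak{g}$ and the compactness of $Z(G_0)$; then $K$ is compact and the map $(k,x) \mapsto k\exp(x)$ is a diffeomorphism $K \times \mathfrak{p} \xrightarrow{\sim} G_0$. My goal is to pick an inner product $\langle -,-\rangle$ on $V$ such that $\rho(K) \subset O(V)$ and $d\rho(\mathfrak{p}) \subset \mathrm{Sym}(V)$, which will make $\rho(\exp(\mathfrak{p}))$ land in the positive-definite symmetric cone and identify the Cartan decomposition of $G_0$ with the polar decomposition of $\rho(G_0)$ in $GL(V)$.

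For the first requirement one simply averages any inner product over the compact group $\rho(K)$. For the second requirement, which is the essential point, I would use the Weyl unitary trick: $G$ being linearly real reductive means every representation of $G$ is completely reducible, so it suffices to handle irreducible summands; each such summand extends to a representation of $\mathfrak{g}^{\mathbb{C}}$, and restricting to the compact real form $\mathfrak{u} = \mathfrak{k} \oplus i\mathfrak{p}$ preserves a Hermitian inner product whose real part furnishes the desired adapted inner product. Equivalently, one may invoke Mostow's self-adjointness theorem applied to the linearly real reductive $G$. Once this inner product is in place, given $g_n = k_n \exp(x_n)$ with $\rho(g_n) \to h \in GL(V)$, continuity of polar decomposition on $GL(V)$ gives $\rho(k_n) \to u \in O(V)$ and $\rho(\exp(x_n)) \to P$ positive-definite symmetric with $h = uP$. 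Compactness of $K$ yields a subsequence $k_n \to k \in K$, so $u = \rho(k)$; applying the matrix logarithm, $d\rho(x_n) \to \log P$, and since $d\rho(\mathfrak{p})$ is a finite-dimensional, hence closed, linear subspace of $\mathrm{End}(V)$, we get $\log P = d\rho(x)$ for some $x \in \mathfrak{p}$. Hence $h = \rho(k\exp(x)) \in \rho(G_0)$, as required.

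The main obstacle is the construction of the adapted inner product. The reductivity of $\mathfrak{g}$ (ensuring $\mathfrak{u}$ is a compact real form of $\mathfrak{g}^{\mathbb{C}}$) and the complete reducibility of $\rho$ are both crucial here; without linear real reductivity this step can fail, as witnessed by the non-example $\mathbb{R}^{\times} \hookrightarrow GL_1(\mathbb{R})$ mentioned earlier in the excerpt, where the one-parameter subgroup has non-closed image once one allows a non-faithful twist. Every other step of the argument is standard once the inner product has been chosen.
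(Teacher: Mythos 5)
Your argument is correct, but it follows a genuinely different route from the paper's. The paper decomposes the connected group as generated by its semisimple part $G'$ and its compact centre $Z(G_0)$: it quotes a result (Hilgert--Neeb, Cor.\ 14.5.7) for closedness of $\rho^G_V(G')$, observes that $\rho^G_V(Z(G_0))$ is compact, and then shows by a subsequence argument (pass to a convergent subsequence in the compact factor, divide it out) that the product of these two commuting subgroups is closed. You instead work with the global Cartan decomposition $G_0=K\exp(\mathfrak{p})$ and an adapted inner product making $\rho(K)$ orthogonal and $d\rho(\mathfrak{p})$ symmetric, and then read off closedness from continuity of polar decomposition together with closedness of the linear subspace $d\rho(\mathfrak{p})\subset\mathrm{Sym}(V)$. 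Your route effectively reproves the semisimple case rather than citing it, and makes the mechanism transparent; its cost is that the whole weight rests on the existence of the adapted inner product, which you only sketch. That existence statement is precisely part (4) of Theorem \ref{mos} in the paper (equivalently Mostow's self-adjointness theorem together with conjugacy of Cartan involutions), and since that theorem is proved without appeal to this lemma there is no circularity --- but you should cite it explicitly rather than gesturing at the unitary trick, since for the non-semisimple part you also need $\mathfrak{z}(\mathfrak{g})\subset\mathfrak{k}$ (which holds exactly because $Z(G_0)$ is compact) to ensure the centre lands in the orthogonal group rather than contributing symmetric directions. One small inaccuracy in your closing aside: the standard representation of $\mathbb{R}^{\times}$ (or $\mathbb{R}$) with non-closed image is the irrational winding line on a torus, which is faithful; the failure there is not about faithfulness but about the non-compact centre escaping into a compact direction, i.e.\ exactly the failure of $\mathfrak{z}(\mathfrak{g})\subset\mathfrak{k}$ versus $d\rho(\mathfrak{p})$ being symmetric.
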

\begin{proof} Since $G$ is fcc then we may assume w.l.o.g that $G$ is connected. Now the Lie algebra $\mathfrak{g}$ is reductive thus $\mathfrak{g}=\mathfrak{g}'\oplus \mathfrak{z}(\mathfrak{g})$. Now $Z(G)\subset G$ is compact and has Lie algebra: $\mathfrak{z}(\mathfrak{g})$. Let $G'\subset G$ be the unique connected Lie subgroup of $G$ with Lie algebra $\mathfrak{g}'$. Then $G'$ is semi-simple and connected, and since $G$ is connected then it is generated by $\langle G', Z(G)\rangle$. Also since $Z(G)$ is compact then the image $H'':=\rho^G_V(Z(G))$ is compact, and by (\cite{Neeb}, Corollary 14.5.7), the image $H':=\rho^G_V(G')\subset GL(V)$ is closed. The image $H:=\rho^G_V(G)$ is generated by $H'$ and $H''$. Recall that the topology of $GL(V)\subset End(V)$ is a metric subspace with an induced norm metric: $d(-,-)$ on $End(V)$, satisfying $d(gh,0)\leq d(g,0)d(h,0)$ for all $g,h\in End(V)$. Now suppose $(y_n)\subset H$ is any convergent sequence in $GL(V)$. Then clearly $y_n=a_nb_n$ for sequences $(a_n)\subset H'$ and $(b_n)\subset H''$. Thus since $H''$ is compact then $(b_n)$ is a bounded sequence, and so we may choose a subsequence $(b_{m(k)})$ converging to $\beta\in H''$. It follows that $(a_{m(k)})$ must converge as well using the norm metric, thus it converges for some $\alpha\in H'$. But then $\lim_{n\rightarrow \infty} (y_n)=\lim_{k\rightarrow \infty} (y_{m(k)})=\lim_{k\rightarrow \infty} (a_{m(k)}b_{m(k)})=\alpha\beta\in H.$ This shows that $\rho^G_V(G)\subset GL(V)$ is closed as required. The lemma is proved.  \end{proof}

We now extend Theorem \ref{clos} to the case where our real form is linearly real reductive:

\begin{prop} \label{linearly} Let $G\subset G^{\mb{C}}$ be a real form which is of type real linearly real reductive. Assume $G^{\mb{C}}\xrightarrow{\rho^{\mb{C}}} GL(V^{\mb{C}})$ is a complexified Lie group action of a real Lie group action: $G\xrightarrow{\rho^G_V} GL(V)$. Then Theorem \ref{clos} holds. \end{prop}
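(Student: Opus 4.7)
The plan is to extend the Kempf-Ness machinery of Theorem \ref{RS} (via the arguments behind Theorem \ref{clos}) from the semi-simple case to the linearly real reductive setting. The key observation is that for a linearly real reductive $G$ the center $\mathfrak{z}(\mathfrak{g})$ is compact and so sits inside $\mathfrak{k}$ in any Cartan decomposition $\mathfrak{g}=\mathfrak{k}\oplus\mathfrak{p}$; consequently $\mathfrak{p}$ agrees with the non-compact Cartan part of the semi-simple factor $\mathfrak{g}'=[\mathfrak{g},\mathfrak{g}]$, while the compact central torus contributes only unitary operators with respect to a $K$-invariant inner product, and hence does not affect the minimality criterion of Theorem \ref{RS}(4).

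After first reducing to connected $G$ and $G^{\mathbb{C}}$ (both are fcc, and each orbit decomposes into finitely many equidimensional $G_0$- or $G_0^{\mathbb{C}}$-orbits that are clopen in the full orbit, so closedness transfers back and forth), I would fix a Cartan involution of $G$ producing $K\subset G$, take the compact real form $U\subset G^{\mathbb{C}}$ with $\mathfrak{u}=\mathfrak{k}\oplus i\mathfrak{p}$, equip $V$ with a $K$-invariant inner product $\langle\cdot,\cdot\rangle$ making $d\rho^G_V(\mathfrak{p})$ self-adjoint, and extend to a compatible $U$-invariant Hermitian form $H$ on $V^{\mathbb{C}}$. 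The compatibility $\mathcal{M}(G,V)=\mathcal{M}(G^{\mathbb{C}},V^{\mathbb{C}})\cap V$ then follows from Theorem \ref{RS}(4): for $v\in V$, testing $H(xv,v)=0$ over $x\in i\mathfrak{u}=i\mathfrak{k}\oplus\mathfrak{p}$ splits into the $\mathfrak{p}$-test (which is precisely $G$-minimality) and the $i\mathfrak{k}$-test, which vanishes automatically since $\langle yv,v\rangle=0$ for $y\in\mathfrak{k}$ by skew-symmetry of the $K$-action. The forward direction of the closedness equivalence is then immediate: if $Gv$ is closed, Theorem \ref{RS}(1) (applicable to $G$ via the remark following Theorem \ref{RS}) gives $w\in Gv\cap\mathcal{M}(G,V)\subseteq G^{\mathbb{C}}v\cap\mathcal{M}(G^{\mathbb{C}},V^{\mathbb{C}})$, forcing $G^{\mathbb{C}}v$ closed.

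The main obstacle is the reverse direction together with the finiteness statement. Given $G^{\mathbb{C}}v$ closed and arbitrary $w\in G^{\mathbb{C}}v\cap V$, I would suppose $Gw$ is not closed; Theorem \ref{RS}(3) then produces $\alpha\in\overline{Gw}\setminus Gw$ with $G\alpha$ closed, and the forward direction just proved forces $G^{\mathbb{C}}\alpha=G^{\mathbb{C}}v$. To rule this out and conclude that every such $Gw$ is closed, I would invoke that $G$ is in fact a real algebraic subgroup of $GL(V)$ (being linearly real reductive, it has closed image in $GL(V)$ by Lemma \ref{seq}, reductive Lie algebra, and compact center, hence is algebraic), whose Zariski closure in $GL(V^{\mathbb{C}})$ is exactly $G^{\mathbb{C}}$. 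The arguments in \cite{EJ} then apply essentially verbatim, since they rely on this algebraic structure rather than on semi-simplicity; the finiteness of $G$-orbits in $G^{\mathbb{C}}v\cap V$ comes from the finitely many irreducible components of the real algebraic set $G^{\mathbb{C}}v\cap V$, each of which coincides with a single closed $G$-orbit at generic smooth points via a dimension count matching $T_w(G^{\mathbb{C}}v\cap V)$ with $\mathfrak{g}w$.
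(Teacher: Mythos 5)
Your overall route is the same as the paper's: reduce everything to the images $H:=\rho^G_V(G)\subset GL(V)$ and $H^{\mb{C}}:=\rho^{\mb{C}}(G^{\mb{C}})\subset GL(V^{\mb{C}})$, use Lemma \ref{seq} plus the algebraicity of $G^{\mb{C}}$ to place yourself in the Richardson--Slodowy/Eberlein--Jablonski framework, verify the Kempf--Ness compatibility $\mathcal{M}(G,V)=\mathcal{M}(G^{\mb{C}},V^{\mb{C}})\cap V$, and then run the proof of (\cite{EJ}, Lemma 2.2). The forward direction, the minimal-vector computation (the $i\mathfrak{k}$-test vanishing by skew-symmetry), and the finiteness argument via components of the real points of the complex orbit are all in order and match what the paper does.

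The one step I would not let stand is the parenthetical claim that the image of $G$ ``is algebraic'' as a subgroup of $GL(V)$, with Zariski closure $G^{\mb{C}}$ in $GL(V^{\mb{C}})$. These two assertions are jointly too strong: if $H$ were real-Zariski closed with complex Zariski closure $H^{\mb{C}}$, then $H$ would have to equal all of $H^{\mb{C}}\cap GL(V)$, and this fails already for $H=SO(2,1)_0$, which is connected semi-simple linear (hence linearly real reductive, closed in $GL_3(\mathbb{R})$, with reductive Lie algebra and compact center) but is a proper open subgroup of the real points $SO(2,1)$ of its complex Zariski closure, hence not real algebraic. What is actually true, and what both \cite{RS} and \cite{EJ} need, is the weaker sandwich property that the paper establishes: $H$ is Zariski-dense in the algebraic group $H^{\mb{C}}$ (so $H^{\mb{C}}$ is defined over $\mathbb{R}$), and $H^{\mb{C}}(\mathbb{R})_0\subset H\subset H^{\mb{C}}(\mathbb{R})$ with $H$ closed in the Hausdorff topology. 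Since you already have all the ingredients for this (closedness of the image from Lemma \ref{seq}, equality of Lie algebras, and $H^{\mb{C}}=H\cdot H^{\mb{C}}_0$ forcing Zariski density), the fix is purely to replace the overclaim by the sandwich statement; with that substitution your argument is the paper's argument.
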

\begin{proof} Now since $G^{\mb{C}}$ is algebraic, and $\rho^{\mb{C}}$ is a rational representation w.r.t the algebraic structure (\cite{Dong}, Theorem 5.11), then the image $H^{\mb{C}}:=\rho^{\mb{C}}(G^{\mb{C}})$ is a complex algebraic subgroup of $GL(V^{\mb{C}})$. The group $H^{\mb{C}}$ is fcc since $G^{\mb{C}}$ is fcc, and is a linearly complex reductive group, since if $U\subset G^{\mb{C}}$ is a compact real form, then $\rho^{\mb{C}}(U)$ is a compact real form of $H^{\mb{C}}$. Now since $H$ is assumed to be ffc then $H:=\rho^G_V(G)\subset GL(V)$ is a real closed subgroup of $H^{\mb{C}}$ by Lemma \ref{seq}. Now if $Q$ is the Zariski-closure of $H$ in $H^{\mb{C}}$, then $H^{\mb{C}}_0\subset Q^0$ where $Q^0$ is the Zariski-connected component of $Q$. Also since $H^{\mb{C}}=H\cdot H^{\mb{C}}_0\subset H\cdot Q^0\subset Q$, then we have $H^{\mb{C}}=Q$, so $H$ is Zariski-dense in $H^{\mb{C}}$, and in particular $H^{\mb{C}}$ is defined over $\mathbb{R}$. Thus denote the real algebraic subgroup: $H^{\mb{C}}(\mathbb{R}):=H^{\mb{C}}\cap GL(V)\subset H^{\mb{C}}$ then it is a real form under the anti-holomorphic involution: $X\mapsto \overline{X}$. Also $H^{\mb{C}}(\mathbb{R})_0\subset H\subset H^{\mb{C}}(\mathbb{R})\subset H^{\mb{C}}$, because $H^{\mb{C}}(\mathbb{R})$ and $H$ have the same Lie algebras, and moreover note that $H\subset H^{\mb{C}}(\mathbb{R})$ is closed. Thus if we consider the identity representation: $H^{\mb{C}}\rightarrow GL(V^{\mb{C}})$, then we have exactly the assumptions in \cite{RS}, and we can mimic the proof of (\cite{EJ}, Lemma 2.2). But given $v\in V$ then $H^{\mb{C}}v:=G^{\mb{C}}v$ and $Hv:=Gv$ so the proposition follows.\end{proof}

We also make a note of the following theorem, which is well-known for semi-simple linear Lie groups which are fcc, and also holds for reductive algebraic groups in the context of rational representations (\cite{BC}). The theorem also applies to the class of linearly real reductive groups:

\begin{thm} \label{mos} Let $G\subset GL(E)$ be a linearly real reductive Lie group. Then the following statements hold:
\begin{enumerate}
\item{} There exist a global Cartan involution of $GL(E)$ leaving $G$ invariant. 
\item{} If $\mathfrak{gl}(E)\xrightarrow {\theta}\mathfrak{gl}(E)$ is a Cartan involution leaving $\mathfrak{g}$ invariant, then $\Theta(G)\subset G$, where $\Theta$ is the global Cartan involution of $GL(E)$ with differential $\theta$. 
\item{} All Cartan involutions of $G$ are conjugate by an inner automorphism of $G$.
\item{} Let $G\xrightarrow{\rho^G_V} GL(V)$ be a real representation. Then given any global Cartan involution $\Theta$ of $G$, then there exist a global Cartan involution $\Theta'$ of $GL(V)$ such that: $\rho^G_V(\Theta(g))=\Theta'(\rho^G_V(g)), \forall g\in G$.
\end{enumerate}
  \end{thm}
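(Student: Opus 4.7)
The plan is to reduce each of the four statements to its well-established counterpart for semi-simple linear Lie groups (fcc), by exploiting the structural decomposition afforded by linear real reductivity, namely $\mathfrak{g} = [\mathfrak{g},\mathfrak{g}] \oplus \mathfrak{z}(\mathfrak{g})$ with $\mathfrak{z}(\mathfrak{g})$ compact and $[\mathfrak{g},\mathfrak{g}]$ semi-simple, together with the observation already used in the proof of Proposition \ref{linearly}: the Zariski closure $G^{\mathbb{C}}$ of $G$ in $GL(E^{\mb{C}})$ is algebraic and linearly complex reductive, hence self-adjointable in the sense of Mostow \cite{Mos55}.

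For \textbf{(1)}, I would first invoke Mostow's theorem on the algebraic group $G^{\mb{C}}$ to produce a Hermitian form on $E^{\mb{C}}$ whose associated global Cartan involution $\Theta^{\mb{C}}$ of $GL(E^{\mb{C}})$ leaves $G^{\mb{C}}$ invariant. Averaging this Hermitian form over a compact real form of $G^{\mb{C}}$ which is stable under the anti-holomorphic involution cutting out $G$ then arranges for $\Theta^{\mb{C}}$ to commute with complex conjugation, so that its restriction $\Theta$ to $GL(E)$ is a global Cartan involution preserving $G^{\mb{C}} \cap GL(E)$; Proposition \ref{linearly} and the finite component count in fcc groups then force $\Theta(G)=G$. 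For \textbf{(2)}, the global lift $\Theta$ of $\theta$ is unique with the given differential, and $\Theta(G_0) \subset G_0$ is immediate from the naturality of $\exp$ because $G_0$ is exponentially generated. To propagate the inclusion across the remaining components, I would pick a global $\Theta_1$ from \textbf{(1)} stabilising all of $G$, observe that $d\Theta_1|_{\mathfrak{g}}$ and $\theta|_{\mathfrak{g}}$ are both Cartan involutions of $\mathfrak{g}$, conjugate them by an inner automorphism (using the classical semi-simple conjugacy theorem on $[\mathfrak{g},\mathfrak{g}]$, noting that both necessarily act trivially on the compact centre $\mathfrak{z}(\mathfrak{g})$), and finish by uniqueness of the global lift.

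Part \textbf{(3)} then becomes essentially automatic: any two Cartan involutions of $G$ act trivially on $\mathfrak{z}(\mathfrak{g})$ and restrict to Cartan involutions on $[\mathfrak{g},\mathfrak{g}]$, hence are conjugate on $G_0$ by an inner automorphism via the semi-simple result, with the component group handled by the finiteness afforded by fcc. For \textbf{(4)}, I would apply \textbf{(1)} to the closed image $\rho^G_V(G) \subset GL(V)$ (closedness being guaranteed by Lemma \ref{seq}) to obtain a global Cartan involution $\Theta_0$ of $GL(V)$ stabilising $\rho^G_V(G)$, and then use \textbf{(3)} to conjugate $\Theta_0|_{\rho^G_V(G)}$ into the push-forward of $\Theta$ by an inner automorphism $c_h$ of $\rho^G_V(G)$; the adjusted $\Theta' := c_h^{-1} \circ \Theta_0 \circ c_h$ is again a global Cartan involution of $GL(V)$ and satisfies $\rho^G_V \circ \Theta = \Theta' \circ \rho^G_V$ by construction.

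The main obstacle I anticipate lies in \textbf{(1)} and \textbf{(2)}: linearly real reductive groups are generally disconnected and carry a nontrivial (though compact) centre, so the delicate point is to arrange that the global involution on $GL(E)$ really stabilises every component of $G$, not merely $G_0$. This requires either a careful averaging of the Hermitian form over a maximal compact subgroup of $G$ that meets every component, or an appeal to the uniqueness of global Cartan involutions with a prescribed differential combined with the semi-simple conjugacy statement; either way, connectedness assumptions used silently in the semi-simple case have to be replaced by explicit component bookkeeping.
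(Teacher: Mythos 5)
Your strategy is genuinely different from the paper's: the paper does not split off the semi-simple part at all, but instead observes that $\mathfrak{z}(\mathfrak{g})$ is an algebraic Lie algebra (because $Z(G_0)$ is compact), hence $\mathfrak{g}$ is algebraic, and then imports parts (1)--(3) wholesale from the algebraic-group results of Richardson--Slodowy (following the remark of Eberlein--Jablonski) and part (4) from the proof of Borel--Harish-Chandra, Proposition 13.5. Your reduction to the semi-simple case via $\mathfrak{g}=[\mathfrak{g},\mathfrak{g}]\oplus\mathfrak{z}(\mathfrak{g})$ is a reasonable alternative in spirit, and your treatment of (3) and (4) is plausible modulo the points below; but as written the argument has a genuine gap precisely at the place you flag as the main obstacle.

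The gap is in part (2) (and it infects the component bookkeeping in (1)). You conjugate $d\Theta_1|_{\mathfrak{g}}$ into $\theta|_{\mathfrak{g}}$ by an inner automorphism and then "finish by uniqueness of the global lift". But uniqueness of the lift identifies $\Theta$ with $c_g\circ\Theta_1\circ c_g^{-1}$ only if their differentials agree on \emph{all} of $\mathfrak{gl}(E)$; your conjugation only arranges agreement on the subalgebra $\mathfrak{g}$. Two Cartan involutions of $\mathfrak{gl}(E)$ that stabilise $\mathfrak{g}$ and coincide there can still differ off $\mathfrak{g}$, so you cannot conclude $\Theta=c_g\Theta_1 c_g^{-1}$, and hence cannot transport the $G$-invariance of $\Theta_1$ to $\Theta$ on the non-identity components. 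What statement (2) really asserts is self-adjointness of the possibly disconnected group $G$ with respect to \emph{any} compatible inner product, and the standard proof of that goes through the polar decomposition: for $g\in G$ one must show $\Theta(g)^{-1}g=g^*g=e^{2X}$ with $X\in\mathfrak{p}\cap\mathfrak{g}$, so that $e^{2X}\in G_0$ and $\Theta(g)=g\,e^{-2X}\in G$. Keeping the one-parameter group $e^{tX}$ inside $G$ is exactly where algebraicity of $\mathfrak{g}$ (equivalently, the Richardson--Slodowy machinery the paper invokes) is used; it is not delivered by the semi-simple conjugacy theorem plus finiteness of the component group. The same polar-decomposition argument is also what you need to justify the assertion in (1) that a global Cartan involution preserving $G^{\mathbb C}\cap GL(E)$ automatically preserves the open subgroup $G$ sitting between $\bigl(G^{\mathbb C}\cap GL(E)\bigr)_0$ and $G^{\mathbb C}\cap GL(E)$; "finite component count" alone does not force this. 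Finally, a smaller point in (4): "the push-forward of $\Theta$ by $\rho^G_V$" presupposes that $\Theta$ preserves $\ker\rho^G_V$ and descends to a global Cartan involution of the image; this is true but should be argued, since $\rho^G_V$ need not be injective.
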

\begin{proof} Since the center: $\mathfrak{z}(\mathfrak{g})$, of $\mathfrak{g}$ is algebraic because $Z(G_0)$ is compact, then $\mathfrak{g}$ is also algebraic since it is a reductive Lie algebra, thus we can mimic the proof of (\cite{EJ}, Remark p.3). Therefore by the results of (\cite{RS}) cases (1), (2) and (3) follows. Case (4). Since $Z(G_0)$ is compact, then $\rho^G_V(Z(G_0))\subset GL(V)$ is an algebraic subgroup with Lie algebra $d\rho^G_V(\mathfrak{z}(\mathfrak{g}))$, and so the image $d\rho^G_V(\mathfrak{g})$ is an algebraic reductive subalgebra in $\mathfrak{gl}(V)$. Thus following the steps in the proof of (\cite{BC}, Proposition 13.5), case (4) follows. The theorem is proved.   \end{proof}

\begin{cor} \label{aab} Let $G\subset G^{\mb{C}}\supset U$ be two compatible real forms where $G$ is linearly real reductive, and $U$ is a compact real form. Suppose $G^{\mb{C}}\subset GL(V^{\mb{C}})$, then there exist a $U$-invariant Hermitian form on $V^{\mb{C}}$ such that $G^{\mb{C}}$ and $G$ are both self-adjoint.  \end{cor}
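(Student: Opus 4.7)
The plan is to manufacture a common Hermitian form from the compact real form $U$, show $G^{\mb{C}}$ is self-adjoint with respect to it (essentially Mostow's theorem), and then use compatibility together with Theorem \ref{mos}(2) to transfer self-adjointness from $G^{\mb{C}}$ to $G$.

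First I average any positive-definite Hermitian form on $V^{\mb{C}}$ against Haar measure on $U$ to obtain a $U$-invariant Hermitian form $H$. With respect to $H$ the elements of $U$ are unitary, so $\mathfrak{u}\subset \mathfrak{u}(V^{\mb{C}})$ consists of skew-Hermitian operators. Because $U$ is a compact real form of the fcc complex Lie group $G^{\mb{C}}$, the Cartan decomposition reads $G^{\mb{C}}=U\cdot\exp(i\mathfrak{u})$; any $A=u\exp(ix)\in G^{\mb{C}}$ then satisfies $A^{*}=\exp(ix)u^{-1}\in G^{\mb{C}}$ since $\exp(ix)$ is positive-definite Hermitian. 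Hence $G^{\mb{C}}$ is self-adjoint with respect to $H$ (Mostow's Lemma 5.1 as quoted earlier), equivalently the global Cartan involution $\Theta\colon A\mapsto (A^{*})^{-1}$ of $GL(V^{\mb{C}})$ stabilises $G^{\mb{C}}$, and its differential $\theta\colon A\mapsto -A^{*}$ restricts to an anti-linear involution of $\mathfrak{g}^{\mb{C}}$ whose fixed real form is $\mathfrak{g}^{\mb{C}}\cap \mathfrak{u}(V^{\mb{C}})$.

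This fixed real form is compact, contains $\mathfrak{u}$, and has the same real dimension as $\mathfrak{u}$ (both being compact real forms of $\mathfrak{g}^{\mb{C}}$), so it equals $\mathfrak{u}$; therefore $\theta|_{\mathfrak{g}^{\mb{C}}}$ agrees with the conjugation $\tau$ of $\mathfrak{g}^{\mb{C}}$ having fixed set $\mathfrak{u}$. By compatibility of $G$ and $U$ we have $[\sigma,\tau]=0$, where $\sigma$ is the conjugation fixing $\mathfrak{g}$, and hence $\tau(\mathfrak{g})=\mathfrak{g}$. Viewing $G\subset GL(V^{\mb{C}})$ inside the real Lie group $GL(V^{\mb{C}}_{\mathbb{R}})$ with Cartan involution induced by $\mathrm{Re}(H)$, the map $\theta$ is a Cartan involution of $\mathfrak{gl}(V^{\mb{C}}_{\mathbb{R}})$ leaving $\mathfrak{g}$ invariant. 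Since $G$ is linearly real reductive, Theorem \ref{mos}(2) delivers $\Theta(G)\subset G$; because $G$ is a group this yields $A\in G\Rightarrow A^{*}\in G$, so $G$ is self-adjoint with respect to $H$.

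The main obstacle is the alignment step $\mathfrak{g}^{\mb{C}}\cap \mathfrak{u}(V^{\mb{C}})=\mathfrak{u}$: it is precisely this dimension-counting identification that ensures the Cartan involution produced by the averaged form $H$ is the one compatible with the given $U$, without which the commutation relation $[\sigma,\tau]=0$ would not translate into $\theta(\mathfrak{g})=\mathfrak{g}$ and Theorem \ref{mos}(2) could not be applied.
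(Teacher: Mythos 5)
Your proof is correct and follows essentially the same route as the paper: produce a $U$-invariant Hermitian form for which $G^{\mb{C}}$ is self-adjoint, note that by compatibility the resulting conjugation $\tau$ (fixing $\mathfrak{u}$) preserves $\mathfrak{g}$, and then pass to the realification $GL\big{(}(V^{\mb{C}})_{\mathbb{R}}\big{)}$ and invoke Theorem \ref{mos}(2) to conclude $\Theta(G)\subset G$. The only difference is cosmetic: where the paper cites Mostow for the self-adjointness of $G^{\mb{C}}$, you prove it inline via averaging and the polar decomposition $G^{\mb{C}}=U\exp(i\mathfrak{u})$, and you make explicit the alignment $\mathfrak{g}^{\mb{C}}\cap\mathfrak{u}(V^{\mb{C}})=\mathfrak{u}$ that the paper leaves implicit — a worthwhile clarification, not a different argument.
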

\begin{proof} Let $\mathfrak{u}\subset \mathfrak{g}^{\mb{C}}$ be the Lie algebra of $U$, i.e it is a compact real form of $\mathfrak{g}^{\mb{C}}$. By (\cite{Mos55}) the group $G^{\mb{C}}$ is self-adjoint w.r.t a Hermitian inner product $H(-,-)$ on $V^{\mb{C}}$. Let $\mathfrak{gl}(V^{\mb{C}})\xrightarrow{\tau} \mathfrak{gl}(V^{\mb{C}})$ be the conjugation map of $\mathfrak{gl}(V^{\mb{C}})$ with fix points: $\mathfrak{u}(n)$, leaving $\mathfrak{g}^{\mb{C}}$ invariant w.r.t $H(-,-)$. Now since $\mathfrak{g}$ is compatible with $\mathfrak{u}$, then $\mathfrak{g}=\mathfrak{k}\oplus \mathfrak{p}$, with $\mathfrak{k}\subset \mathfrak{u}$ and $\mathfrak{p}\subset i\mathfrak{u}$. Thus $\tau$ also leaves invariant $\mathfrak{g}$. Now by identifying the real groups: $\Big{(}GL(V^{\mb{C}})\Big{)}_{\mathbb{R}}\cong GL\Big{(}(V^{\mb{C}})_{\mathbb{R}}\Big{)}$, then $\tau$ induces a Cartan involution of $\mathfrak{gl}\Big{(}(V^{\mb{C}})_{\mathbb{R}}\Big{)}$ w.r.t the real part of $H(-,-)$, leaving the copy of $\mathfrak{g}\hookrightarrow \mathfrak{gl}\Big{(}(V^{\mb{C}})_{\mathbb{R}}\Big{)}$ invariant. Thus the corresponding global Cartan involution of $GL\Big{(}(V^{\mb{C}})_{\mathbb{R}}\Big{)}$ leaves the copy $G\hookrightarrow GL\Big{(}(V^{\mb{C}})_{\mathbb{R}}\Big{)}$ invariant by (2) of Theorem \ref{mos}, and so the global conjugation map of $GL(V^{\mb{C}})$ with differential $\tau$ must also leave the original copy of $G$ invariant. The corollary is proved. \end{proof}

\begin{rem} Let $V\subset (V^{\mb{C}}, g^{\mb{C}})$ be a real form of a holomorphic inner product space, and consider the linear isometry groups: $G:=O(p,q)\subset G^{\mb{C}}:=O(n,\mb{C})$. Then as a Lie group $G^{\mb{C}}$ is linearly complex reductive for all $n\geq 1$, and for $n>2$ the real form $O(p,q)$ is semi-simple, while for $n=1$ the group $G$ is finite thus is linearly real reductive. For $n=2$ then $G$ is not linearly real reductive, but is the real points of $O(2,\mb{C})$, i.e is a reductive algebraic group, thus the group satisfies the assumptions of the setup in \cite{RS}. Therefore all the results obtained here in this section, can also be applied to a real form: $O(p,q)\subset O(n,\mb{C})$ for all $p+q=n$.  \end{rem}

In regards to Wick-rotations we are mainly interested in the real forms: $O(p,q)\subset O(n,\mb{C})\supset O(\tilde{p},\tilde{q})$, for $p+q=\tilde{p}+\tilde{q}=n$.

\section{Balanced representations}

Throughout sections 3, 4 and 5, when considering a complex Lie group $G^{\mb{C}}$ it shall always be of type linearly complex reductive. Moreover a real form $G\subset G^{\mb{C}}$ shall always be assumed to be either linearly real reductive or in the case where $G^{\mb{C}}$ is defined over $\mb{R}$, the real points: $G=G_{\mb{R}}$. The groups to have in mind are $O(p,q)\subset O(n,\mb{C})$.  
 
\begin{defn} [\cite{RS}, \textsl{Section} 5.2] Let $G\xrightarrow{\rho^G_V} GL(V)$ be a real representation, then $\rho^G_V$ is said to be \emph{balanced representation} if there exist an involution $V\xrightarrow{\theta} V$, and a global Cartan involution: $G\xrightarrow{\Theta} G$ such that: $$\Big{(}\forall g\in G \Big{)}\Big{(}\rho^G_V(\Theta(g))=\theta\circ\rho^G_V(g)\circ\theta\Big{)}.$$ \end{defn}

For example in the case of the adjoint action of a semi-simple Lie group $G$, then the involutions balancing the action are precisely: $\pm \theta$, where $\mathfrak{g}\xrightarrow{\theta}\mathfrak{g}$ is a Cartan involution of the Lie algebra: $\mathfrak{g}$. Note also that any real representation: $U\rightarrow GL(W)$ of a compact Lie group $U$ is balanced, since the global Cartan involution of $U$ is $1_U$ and thus $1_W$ is an involution balancing the action. 
\\

It is also worth noting that if our group $G$ has the property that a global Cartan involution of $G$: $\Theta=Ad(k)$ for some $k\in K$ of order $2$, then all representations are naturally balanced, since one may take $V\xrightarrow{\rho^G_V(k)} V$ as a natural choice of involution balancing a representation $\rho^G_V$. This is the case for example with the pseudo-orthogonal groups: $O(p,q)$. The group $SL_2(\mathbb{R})$ does not have this property for instance.

It is not difficult to see that an involution $\theta$ balancing a representation gives rise to a $G$-invariant symmetric non-degenerate bilinear form: $\langle-,-\rangle$ on $V$ such that $\theta$ is a Cartan involution, i.e $\langle v,\theta(v)\rangle>0$ for all $v\neq0$, (see for example \cite{RS}, \textsl{Section} 5.2). Note that $-\theta$ is also an involution balancing the action, and $\theta$ can not be conjugate to $-\theta$ by the action of $G$.

\begin{defn} \label{d} Let $G\xrightarrow{\rho^G_V} GL(V)$ be a balanced real representation and $\theta$ an involution balancing $\rho^G_V$. Let $\langle-,-\rangle$ be a $(G,\theta)$-invariant symmetric non-degenerate bilinear form on $V$. Then any Cartan involution $\theta'$ of $\langle-,-\rangle$ is said to be an \textsl{inner Cartan involution} of $\rho^G_V$ if it is conjugate by the action of $G$ to $\theta$.\end{defn}

In the case of the adjoint action for semi-simple groups, then fixing the Killing form: $-\kappa(-,-)$ on $\mathfrak{g}$, the inner Cartan involutions are precisely the Cartan involutions contained in $Aut(\mathfrak{g})$.

\

\begin{rem} Hereon whenever we consider a balanced representation $\rho^G_V$ we shall always fix a $G$-invariant non-degenerate symmetric bilinear form: $\langle-,-\rangle$ on $V$, and speak of the inner Cartan involutions of $\langle-,-\rangle$.\end{rem}

\

We shall also consider complex representations, and therefore define analogously:

\begin{defn} Suppose $G^{\mathbb{C}}\xrightarrow{\rho^{\mathbb{C}}}GL(V^{\mathbb{C}})$ is a complex representation. We say $\rho^{\mb{C}}$ is balanced if $(\rho^{\mb{C}})_{\mb{R}}$ (the real representation) is balanced w.r.t a conjugation map: $V^{\mb{C}}\xrightarrow{\tau} V^{\mb{C}}$. \end{defn}

Note that the definition is a generalisation of the adjoint action of semi-simple Lie groups to general actions. We also extend Definition \ref{d} to balanced complex actions $\rho^{\mb{C}}$, i.e if $\tau$ balances $\rho^{\mb{C}}$ then any real involution: $\rho^{\mb{C}}(g)\tau\rho^{\mb{C}}(g^{-1})$ for some $g\in G^{\mb{C}}$ shall be called an inner Cartan involution of $V^{\mb{C}}$. One observes that given a $\tau$ which balances a complex action, then we may choose a $G^{\mb{C}}$-invariant Hermitian form $H(-,-)$ on $V^{\mb{C}}$. 

\

In the case where $V^{\mb{C}}$ is an irreducible $\mathfrak{g}^{\mb{C}}$-module there are restrictions on the involutions balancing the representation:

\begin{prop} \label{bo} Let $G^{\mb{C}}\xrightarrow{\rho^{\mb{C}}} GL(V^{\mb{C}})$ be a balanced complex representation. Assume that $V^{\mb{C}}$ is an irreducible $\mathfrak{g^{\mb{C}}}$-module. Then any two real involutions: $V^{\mb{C}}\xrightarrow{{\tau}, \tilde{\tau}} V^{\mb{C}}$ balancing $\rho^{\mb{C}}$ are conjugate by the action of $G^{\mb{C}}_0$ up to scaling of $\pm1$.  \end{prop}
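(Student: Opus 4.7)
The plan is to reduce the statement to a Schur-lemma scalar ambiguity after first aligning the two Cartan involutions. Since $G^{\mb{C}}$ is linearly complex reductive, $(G^{\mb{C}})_{\mb{R}}$ is linearly real reductive, so Theorem \ref{mos}(3) says any two of its Cartan involutions are conjugate by an inner automorphism; the $G^{\mb{C}}_0$-conjugacy of compact real forms lets us pick this conjugating element in $G^{\mb{C}}_0$. After replacing $\tilde{\tau}$ by a $G^{\mb{C}}_0$-conjugate $\rho^{\mb{C}}(g)\,\tilde{\tau}\,\rho^{\mb{C}}(g)^{-1}$, I may therefore assume that the Cartan involutions $\Theta$ and $\tilde{\Theta}$ attached to $\tau$ and $\tilde{\tau}$ agree.

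With $\Theta=\tilde{\Theta}$ in hand I form $\phi:=\tilde{\tau}\circ\tau$, which is complex linear as a product of two anti-linear maps. The common balancing identity $\tau\,\rho^{\mb{C}}(g)\,\tau=\rho^{\mb{C}}(\Theta(g))=\tilde{\tau}\,\rho^{\mb{C}}(g)\,\tilde{\tau}$ for every $g\in G^{\mb{C}}$ rearranges to $\phi\,\rho^{\mb{C}}(g)=\rho^{\mb{C}}(g)\,\phi$, so $\phi$ is $G^{\mb{C}}$-equivariant. Irreducibility of $V^{\mb{C}}$ as a $\mathfrak{g}^{\mb{C}}$-module, equivalently as a $G^{\mb{C}}_0$-module since $G^{\mb{C}}_0$ is connected, together with Schur's lemma forces $\phi=\lambda\,\mathrm{Id}$ for some $\lambda\in\mathbb{C}^{\times}$, i.e.\ $\tilde{\tau}=\lambda\tau$. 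Since $\tau$ is anti-linear one has $\tau\circ(\lambda\,\mathrm{Id})=(\bar{\lambda}\,\mathrm{Id})\circ\tau$, and a direct computation yields $\tilde{\tau}^{2}=|\lambda|^{2}\,\mathrm{Id}$; the involution condition $\tilde{\tau}^{2}=\mathrm{Id}$ therefore gives $|\lambda|=1$.

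The hard part is promoting $|\lambda|=1$ to $\lambda\in\{\pm1\}$. My plan is to exploit the residual freedom of $G^{\mb{C}}_0$-conjugation on $\tau$: for $h\in G^{\mb{C}}_0$ this conjugation sends $\tau$ to $\rho^{\mb{C}}(h\Theta(h^{-1}))\,\tau$, and via the Cartan decomposition $h=u\exp(iX)$ with $u\in U_{0}$ and $X\in\mathfrak{u}$ one computes $h\Theta(h^{-1})=u\exp(2iX)u^{-1}$. Using the $U_{0}$-invariant Hermitian form supplied by Corollary \ref{aab}, the operator $\rho^{\mb{C}}(\exp(2iX))$ is positive self-adjoint, so whenever it acts as a scalar that scalar is a positive real number. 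Any non-real phase of $\lambda$ can therefore be absorbed into such a conjugation; what cannot be absorbed is precisely the sign ambiguity flagged in the remark just after Definition \ref{d}, namely that $-\tau$ is balancing but never $G$-conjugate to $\tau$. The main obstacle is to verify rigorously that the combined $U_{0}$-conjugation and positive-real scaling indeed eliminates every non-$\pm1$ phase, which will rest on the interplay between irreducibility, the compact real form, and Hermitian positivity.
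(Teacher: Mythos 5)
Your opening moves coincide with the paper's own proof: conjugate one involution by an element of $G^{\mb{C}}_0$ so that both balance the same global Cartan involution, observe that the composite of the two anti-linear maps is a complex-linear $G^{\mb{C}}$-equivariant map, and apply Schur's lemma to the (equivalent) irreducible $G^{\mb{C}}_0$-module to get $\rho^{\mb{C}}(h)\tau\rho^{\mb{C}}(h)^{-1}=\lambda\tilde{\tau}$ for some scalar $\lambda$. Up to that point the proposal is correct and is essentially the argument in the paper.

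The gap is exactly the step you flag yourself: passing from $|\lambda|=1$ to $\lambda=\pm1$. You do not prove it, and the mechanism you propose for closing it cannot work. By your own computation, conjugating $\tau$ by $\rho^{\mb{C}}(h)$ with $h=u\exp(iX)$ replaces $\tau$ by $\rho^{\mb{C}}(u\exp(2iX)u^{-1})\,\tau$, and the prefactor is positive self-adjoint for the $U$-invariant Hermitian form of Corollary \ref{aab}; if it is a scalar, that scalar is a positive real number. Hence a $G^{\mb{C}}_0$-conjugation can never multiply a balancing conjugation by a non-real unit phase, so the ``residual freedom'' absorbs nothing: for a unimodular $\lambda\notin\{\pm1\}$ the anti-linear involution $\lambda\tau$ balances $\rho^{\mb{C}}$ with the same $\Theta$ and yet, by this very positivity argument, is not $G^{\mb{C}}_0$-conjugate to $\pm\tau$. (For the adjoint representation of $SL_2(\mb{C})$ one sees this concretely: the Killing form is real on $Ad(g)(\mathfrak{su}(2))$ for every $g$, but purely imaginary on the fixed set $e^{i\pi/4}\mathfrak{su}(2)$ of $i\tau$.) You should also be aware that the printed proof hides the same issue: it asserts $(\lambda\tilde{\tau})^2=\lambda^2 1_{V^{\mb{C}}}$, which holds only for a linear map, whereas for the anti-linear $\tilde{\tau}$ one gets $(\lambda\tilde{\tau})^2=|\lambda|^2 1_{V^{\mb{C}}}$, exactly as you computed. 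So your calculation correctly isolates the weak point of the paper's last line, but your proposal does not repair it; closing it genuinely requires either restricting the class of admissible balancing involutions (e.g.\ to conjugations adapted to a fixed real structure, which is the situation actually used in Corollary \ref{q1}) or a new idea, not further conjugation.
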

\begin{proof} Assume $\tau$ and $\tilde{\tau}$ are two conjugation maps which balances $\rho^{\mb{C}}$, so there exist global Cartan involutions: $\Theta, \tilde{\Theta}$ of $G^{\mb{C}}$, such that: $$\rho(\Theta(g))=\tau\circ \rho(g)\circ \tau, \ \ \ \rho(\tilde{\Theta}(g))=\tilde{\tau}\circ \rho(g)\circ \tilde{\tau}, \forall g\in G^{\mb{C}}.$$ Now since $\Theta$ and $\tilde{\Theta}$ are conjugate by an inner automorphism of $G^{\mb{C}}$, then it is not difficult to see that there exist $h\in G^{\mb{C}}_0$, such that $$\rho(h)\tau\rho(h^{-1})\circ\tilde{\tau}\circ\rho(g)=\rho(g)\circ\rho(h)\tau\rho(h^{-1})\circ \tilde{\tau}, \forall g\in G^{\mb{C}}.$$ Thus $f:=\rho(h)\tau\rho(h^{-1})\circ\tilde{\tau}$ is a complex linear map which is a $G^{\mb{C}}$-module isomorphism, using the exponential map this is also a $\mathfrak{g^{\mb{C}}}$-module isomorphism on Lie algebra level, i.e for the differential action: $\mathfrak{g^{\mb{C}}}\xrightarrow{d\rho^{\mb{C}}} \mathfrak{gl}(V^{\mb{C}})$. Now since $V^{\mb{C}}$ is irreducible, then by Schur's lemma we must have that $f=\lambda 1_{V^{\mb{C}}}$, for some $\lambda\in\mathbb{C}$. Thus $\lambda^2=1$ since $\lambda^21_{V^{\mb{C}}}=(\lambda\tilde{\tau})^2=(\rho(h)\tau(\rho(h))^{-1})^2=1_{V^{\mb{C}}}$, and so the proposition is proved. \end{proof}

 \begin{rem} \label{q2} Note that Proposition \ref{bo} fails in the case of the trivial representation, indeed any conjugation map $\sigma$ of $V^{\mb{C}}\neq 0$ will balance the trivial representation. So if $\sigma$ is a conjugation map of $V^{\mb{C}}$ then $\sigma$ and $i\sigma$ are not conjugate by the action of $G^{\mb{C}}$ up to $\pm 1$. In general it even fails for a non-trivial reducible representation as well. Indeed let $Ad$ be the adjoint action, then it is non-trivial, and $\tau$ be a conjugation map of a compact real form, then $\tau$ will balance $Ad$. Consider the representation $0_{V^{\mb{C}}}\oplus Ad$ for $V^{\mb{C}}$ any non-zero complex vector space. Then this is a non-trivial representation, and for example if $\sigma$ is any conjugation map of $V^{\mb{C}}$ then the two involutions: $\sigma\oplus \tau$ and $i\sigma\oplus \tau$ both balance this representation, however they cannot be conjugated by the action of $G^{\mb{C}}$ up to $\pm 1$. \end{rem}
  
A complex action is balanced in the following sense:

\begin{prop} \label{c} Suppose $G^{\mathbb{C}}\xrightarrow{\rho^{\mathbb{C}}}GL(V^{\mathbb{C}})$ is a complex representation. Then there is a compact real form: $U\subset G^{\mathbb{C}}$, and a real form $W\subset V^{\mathbb{C}}$, such that $\rho^{\mathbb{C}}(U)(W)\subset W$ if and only if $\rho^{\mb{C}}$ is balanced. \end{prop}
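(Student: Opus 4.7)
The plan is to translate both sides of the biconditional into the language of involutions. A real form $W\subset V^{\mb{C}}$ corresponds to the unique conjugation map $\tau$ of $V^{\mb{C}}$ with $W=\mathrm{Fix}(\tau)$, and since $G^{\mb{C}}$ is linearly complex reductive, a compact real form $U\subset G^{\mb{C}}$ is precisely the fixed-point set of a unique global Cartan involution $\Theta$ of $G^{\mb{C}}$ viewed as a real Lie group. The proposition thus becomes: $\rho^{\mb{C}}(U)(W)\subset W$ if and only if the pair $(\tau,\Theta)$ balances $\rho^{\mb{C}}$, i.e.\ $\rho^{\mb{C}}(\Theta(g))=\tau\circ\rho^{\mb{C}}(g)\circ\tau$ for all $g\in G^{\mb{C}}$.

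For the easy direction, assume $\rho^{\mb{C}}$ is balanced by some $(\tau,\Theta)$ and set $U:=\mathrm{Fix}(\Theta)$, $W:=\mathrm{Fix}(\tau)$. For $u\in U$ and $w\in W$, using $\Theta(u)=u$ and $\tau(w)=w$ the balancing identity gives $\rho^{\mb{C}}(u)(w)=\rho^{\mb{C}}(\Theta(u))(w)=\tau\rho^{\mb{C}}(u)\tau(w)=\tau(\rho^{\mb{C}}(u)(w))$, so $\rho^{\mb{C}}(u)(w)\in\mathrm{Fix}(\tau)=W$.

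For the converse, I would view $\alpha(g):=\rho^{\mb{C}}(\Theta(g))$ and $\beta(g):=\tau\circ\rho^{\mb{C}}(g)\circ\tau$ as two real-analytic group homomorphisms $G^{\mb{C}}\to GL(V^{\mb{C}})_{\mb{R}}$ (the second is multiplicative because $\tau^2=1$) and prove $\alpha=\beta$. Since $G^{\mb{C}}$ is fcc and the compact real form $U$ meets every component, $G^{\mb{C}}=U\cdot G^{\mb{C}}_0$, so it suffices to check agreement on $U$ and at the Lie-algebra level on $\mathfrak{g}^{\mb{C}}=\mathfrak{u}\oplus i\mathfrak{u}$. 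Agreement on $U$: for $u\in U$ the hypothesis $\rho^{\mb{C}}(u)(W)\subset W$ forces the complex-linear map $\tau\rho^{\mb{C}}(u)\tau$ to restrict to $\rho^{\mb{C}}(u)$ on $W$, hence to agree with $\rho^{\mb{C}}(u)=\rho^{\mb{C}}(\Theta(u))$ on $V^{\mb{C}}=W\otimes_{\mb{R}}\mb{C}$. Differentiating $U$-invariance, for $X\in\mathfrak{u}$ the complex-linear map $d\rho^{\mb{C}}(X)$ preserves $W$, so $\tau\,d\rho^{\mb{C}}(X)\,\tau=d\rho^{\mb{C}}(X)=d\rho^{\mb{C}}(d\Theta(X))$; extending by the complex-antilinearity of $\tau$ to $iX$, where $d\Theta(iX)=-iX$, yields $\tau\,d\rho^{\mb{C}}(iX)\,\tau=-i\,d\rho^{\mb{C}}(X)=d\rho^{\mb{C}}(d\Theta(iX))$, and the differential identity on all of $\mathfrak{g}^{\mb{C}}$ follows.

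The main technical point is the bookkeeping between the complex-antilinearity of $\tau$ and the sign of $d\Theta$ on $i\mathfrak{u}$: once the identity $\tau\circ(iA)\circ\tau=-i(\tau A\tau)$ is matched against $d\Theta|_{i\mathfrak{u}}=-\mathrm{id}$, the Lie-algebra check is automatic, and only the standard Cartan-decomposition argument together with the decomposition $G^{\mb{C}}=U\cdot G^{\mb{C}}_0$ is needed to pass from the infinitesimal identity back to the group level. Potential worries about which real forms arise as fixed-point sets of global Cartan involutions are absorbed into the running hypothesis that $G^{\mb{C}}$ is linearly complex reductive, where maximally compact subgroups and compact real forms coincide.
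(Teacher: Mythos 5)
Your proof is correct and follows essentially the same route as the paper: the easy direction is identical, and for the converse the paper likewise verifies the balancing identity on $U$ and on $\exp(i\mathfrak{u})$ (by splitting the exponential series into even and odd terms, which is exactly your antilinearity bookkeeping $\tau\circ(iA)\circ\tau=-i(\tau A\tau)$ in integrated form). Your packaging of the extension step via the two homomorphisms $\alpha,\beta$ and $G^{\mb{C}}=U\cdot G^{\mb{C}}_0$ is a slightly more explicit version of what the paper leaves implicit, but it is the same argument.
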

\begin{proof} Suppose a compact real form: $U\subset G^{\mathbb{C}}$ restricts to an action on a real form: $W\subset V^{\mathbb{C}}$. Denote $\Theta$ for the corresponding Cartan involution of $(G^{\mathbb{C}})_{\mathbb{R}}$ with fix points $U$. Then clearly: $$\tau(\rho^{\mathbb{C}}(u)(w_1+iw_2))=\rho^{\mathbb{C}}(u)(w_1-iw_2)=\rho^{\mathbb{C}}(u)(\tau(w_1+iw_2)).$$ Also if $g:=e^{ix}$ for $x\in \mathfrak{u}$ (the Lie algebra of $U$), then: $$\rho^{\mathbb{C}}(e^{ix})(w)=\sum_{n:=2k} \frac{1}{n!} (d\rho^{\mathbb{C}})^n(ix)(w) +\sum_{n:=2k+1} \frac{1}{n!} (d\rho^{\mathbb{C}})^n(ix)(w)=w'_1+iw'_2,$$ for $w'_1,w'_2\in W$. Thus, $$\rho^{\mathbb{C}}(e^{-ix})(w)=\sum_{n:=2k} \frac{1}{n!} (d\rho^{\mathbb{C}})^n(ix)(w) -\sum_{n:=2k+1} \frac{1}{n!} (d\rho^{\mathbb{C}})^n(ix)(w)=w'_1-iw'_2,$$ and so $\tau(\rho^{\mathbb{C}}(e^{-ix})(w))=\rho^{\mathbb{C}}(e^{ix})(w)$. So since $(V^{\mathbb{C}})_{\mathbb{R}}=W\oplus iW$, then it follows that $\rho^{\mathbb{C}}$ is balanced w.r.t $\tau$. Conversely this is clear, since if the action is balanced then one has the equation: $$\rho^{\mathbb{C}}(\Theta(g))=\tau\circ\rho^{\mathbb{C}}(g)\circ \tau,$$ where $\Theta$ is some Cartan involution of $(G^{\mathbb{C}})_{\mathbb{R}}$, and $\tau$ is some conjugation map in $V^{\mathbb{C}}$. Denote $U$ for the compact real form of $G^{\mathbb{C}}$ which is the fix points of $\Theta$, and $W$ for the real form of $V^{\mathbb{C}}$, which is the fix points of $\tau$, then clearly $\rho^{\mathbb{C}}(U)(W)\subset W$ as required. The proposition is proved.  \end{proof}

In other words a complex action is balanced if and only if it is a complexified action of a real action of a compact real form. An example of a complex action which is not a complexification of any real action of a compact real form, is the faithful action of $G^{\mb{C}}:=SL_2(\mb{C})$ on $V^{\mb{C}}:=\mb{C}^2$ by $X\cdot v:=Xv$. Indeed it is enough to show it for the compact real form: $U:=SU(2)\subset G^{\mb{C}}$ (as all compact real forms are isomorphic). If this was the case, then the restricted action of $SU(2)$ on a real form $W\subset V^{\mb{C}}$ would also be faithful locally, and thus we could embed $\mathfrak{su}(2)\hookrightarrow \mathfrak{gl}(2,\mathbb{R})$. However all semi-simple Lie subalgebras of $\mathfrak{gl}(2,\mathbb{R})$ are contained in $\mathfrak{sl}_2(\mathbb{R})$, and hence we would obtain: $\mathfrak{su}(2)\cong \mathfrak{sl}_2(\mathbb{R})$, which is false. 
It is however a complexified action of the real form: $G:=SL_2(\mathbb{R})\subset G^{\mb{C}}$ acting on $V:=\mathbb{R}^2\subset V^{\mb{C}}$, which is also non-balanced, indeed if it were balanced then $\mathfrak{sl}_2(\mathbb{R})\cong d\rho^G_V(\mathfrak{g})\subset \mathfrak{o}(p,q)$ is a Lie subalgebra for some $p+q=2$, this is impossible, as $\mathfrak{o}(2)$ and $\mathfrak{o}(1,1)$ are both abelian.

Note that this example can be generalised to the faithful action of $SL_n(\mb{C})$ acting on $V^{\mb{C}}:=\mathbb{C}^n$ for any $n\geq 2$.

Recall that two representations: $G_1\xrightarrow{\rho^{G_1}_{V_1}} GL(V_1)$, and $G_2\xrightarrow{\rho^{G_2}_{V_2}} GL(V_2)$ are said to be \textsl{isomorphic} if there are Lie group isomorphisms: $G_1\xrightarrow{\psi_1} G_2$ and $GL(V_1)\xrightarrow{\psi_2} GL(V_2)$, such that: $\rho^{G_1}_{V_1}=\psi_2\circ \rho^{G_2}_{V_2}\circ \psi_1$. We write $\rho^{G_1}_{V_1}\cong \rho^{G_2}_{V_2}$. 

\begin{cor} \label{q1} Let $G^{\mathbb{C}}\xrightarrow{\Psi} GL(V^{\mathbb{C}})$ be a complex representation. Assume $V^{\mb{C}}$ is an irreducible $\mathfrak{g}^{\mb{C}}$-module. Let $U\subset G^{\mathbb{C}}\supset \tilde{U}$ be compact real forms, and $W\subset V^{\mathbb{C}}\supset \tilde{W}$ be real forms. Suppose $U\xrightarrow{\rho^U_W} GL(W)$ and $\tilde{U}\xrightarrow{\rho^{\tilde{U}}_{\tilde{W}}} GL(\tilde{W})$ are two real representations with $\Psi=(\rho^U_W)^{\mathbb{C}}=(\rho^{\tilde{U}}_{\tilde{W}})^{\mathbb{C}}$. Then $\rho^U_W\cong \rho^{\tilde{U}}_{\tilde{W}}$. \end{cor}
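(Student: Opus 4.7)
The plan is to build an explicit intertwiner by exploiting the interplay between Propositions \ref{c} and \ref{bo}. First, from the hypothesis $\Psi = (\rho^U_W)^{\mb{C}} = (\rho^{\tilde{U}}_{\tilde{W}})^{\mb{C}}$ we read off $\Psi(U)(W) \subset W$ and $\Psi(\tilde{U})(\tilde{W}) \subset \tilde{W}$. Applying (the proof of) Proposition \ref{c} to each pair produces two balancings of $\Psi$: $(\Theta, \tau)$ and $(\tilde{\Theta}, \tilde{\tau})$, where $\Theta, \tilde{\Theta}$ are the global Cartan involutions of $(G^{\mb{C}})_{\mb{R}}$ with compact real forms $U, \tilde{U}$, and $\tau, \tilde{\tau}$ are the conjugation maps of $V^{\mb{C}}$ with real forms $W, \tilde{W}$.

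Next I will invoke conjugacy of Cartan involutions (Theorem \ref{mos}(3)) to extract $h \in G^{\mb{C}}_0$ satisfying $\tilde{\Theta} = c_h \circ \Theta \circ c_{h^{-1}}$, where $c_h(g) := hgh^{-1}$; passing to fixed points gives $\tilde{U} = hUh^{-1}$. Because $V^{\mb{C}}$ is irreducible as a $\mathfrak{g}^{\mb{C}}$-module, the very same $h$ appearing inside the proof of Proposition \ref{bo} will then furnish $\Psi(h)\tau\Psi(h^{-1}) = \lambda \tilde{\tau}$ with $\lambda \in \{\pm 1\}$. Consequently $\Psi(h)(W) = \tilde{W}$ when $\lambda = 1$, and $\Psi(h)(W) = i\tilde{W}$ when $\lambda = -1$.

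Finally I construct the isomorphism. Take the Lie group isomorphism $\psi_1 : U \to \tilde{U}$, $u \mapsto huh^{-1}$, available thanks to $\tilde{U} = hUh^{-1}$. Define $\phi : W \to \tilde{W}$ by $\phi(w) := \Psi(h)(w)$ if $\lambda = 1$ and by $\phi(w) := i\,\Psi(h)(w)$ if $\lambda = -1$; antilinearity of $\tilde{\tau}$ guarantees $\phi(W) \subset \tilde{W}$, and $\phi$ is visibly a real linear isomorphism. A short unravelling using that $i$ commutes with every $\Psi(g)$ then yields $\phi \circ \rho^U_W(u) = \rho^{\tilde{U}}_{\tilde{W}}(\psi_1(u)) \circ \phi$ for all $u \in U$, and packaging this into the form demanded by the definition of isomorphic representations gives $\rho^U_W \cong \rho^{\tilde{U}}_{\tilde{W}}$. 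The main subtle point will be to track that the same $h$ simultaneously implements the Cartan-involution conjugacy and the Schur-type identity of Proposition \ref{bo} — this requires dipping into the proof of \ref{bo} rather than relying on its statement alone; the case $\lambda = -1$ then has to be absorbed by a complex rescaling by $i$, which is the only genuinely nontrivial twist in the construction of $\phi$.
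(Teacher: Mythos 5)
Your proposal is correct and follows essentially the same route as the paper's proof: extract the two conjugation maps $\tau,\tilde{\tau}$ via Proposition \ref{c}, use the element $h$ from the proof of Proposition \ref{bo} that simultaneously conjugates the Cartan involutions (so $hUh^{-1}=\tilde{U}$) and satisfies $\Psi(h)\tau\Psi(h^{-1})=\pm\tilde{\tau}$, and then build the intertwiner from $\Psi(h)$. Your absorption of the $\lambda=-1$ case by rescaling the intertwiner by $i$ is just a repackaging of the paper's observation that $\rho^U_{iW}\cong\rho^U_W$, so the two arguments coincide.
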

\begin{proof} Two real representations: $\rho^U_W$ and $\rho^{\tilde{U}}_{\tilde{W}}$, with complexification $\Psi$, give rise to two balanced Cartan involutions: $\tau_W$ and $\tau_{\tilde{W}}$, namely the conjugation maps with fix points $W$ and $\tilde{W}$ respectively, by Proposition \ref{c}. Now following the proof of Proposition \ref{bo}, we know that there exist $g\in (G^{\mathbb{C}})_{\mathbb{R}}$ such that $\Psi(g)\circ\tau_W \circ \Psi(g^{-1})=\lambda\tau_{\tilde{W}}$ ($\lambda=\pm1$), with $Ad(g)(U):=gUg^{-1}=\tilde{U}$. We note that if $\lambda=-1$, then $\Psi(g^{-1})(i\tilde{W})=W$, and if $\lambda=1$, then $\Psi(g^{-1})(\tilde{W})=W$. However since the action $U\xrightarrow{\rho^U_{iW}} GL(iW)$ given by: $u\cdot iw:=i\rho^U_W(w)$ is isomorphic to $\rho^U_W$, then we can assume w.l.o.g that $\lambda=1$. Thus we have isomorphisms: $U\xrightarrow{Ad(g)} \tilde{U}$, and $GL(W)\xrightarrow{Ad(\Psi(g))} GL(\tilde{W})$, where $Ad(\Psi(g))(f):=\Psi(g)f\Psi(g^{-1})$. One easily checks that: $$\rho^U_W=Ad(\Psi(g))\circ \rho^{\tilde{U}}_{\tilde{W}}\circ Ad(g),$$ and thus proves the corollary. 

 \end{proof}

Let $O(p,q)\subset GL(V)$, be defined as the isometry group of some non-degenerate symmetric bilinear form: $\langle-,-\rangle$, of signature $p+q=Dim(V)$. Then for $\rho^G_V$ to be balanced is just a stronger version of Theorem \ref{mos} (case 4), i.e we may choose $\Theta'$ to be a Cartan involution of an $O(p,q)$ group:  

\begin{prop} Let $G\xrightarrow{\rho^G_V}GL(V)$ be a real representation. Then $\rho^G_V$ is balanced if and only if there exist a pseudo-orthogonal group $O(p,q)\subset GL(V)$ and a Cartan involution of $O(p,q)$ leaving $\rho^G_V(G)$ invariant.  \end{prop}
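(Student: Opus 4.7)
The forward direction essentially repackages the discussion after the definition of a balanced representation. Given an involution $\theta:V\to V$ and a global Cartan involution $\Theta$ of $G$ satisfying $\rho^G_V(\Theta(g))=\theta\circ\rho^G_V(g)\circ\theta$, one obtains a $G$-invariant non-degenerate symmetric bilinear form $\langle-,-\rangle$ on $V$ of which $\theta$ is a Cartan involution. Let $(p,q)$ be its signature. Then $\rho^G_V(G)\subset O(p,q)$, the element $\theta$ lies in $O(p,q)$ and encodes its positive/negative eigenspace splitting, so conjugation by $\theta$ is a Cartan involution of $O(p,q)$; the balanced identity is precisely the statement that this Cartan involution preserves $\rho^G_V(G)$.

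For the converse, assume $\rho^G_V(G)\subset O(p,q)\subset GL(V)$ is preserved by a Cartan involution $\Theta_0$ of $O(p,q)$. Every such $\Theta_0$ has the form $A\mapsto\theta A\theta^{-1}$ for some involution $\theta\in O(p,q)$. Extend $\Theta_0$ to the Cartan involution $\Theta_{GL}(A):=(A^*)^{-1}$ of $GL(V)$ defined via the positive definite inner product $(v,w):=\langle v,\theta w\rangle$; a direct matrix check shows that $\Theta_{GL}$ restricts to $\Theta_0$ on $O(p,q)$, and hence preserves $\rho^G_V(G)$. Since $\rho^G_V(G)$ is closed in $GL(V)$ by Lemma \ref{seq} and has reductive Lie algebra, Theorem \ref{mos}(2) shows that $\Theta_{GL}$ restricts to a Cartan involution $\Theta'$ of $\rho^G_V(G)$.

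The final step is to promote $\Theta'$ to a Cartan involution of $G$ itself. Choose a Cartan involution $\tilde\Theta$ of $G$ that preserves $\ker\rho^G_V$, which exists because $\ker d\rho^G_V$ is an ideal in the reductive Lie algebra $\mathfrak{g}$ and hence a direct summand, allowing one to assemble a Cartan decomposition respecting this splitting. Then $\tilde\Theta$ descends to a Cartan involution $\bar{\tilde\Theta}$ of $\rho^G_V(G)\cong G/\ker\rho^G_V$, and by Theorem \ref{mos}(3), $\Theta'$ and $\bar{\tilde\Theta}$ differ by an inner automorphism $\mathrm{Ad}(\rho^G_V(h))$ for some $h\in G$. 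Setting $\Theta:=\mathrm{Ad}(h)\circ\tilde\Theta\circ\mathrm{Ad}(h^{-1})$ yields a Cartan involution of $G$ with $\rho^G_V\circ\Theta=\Theta_0\circ\rho^G_V$, so $\rho^G_V(\Theta(g))=\theta\rho^G_V(g)\theta^{-1}$, i.e.\ $\theta$ balances $\rho^G_V$.

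The main obstacle is this final lifting step: extracting a Cartan involution of the abstract group $G$ rather than merely its image requires both the existence of a kernel-preserving Cartan involution of $G$ (using the reductive structure of $\mathfrak{g}$) and the conjugacy of Cartan involutions from Theorem \ref{mos}(3) to align it with $\Theta_0$. Everything else is routine manipulation of the pseudo-orthogonal framework.
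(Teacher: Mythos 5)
Your forward direction is the paper's argument verbatim in substance: pass from $\theta$ to a $(G,\theta)$-invariant form $\langle-,-\rangle$, observe $\rho^G_V(G)\subset O(p,q)$, and note that $A\mapsto \theta A\theta$ is a global Cartan involution of $O(p,q)$ which the balancing identity forces to preserve the image. The converse, however, is genuinely different from the paper's. The paper keeps an \emph{arbitrary} global Cartan involution $\Theta$ of $G$ fixed and instead moves the involution on $V$: it extends the given $\Theta'$ of $O(p,q)$ to a global Cartan involution $\Theta_1$ of $GL(V)$, produces via Theorem \ref{mos}(4) a second one $\Theta_2$ compatible with $\Theta$, and conjugates $\theta$ by the element of $GL(V)$ relating $\Theta_1$ and $\Theta_2$. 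You do the opposite: you keep $\theta$ (hence the Cartan involution of $O(p,q)$) fixed and conjugate the Cartan involution of $G$, using conjugacy of Cartan involutions of the image $\rho^G_V(G)$. Your route has the advantage that the conjugating element lives in $G$ rather than in $GL(V)$, so the final identity $\rho^G_V(\Theta(g))=\theta\,\rho^G_V(g)\,\theta$ follows by a clean naturality computation; the price is that you must lift data from $\rho^G_V(G)$ back to $G$, which the paper's arrangement avoids entirely.

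That lifting step is the one place where your argument has a gap as written. To descend $\tilde\Theta$ to $\rho^G_V(G)\cong G/\ker\rho^G_V$ you need $\tilde\Theta$ to preserve the \emph{subgroup} $\ker\rho^G_V$, whereas your justification (that $\ker d\rho^G_V$ is an ideal, hence a $\theta$-stable direct summand of the reductive algebra $\mathfrak{g}$) only controls the identity component of the kernel; components of $\ker\rho^G_V$ away from the identity are not addressed. The repair is immediate and is already in your toolbox: Theorem \ref{mos}(4) applied to $\tilde\Theta$ gives a global Cartan involution $\Theta''$ of $GL(V)$ with $\rho^G_V\circ\tilde\Theta=\Theta''\circ\rho^G_V$; in particular $\tilde\Theta$ preserves the full kernel, and the restriction of $\Theta''$ to the closed subgroup $\rho^G_V(G)$ (closed by Lemma \ref{seq}) is exactly the Cartan involution $\bar{\tilde\Theta}$ you want, with no quotient construction needed. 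One should also remark that Theorem \ref{mos}(3) is being applied to $\rho^G_V(G)$ rather than to $G$, so one needs that the image is again of the admissible type (it is: the image of the compact centre is compact and the image Lie algebra is reductive). With these two points supplied, your proof is correct.
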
 
\begin{proof} Suppose $V\xrightarrow{\theta} V$ is an involution balancing $\rho^G_V$ w.r.t $\Theta$ of $G$, let $\langle-,-\rangle$ be a $\Big{(}G,\theta\Big{)}$-invariant symmetric non-degenerate bilinear form of some signature $p+q=n$. Denote $O(p,q)\subset GL(V)$, for the isometry group of $\langle-,-\rangle$, then $\rho^G_V(G)\subset O(p,q)\subset GL(V)$. Now $g\mapsto \theta\circ g\circ \theta$, is a global Cartan involution of $O(p,q)$, thus $$\rho^G_V(g)\mapsto \theta\circ\rho^G_V(g)\circ \theta=\rho^G_V(\Theta(g))\in \rho^G_V(G),$$ for the fixed global Cartan involution $\Theta$ of $G$. Conversely suppose there exist a pseudo-orthogonal group $O(p,q)\subset GL(V)$ and a global Cartan involution $\Theta'$ of $O(p,q)$ leaving $\rho^G_V(G)$ invariant. Note that $p,q\neq 1$. Let $\langle-,-\rangle$ be the symmetric non-degenerate bilinear form of signature $p+q=n$ associated to $O(p,q)$. Then let $\theta$ be any Cartan involution of $\langle-,-\rangle$ w.r.t $\Theta'$, i.e it balances the isometry action of $O(p,q)$ on $V$. Now let $\Theta$ be a global Cartan involution of $G$, and let $\Theta_1$ be a global Cartan involution of $GL(V)$ extending $\Theta'$, by Theorem \ref{mos}. Also there exist a global Cartan involution $\Theta_2$ of $GL(V)$, such that $\Theta_2(\rho^G_V(g))=\rho^G_V(\Theta(g))$, again by Theorem \ref{mos}. Thus since $\Theta_1$ and $\Theta_2$ are conjugated in $GL(V)$, then $\Theta_2=Ad(g)\circ\Theta_1\circ Ad(g^{-1})$ for some $g\in GL(V)$, hence $Ad(g)(\theta):=\theta'$ is an involution that will satisfy: $$\theta'\circ\rho^G_V(g)\circ\theta'=\rho^G_V(\Theta(g)), \forall g\in G,$$ and so $\rho^G_V$ is balanced as required. 	     \end{proof}

\section{Compatible representations}

\begin{defn} \label{triple} Let $G\subset G^{\mathbb{C}}\supset \tilde{G}$ be real forms, and $G\xrightarrow{\rho^G_V} GL(V)$ and $\tilde{G}\xrightarrow{\rho^{\tilde{G}}_{\tilde{V}}} GL(\tilde{V})$ be real representations of Lie groups. Suppose $G^{\mathbb{C}}\xrightarrow{\rho^{\mathbb{C}}}GL(V^{\mathbb{C}})$ is a complexified action of both $\rho^G_V$ and $\rho^{\tilde{G}}_{\tilde{V}}$. Then we say that $\rho^G_V$ is \textsl{compatible} with $\rho^{\tilde{G}}_{\tilde{V}}$, if the following two criterions are fulfilled:
\begin{enumerate}

\item{} $G$ and $\tilde{G}$ are compatible real forms of $G^{\mathbb{C}}$.
\item{} $V$ and $\tilde{V}$ are compatible real forms of $V^{\mathbb{C}}$.
\end{enumerate}
\end{defn}

Note that a real representation $G\rightarrow GL(V)$ with a complexification is always compatible with itself, and moreover if $U\subset G^{\mathbb{C}}$ is a compact real form, then a real Lie group action: $U\rightarrow GL(W)$, can always be complexified to a complex action: $G^{\mathbb{C}}\rightarrow GL(W^{\mathbb{C}})$, simply because $G^{\mathbb{C}}$ is the universal complexification group of $U$.

\begin{defn} \label{comp} Let $\rho^G_V, \rho^{\tilde{G}}_{\tilde{V}}$ and $\rho^U_W$ be pairwise compatible representations, where $U\subset G^{\mathbb{C}}$, is a compact real form. Then the triple: $\Big{(}\rho^G_V, \rho^{\tilde{G}}_{\tilde{V}}, \rho^U_W\Big{)}$ is said to be a \textsl{compatible triple}. \end{defn}

\begin{rem} When considering a compatible triple: $\Big{(} \mathfrak{g}, \tilde{\mathfrak{g}}, \mathfrak{u} \Big{)}$ of Lie algebras, there is a natural good choice of Cartan involutions, indeed the conjugation map $\tau$ of $\mathfrak{u}$, restricts to Cartan involutions: $\theta:=\tau_{|_{\mathfrak{g}}}$ and $\tilde{\theta}:=\tau_{|_{\tilde{\mathfrak{g}}}}$. In this way the global Cartan involutions of our groups $G=Ke^{\mathfrak{p}}$ and $\tilde{G}=\tilde{K}e^{\tilde{\mathfrak{p}}}$ are such that $K\subset U\supset \tilde{K}$, where $G^{\mb{C}}=Ue^{i\mathfrak{u}}$ is the global Cartan involution of $G^{\mb{C}}$, where $U$ has Lie algebra $\mathfrak{u}$, see Corollary \ref{aab}.  \end{rem}

From (\cite{W2}, Proposition A.2), a compatible pair: $\Big{(}\rho^G_V, \rho^{\tilde{G}}_{\tilde{V}}\Big{)}$ was considered. We now extend this result for compatible triples. We recall that an Hermitian inner product $H(-,-)$ on $V^{\mb{C}}$ which is real on a real subspace $V'\subset V^{\mb{C}}$ is said to be \emph{compatible} with $V'$.

\begin{lem} \label{b} Suppose $\Big{(}\rho^G_V, \rho^{\tilde{G}}_{\tilde{V}}, \rho^U_W\Big{)}$ is a compatible triple. Then there exist a $U$-invariant Hermitian inner product $H(-,-)$ on $V^{\mathbb{C}}$ which is compatible with $V, \tilde{V}$ and $W$. \end{lem}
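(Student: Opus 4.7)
Let $\sigma$, $\tilde\sigma$, $\tau$ denote the conjugation maps on $V^{\mathbb{C}}$ with fixed sets $V$, $\tilde V$, $W$ respectively. By the compatibility hypothesis, these three anti-linear involutions pairwise commute, so they generate a finite abelian group $\Gamma\subset GL_{\mathbb{R}}(V^{\mathbb{C}})$ of exponent two and order at most $8$. Write $\Gamma=\Gamma_{+}\sqcup\Gamma_{-}$, where $\Gamma_{+}$ consists of products of an even number of generators (these are $\mathbb{C}$-linear) and $\Gamma_{-}$ of products of an odd number (these are $\mathbb{C}$-antilinear); in particular $|\Gamma_{+}|=|\Gamma_{-}|$. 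My strategy is a Weyl-style averaging over both $U$ and $\Gamma$.

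First, choose a $U$-invariant Hermitian inner product $H_0$ on $V^{\mathbb{C}}$, obtained by averaging an arbitrary positive-definite Hermitian form against Haar measure on the compact group $U$. Then define
$$H(x,y):=\sum_{\gamma\in\Gamma_{+}} H_0(\gamma x,\gamma y)+\sum_{\gamma\in\Gamma_{-}} \overline{H_0(\gamma x,\gamma y)}.$$
A direct check using the $\mathbb{C}$-(anti)linearity of each $\gamma$ shows that $H$ is sesquilinear, Hermitian-symmetric, and positive-definite (each summand already is). The crucial observation is that left multiplication by $\sigma$ permutes $\Gamma$ and interchanges $\Gamma_{+}\leftrightarrow\Gamma_{-}$; reindexing the sum and using $\overline{\overline{z}}=z$ then gives $H(\sigma x,\sigma y)=\overline{H(x,y)}$. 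This anti-unitarity of $\sigma$ with respect to $H$ is exactly the statement that $H$ restricts to a real-valued form on $V$. Applying the same argument with $\sigma$ replaced by $\tilde\sigma$ or $\tau$ yields reality on $\tilde V$ and $W$ respectively.

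The main obstacle is preserving $U$-invariance under the $\Gamma$-averaging; this is where the pairwise compatibility of the real forms $G,\tilde G,U\subset G^{\mathbb{C}}$ is essential. Since $\rho^{\mathbb{C}}$ is the complexification of $\rho^{G}_{V}$, differentiating and exponentiating yields the intertwining identity $\sigma\circ\rho^{\mathbb{C}}(g)\circ\sigma^{-1}=\rho^{\mathbb{C}}(\sigma_{G}(g))$ for $g\in G^{\mathbb{C}}$, where $\sigma_{G}$ denotes the real structure on $G^{\mathbb{C}}$ with fixed set $G$, and analogously for $\tilde\sigma_{G}$ and $\tau_{G}$. By the Lie-algebra compatibility $[\sigma_{\mathfrak{g}},\tau_{\mathfrak{g}}]=0$ the involution $\sigma_{G}$ preserves the Lie algebra $\mathfrak{u}$, hence preserves $U$ setwise (on the identity component directly, with a further average over components if necessary), and similarly for $\tilde\sigma_{G}$. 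Consequently, for each $\gamma=\sigma^{a}\tilde\sigma^{b}\tau^{c}\in\Gamma$ and $u\in U$, one has $\gamma\rho^{\mathbb{C}}(u)\gamma^{-1}=\rho^{\mathbb{C}}(\sigma_{G}^{a}\tilde\sigma_{G}^{b}\tau_{G}^{c}(u))\in\rho^{\mathbb{C}}(U)$, which preserves $H_{0}$. Each summand in the definition of $H$ is therefore $U$-invariant, and so is $H$.
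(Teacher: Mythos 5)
Your proof is correct and is essentially the paper's argument: both rest on the observation that, by compatibility, the three conjugation maps generate a finite group normalizing $\rho^{\mathbb{C}}(U)$, so one can Weyl-average over the resulting compact group (the paper averages a real inner product over $U^*:=\langle\rho^{\mathbb{C}}(U),J,\sigma,\tilde\sigma,\tau\rangle$ and recovers $H$ from its $J$-invariant real part, whereas you average over $U$ and then over $\Gamma$ with a conjugation twist on the antilinear elements). The only delicate point in both versions is that the conjugations actually normalize $\rho^{\mathbb{C}}(U)$ beyond the identity component; you flag this explicitly, and the paper handles it at the same level of detail when asserting that $U^*/\rho^{\mathbb{C}}(U)$ is finite.
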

\begin{proof} Since $U$ is compact then so is $\rho^{\mb{C}}(U)\subset (GL(V^{\mb{C}}))_{\mb{R}}\cong GL((V^{\mb{C}})_{\mb{R}}$. Set $E:=(V^{\mb{C}})_{\mb{R}}$ for the real vector space with complex structure $J$. Then the complex structure on $E$: $E\xrightarrow{J} E$ is an element of $GL(E)$, and so are all the conjugation maps: $\sigma_V, \tilde{\sigma}_{\tilde{V}}$ and $\tau_W$. Define the subgroup $U^*:=\langle \rho^{\mb{C}}(U), J, \sigma_V,\tilde{\sigma}_{\tilde{V}}, \tau_W \rangle\subset GL(E)$ then $U^*$ is a compact subgroup of $GL(E)$ since $\rho^{\mb{C}}(U)\subset U^*$ is closed and the quotient group $\frac{U^*}{\rho^{\mb{C}}(U)}$ is finite, using that $\Big{(}V,\tilde{V},W\Big{)}$ is a compatible triple. Now by the compatibility conditions on the Lie algebras we have that: $K\subset U\supset \tilde{K}$, thus $\rho^{\mb{C}}(K)\subset \rho^{\mb{C}}(U)\supset \rho^{\mb{C}}(\tilde{K})$. The inclusion $\phi$: $U^*\hookrightarrow GL(E),$ is a real representation of a compact Lie group. So there exist a $U^*$-invariant inner product $\langle-,-\rangle$ on $E$. Since $\langle-,-\rangle$ is $J$-invariant then it is easy to see that there exist a unique Hermitian inner product $H(-,-)$ on $V^{\mb{C}}$ with real part $\langle-,-\rangle$ on $E$. It is easy to check that $H(-,-)$ is $U$-invariant and therefore: $d\rho^{\mb{C}}(i\mathfrak{u})$ consists of Hermitian operators on $H(-,-)$. Also $H(-,-)$ is clearly $\Big{(}V,\tilde{V},W\Big{)}$-compatible by construction. The lemma is proved.

   \end{proof}

We thus also have an extended version of (\cite{W2}, Corollary A.2), concerning minimal vectors, which is essentially (\cite{RS}, Lemma 8.1) applied to each real representation:

\begin{cor}\label{m} Suppose $\Big{(}\rho^G_V, \rho^{\tilde{G}}_{\tilde{V}}, \rho^U_W\Big{)}$ is a compatible triple. Then there is a $\Big{(}V,\tilde{V}, W\Big{)}$-compatible $U$-invariant Hermitian inner product $H(-,-)$ on $V^{\mb{C}}$ such that: $$\mathcal{M}(U,W)\cup\mathcal{M}(\tilde{G},\tilde{V})\cup\mathcal{M}(G,V)\subset \mathcal{M}(G^{\mathbb{C}},V^{\mathbb{C}}).  $$  \end{cor}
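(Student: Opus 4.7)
The plan is to deduce this as a direct consequence of Lemma \ref{b} together with the minimal-vector characterisation in part (4) of Theorem \ref{RS}, applied separately to each of the three real representations.

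First I would take $H(-,-)$ to be the $U$-invariant Hermitian inner product produced by Lemma \ref{b}. Then $H$ is simultaneously compatible with $V$, $\tilde V$ and $W$, and $d\rho^{\mb{C}}(\mathfrak{u})$ consists of skew-Hermitian operators on $V^{\mb{C}}$, hence $d\rho^{\mb{C}}(i\mathfrak{u})$ consists of Hermitian operators. Moreover, by compatibility, $H\big|_V$, $H\big|_{\tilde V}$, $H\big|_W$ are precisely the real inner products needed for the RS setup with respect to the Cartan involutions inherited from the compact real form $U$: the Cartan decompositions are $\mathfrak{g}=\mathfrak{k}\oplus\mathfrak{p}$, $\tilde{\mathfrak{g}}=\tilde{\mathfrak{k}}\oplus\tilde{\mathfrak{p}}$ with $\mathfrak{k},\tilde{\mathfrak{k}}\subset\mathfrak{u}$ and $\mathfrak{p},\tilde{\mathfrak{p}}\subset i\mathfrak{u}$.

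Next I would invoke Theorem \ref{RS}(4): a vector $v^{\mb{C}}\in V^{\mb{C}}$ is minimal for $G^{\mb{C}}$ if and only if $\Re H(x\cdot v^{\mb{C}},v^{\mb{C}})=0$ for all $x\in i\mathfrak{u}$, since $(G^{\mb{C}})_{\mb{R}}=Ue^{i\mathfrak{u}}$ is the corresponding Cartan decomposition. Now take $v\in\mathcal{M}(G,V)$ and decompose $i\mathfrak{u}=\mathfrak{p}\oplus i\mathfrak{k}$ as a real direct sum, writing $x=x_1+ix_2$ with $x_1\in\mathfrak{p}$, $x_2\in\mathfrak{k}$. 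The $\mathfrak{p}$-part gives $H(x_1\cdot v,v)=\langle x_1\cdot v,v\rangle_V\in\mb{R}$, which vanishes by the minimality criterion of $v$ for $G$. The $i\mathfrak{k}$-part gives $H(ix_2\cdot v,v)=iH(x_2\cdot v,v)=i\langle x_2\cdot v,v\rangle_V\in i\mb{R}$, whose real part is $0$. Hence $\Re H(x\cdot v,v)=0$ and $v\in\mathcal{M}(G^{\mb{C}},V^{\mb{C}})$. The identical argument using $\tilde{\mathfrak{p}}\oplus i\tilde{\mathfrak{k}}$ handles $\mathcal{M}(\tilde G,\tilde V)\subset\mathcal{M}(G^{\mb{C}},V^{\mb{C}})$. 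For $\mathcal{M}(U,W)=W$, the situation is even easier: any $x\in i\mathfrak{u}$ sends $W$ into $iW$, so $H(x\cdot w,w)\in i\mb{R}$ and the real part vanishes automatically.

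There is no serious obstacle in this argument; the entire content is already baked into Lemma \ref{b} and (\cite{RS}, Lemma 8.1). The only point worth stating carefully is that a \emph{single} Hermitian form $H$ simultaneously witnesses the RS criterion for all three real actions, which is exactly what Lemma \ref{b} supplies via the compact group $U^*$ generated by $\rho^{\mb{C}}(U)$ together with the three conjugations and the complex structure $J$.
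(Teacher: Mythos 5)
Your proposal is correct and follows essentially the same route as the paper, which simply combines Lemma \ref{b} with (\cite{RS}, Lemma 8.1) applied to each of the three real representations; your write-up just makes the content of that lemma explicit via the criterion in part (4) of Theorem \ref{RS} and the decomposition $i\mathfrak{u}=\mathfrak{p}\oplus i\mathfrak{k}$. Nothing further is needed.
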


Note that $\mathcal{M}(U,W)=W$, since $U$ is a compact real form. Now it follows from Proposition \ref{c}, that a compatible triple must be a balanced triple, i.e every real representation in the triple must be balanced:

\begin{cor} \label{v} Let $\Big{(}\rho^G_V, \rho^{\tilde{G}}_{\tilde{V}}, \rho^U_W\Big{)}$ be a compatible triple. Then there exist an involution: $\tau$, balancing $\rho^{\mathbb{C}}$, such that $\tau(V)\subset V$, $\tau(\tilde{V})\subset \tilde{V}$ and $\tau_W=1_W$. Thus $\rho^G_V$ and $\rho^{\tilde{G}}_{\tilde{V}}$ must also be balanced, with involutions: $\theta:=\tau_{V}$ and $\tilde{\theta}:=\tau_{\tilde{V}}$ respectively.  \end{cor}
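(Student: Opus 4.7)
The plan is to take $\tau$ to be the conjugation map $\tau_W$ of $V^{\mathbb{C}}$ whose fixed-point set is the compact real form $W$, and then show that this single involution does all the required work. First I would invoke Proposition \ref{c} applied to the compact action $\rho^U_W$: since $\rho^{\mathbb{C}}(U)(W)\subset W$ and $U$ is a compact real form of $G^{\mathbb{C}}$, the proposition gives that $\rho^{\mathbb{C}}$ is balanced with balancing involution $\tau := \tau_W$, and by construction $\tau|_W = 1_W$.

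Next I would verify the compatibility conditions $\tau(V)\subset V$ and $\tau(\tilde{V})\subset \tilde{V}$. By Definition of a compatible triple (of real forms of $V^{\mathbb{C}}$), the conjugation maps $\sigma_V$, $\tilde{\sigma}_{\tilde{V}}$ and $\tau_W$ pairwise commute. Hence for any $v\in V$,
\[
\sigma_V(\tau(v)) = \tau(\sigma_V(v)) = \tau(v),
\]
so $\tau(v)\in V$; the argument for $\tilde{V}$ is identical. Thus $\tau$ restricts to real linear involutions $\theta := \tau|_V$ of $V$ and $\tilde{\theta} := \tau|_{\tilde{V}}$ of $\tilde{V}$.

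Finally, I would deduce that $\rho^G_V$ (and, by symmetry, $\rho^{\tilde{G}}_{\tilde{V}}$) is balanced with this $\theta$. The balancing of $\rho^{\mathbb{C}}$ produced by Proposition \ref{c} uses the global Cartan involution $\Theta^{\mathbb{C}}$ of $(G^{\mathbb{C}})_{\mathbb{R}}$ whose fixed points are $U$. Because $G$ and $U$ are compatible real forms, the Lie algebra decomposition $\mathfrak{g} = \mathfrak{k}\oplus \mathfrak{p}$ satisfies $\mathfrak{k}\subset \mathfrak{u}$ and $\mathfrak{p}\subset i\mathfrak{u}$, so $\Theta^{\mathbb{C}}$ restricts to a global Cartan involution $\Theta$ of $G$. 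Restricting the identity
\[
\rho^{\mathbb{C}}(\Theta^{\mathbb{C}}(g)) = \tau\circ \rho^{\mathbb{C}}(g)\circ \tau
\]
to $g\in G$ and evaluating both sides on $V$ (which is preserved by $\tau$ by the previous step) yields
\[
\rho^G_V(\Theta(g)) = \theta\circ \rho^G_V(g)\circ \theta, \qquad \forall g\in G,
\]
proving $\rho^G_V$ is balanced; the same reasoning works for $\tilde{G}$.

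The content is essentially a careful unpacking of the definitions of compatibility and balance, and there is no genuine obstacle — the only point requiring any care is observing that commutativity of the three conjugations forces $\tau_W$ to preserve $V$ and $\tilde{V}$, so that restriction of the complex balancing equation to these real forms makes sense. Everything else is then bookkeeping via Proposition \ref{c}.
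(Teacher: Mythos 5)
Your proposal is correct and follows essentially the same route as the paper: invoke Proposition \ref{c} to get the balancing conjugation $\tau$ with fixed points $W$, use pairwise compatibility of the conjugation maps to see that $\tau$ preserves $V$ and $\tilde{V}$, and restrict the complex balancing identity to $G$ acting on $V$ (and likewise for $\tilde{G}$). The only difference is that you spell out the commutation argument and the restriction of the global Cartan involution, which the paper dismisses as obvious.
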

\begin{proof} By Proposition \ref{c}, the conjugation map $\tau$ with fix points: $W$, which balance $\rho^{\mathbb{C}}$. Now since $W$ is pairwise compatible with $V$ and $\tilde{V}$, then obviously $\tau$ leaves $V$ and $\tilde{V}$ invariant. Thus since the global Cartan involution of $(G^{\mathbb{C}})_{\mathbb{R}}$ w.r.t the compact real form $U$ restricts to global Cartan involutions of $G$ and $\tilde{G}$ respectively, then obviously $\theta:=\tau_V$ and $\tilde{\theta}:=\tau_{\tilde{V}}$ balance $\rho^G_V$ and $\rho^{\tilde{G}}_{\tilde{V}}$ respectively. The corollary follows. \end{proof}

\begin{rem} We note in the proof of Lemma \ref{b}, that the $U$-invariant Hermitian inner product on $V^{\mathbb{C}}$ may be chosen to be invariant under $\tau$ from Corollary \ref{v}.  \end{rem}

We have the following criterion for a vector to be a minimal vector w.r.t a balanced Cartan involution: $\theta$: 

\begin{lem} [\cite{RS}, Lemma 5.1.1] Let $G\rightarrow GL(V)$ be a balanced real representation, and $\theta$ be an inner Cartan involution. Let $v=v_++v_-\in V$ be the Cartan decomposition, then $v\in \mathcal{M}(G,V)$ if and only if $\langle x\cdot v_+,v_- \rangle=0$ for all $x\in \mathfrak{p}$, where $\mathfrak{g}=\mathfrak{t}\oplus \mathfrak{p}$ is the Cartan decomposition of $\mathfrak{g}$ for which $\theta$ is balanced. \end{lem}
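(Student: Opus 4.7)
The plan is to reduce the statement to Theorem \ref{RS}(4) applied to the positive definite inner product $(v,w):=\langle v,\theta(w)\rangle$ induced by the balancing involution, and then unpack the eigenspace decomposition $V=V_+\oplus V_-$ of $\theta$ to reinterpret the vanishing of $(x\cdot v,v)$ in terms of $\langle x\cdot v_+,v_-\rangle$. Since an inner Cartan involution is $G$-conjugate to some balancing involution, and $\mathcal{M}(G,V)$ is $G$-invariant, I may assume from the start that $\theta$ itself balances $\rho^G_V$ with respect to the Cartan involution $\Theta$ giving $\mathfrak{g}=\mathfrak{t}\oplus\mathfrak{p}$. Then $(\,\cdot\,,\,\cdot\,)$ is a $K$-invariant inner product with $d\rho^G_V(\mathfrak{p})$ consisting of symmetric operators, so by Theorem \ref{RS}(4), $v\in\mathcal{M}(G,V)$ iff $(x\cdot v,v)=0$ for all $x\in\mathfrak{p}$.

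The first technical step is to observe that elements of $\mathfrak{p}$ interchange $V_+$ and $V_-$. Differentiating the balancing identity $\rho^G_V(\Theta(g))=\theta\circ\rho^G_V(g)\circ\theta$ at the identity yields $d\rho^G_V(\theta_{\mathfrak g}(x))=\theta\circ d\rho^G_V(x)\circ\theta$ for $x\in\mathfrak g$, where $\theta_{\mathfrak g}$ is the Cartan involution of $\mathfrak g$. For $x\in\mathfrak p$ one has $\theta_{\mathfrak g}(x)=-x$, hence $\theta\circ d\rho^G_V(x)=-d\rho^G_V(x)\circ\theta$, so $x\cdot V_\pm\subset V_\mp$. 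The second technical step is that $V_+\perp V_-$ with respect to $\langle-,-\rangle$: since $\langle-,-\rangle$ is symmetric and $(v,w)=\langle v,\theta(w)\rangle$ is itself symmetric, one obtains $\langle\theta v,w\rangle=\langle v,\theta w\rangle$, i.e.\ $\theta$-invariance of $\langle-,-\rangle$, which implies orthogonality of distinct eigenspaces in the standard way.

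With these two facts in hand, I expand
\[
(x\cdot v,v)=\langle x\cdot v_+ + x\cdot v_-,\,v_+ - v_-\rangle.
\]
The two ``diagonal'' terms $\langle x\cdot v_+,v_+\rangle$ and $\langle x\cdot v_-,v_-\rangle$ vanish by Step 1 and Step 2, leaving
\[
(x\cdot v,v)=-\langle x\cdot v_+,v_-\rangle+\langle x\cdot v_-,v_+\rangle.
\]
Finally I use that $d\rho^G_V(\mathfrak g)$ acts by skew operators with respect to the $G$-invariant form $\langle-,-\rangle$: differentiating $\langle g\cdot a,g\cdot b\rangle=\langle a,b\rangle$ gives $\langle x\cdot a,b\rangle=-\langle a,x\cdot b\rangle$. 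Combined with the symmetry of $\langle-,-\rangle$, this yields $\langle x\cdot v_-,v_+\rangle=-\langle x\cdot v_+,v_-\rangle$, so $(x\cdot v,v)=-2\langle x\cdot v_+,v_-\rangle$. Since $\mathfrak p$ is a real subspace, the condition $(x\cdot v,v)=0$ for all $x\in\mathfrak p$ is equivalent to $\langle x\cdot v_+,v_-\rangle=0$ for all $x\in\mathfrak p$, which proves the lemma.

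I do not expect serious obstacles here — the argument is a bookkeeping exercise once the balancing identity has been differentiated and the $\theta$-invariance of $\langle-,-\rangle$ has been established. The only point requiring care is the reduction from an ``inner'' Cartan involution to the balancing one, which is handled by $G$-equivariance of the set of minimal vectors and of the two conditions in the lemma.
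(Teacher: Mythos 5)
The paper does not prove this lemma itself (it is quoted from Richardson--Slodowy), so there is nothing to compare against; your argument is the standard one and it is correct: differentiate the balancing identity to get $d\rho^G_V(x)\circ\theta=-\theta\circ d\rho^G_V(x)$ for $x\in\mathfrak{p}$, use self-adjointness of $\theta$ for $\langle-,-\rangle$ to get $V_+\perp V_-$, and use skew-symmetry of $d\rho^G_V(\mathfrak{g})$ to collapse $(x\cdot v,v)$ to $-2\langle x\cdot v_+,v_-\rangle$, then invoke Theorem \ref{RS}(4). Two small points. First, your justification of the reduction step is wrong as stated: $\mathcal{M}(G,V)$ is \emph{not} $G$-invariant (by Theorem \ref{RS}(2) it meets a closed orbit $Gv$ only in $Kv$, so it is merely $K$-invariant). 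The reduction is nonetheless harmless, because an inner Cartan involution $\rho^G_V(g)\theta\rho^G_V(g)^{-1}$ is itself a balancing involution for the conjugated global Cartan involution $\mathrm{Ad}(g)\circ\Theta\circ\mathrm{Ad}(g)^{-1}$, and the lemma explicitly pairs $\theta$ with ``the Cartan decomposition for which $\theta$ is balanced''; so all the data ($\mathcal{M}$, $\mathfrak{p}$, the $\pm$-components, and the condition $\langle x\cdot v_+,v_-\rangle=0$) conjugate consistently and the statement for the conjugate is equivalent to the statement for $\theta$ --- you should say that rather than appeal to $G$-invariance of $\mathcal{M}$. Second, the applicability of Theorem \ref{RS}(4) requires that $(v,w):=\langle v,\theta(w)\rangle$ be $K$-invariant and that $d\rho^G_V(\mathfrak{p})$ act by symmetric operators for it; you assert this without proof, but both follow in one line from $\rho^G_V(k)\theta=\theta\rho^G_V(k)$ for $k\in K$ and from the anticommutation relation of your Step 1 combined with skew-symmetry, so the gap is cosmetic.
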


In particular we see that if $V=V_+\oplus V_-$ w.r.t $\theta$, then $V_+\cup V_-\subseteq \mathcal{M}(G,V)$. There are cases where $V_+\cup V_-=\mathcal{M}(G,V)$, for example the adjoint action of $SL_2(\mathbb{R})$ on $\mathfrak{sl}_2(\mathbb{R})$ or the matrix action of $O(p,q)$ on $\mathbb{R}^n$ with $n=p+q$. 
\\

Now for a compatible triple: $\Big{(}\rho^G_V, \rho^{\tilde{G}}_{\tilde{V}}, \rho^U_W\Big{)}$, where $V^{\mathbb{C}}=W\oplus iW$ w.r.t $\tau$ from Corollary \ref{v}, and $H(-,\tau(-))$ is a $U$-invariant Hermitian inner product compatible with $V,\tilde{V}$ and $W$, then we can characterise the minimal vectors as follows: 
\begin{enumerate} 
\item{} $\mathcal{M}(G,V)=\{v\in V|H(x\cdot v_+,v_-)=0, \forall x\in \mathfrak{p}\subset i\mathfrak{u}\}$. 
\item{} $\mathcal{M}(\tilde{G},\tilde{V})=\{\tilde{v}\in \tilde{V}|H(x\cdot \tilde{v}_+,\tilde{v}_-)=0, \forall x\in \tilde{\mathfrak{p}}\subset i\mathfrak{u}\}$. 
\item{} $\mathcal{M}(U,W)=W$. 
\item{} $\mathcal{M}(G^{\mathbb{C}},V^{\mathbb{C}})=\{v\in V^{\mathbb{C}}|H(x\cdot w_1,iw_2)=0, \forall x\in i\mathfrak{u}\}.$
\end{enumerate}

\section{Compatible real orbits}

\begin{defn} \label{aa} Let $\Big{(}\rho^G_V, \rho^{\tilde{G}}_{\tilde{V}}\Big{)}$ be a compatible pair. Suppose $v\in V$ and $\tilde{v}\in \tilde{V}$ are such that $\tilde{v}\in G^{\mathbb{C}}v$, then we shall say that $Gv$ is \textsl{compatible} with $\tilde{G}\tilde{v}$. \end{defn}

We shall write $Gv\sim \tilde{G}\tilde{v}$ for two compatible real orbits. One notes that if $U$ is compact, then by (\cite{RS}): $Uv_1\sim Uv_2$ if and only if $Uv_1=Uv_2$, this is however not true for general groups, see for example the adjoint action of $SL_2(\mathbb{R})$ on $\mathfrak{sl}_2(\mathbb{R})$. 

\begin{thm} \label{closed} Let $\Big{(}\rho^G_V, \rho^{\tilde{G}}_{\tilde{V}}, \rho^U_W\Big{)}$ be a compatible triple. Suppose $v\in V$ and $\tilde{v}\in \tilde{V}$ are such that $\tilde{G}\tilde{v}\sim Gv$. Assume $G^{\mathbb{C}}v\subset V^{\mathbb{C}}$ is closed. Then there exist inner Cartan involutions $\theta$ and $\tilde{\theta}$ of $V$ and $\tilde{V}$ respectively, such that if $v=v_++v_-$ and $\tilde{v}=\tilde{v}_++\tilde{v}_-$ are the Cartan decompositions, then: $$Gv_+\sim \tilde{G}\tilde{v}_+, \ and \ Gv_-\sim\tilde{G}\tilde{v}_-.$$ \end{thm}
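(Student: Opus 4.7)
The plan is to exploit the canonical involution $\tau$ supplied by the compact real slice $W$, pick minimal-vector representatives in the real orbits, and transport them to each other via Kempf--Ness for the complex group $G^{\mathbb{C}}$; the freedom to choose the inner Cartan involutions on $V$ and $\tilde V$ is then used to pull the decomposition back to the original points $v$ and $\tilde v$.

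First I would fix the conjugation $\tau$ of $V^{\mathbb{C}}$ furnished by Corollary \ref{v}, so that $\theta := \tau|_V$ and $\tilde\theta := \tau|_{\tilde V}$ balance $\rho^G_V$ and $\rho^{\tilde G}_{\tilde V}$, and choose via Lemma \ref{b} a $U$-invariant, $(V,\tilde V,W)$-compatible Hermitian inner product on $V^{\mathbb{C}}$. A key observation is that $\tau$ commutes with $\rho^{\mathbb{C}}(u)$ for every $u\in U$: if $\Theta$ is the Cartan involution of $G^{\mathbb{C}}$ balancing $\rho^{\mathbb{C}}$ with fix set $U$, then $\tau\rho^{\mathbb{C}}(u)\tau=\rho^{\mathbb{C}}(\Theta(u))=\rho^{\mathbb{C}}(u)$. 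Because $G^{\mathbb{C}}v$ is closed, Proposition \ref{linearly} (or Theorem \ref{clos}) ensures that $Gv$ and $\tilde G\tilde v\subset G^{\mathbb{C}}v$ are closed; hence by Theorem \ref{RS}(1) there exist minimal vectors $v_0 = g\cdot v\in Gv\cap \mathcal{M}(G,V)$ and $\tilde v_0=\tilde g\cdot \tilde v\in \tilde G\tilde v\cap \mathcal{M}(\tilde G,\tilde V)$ with $g\in G$, $\tilde g\in \tilde G$. By Corollary \ref{m}, both $v_0$ and $\tilde v_0$ lie in $\mathcal{M}(G^{\mathbb{C}},V^{\mathbb{C}})\cap G^{\mathbb{C}}v$, and applying Theorem \ref{RS}(2) to the linearly complex reductive group $G^{\mathbb{C}}$ (whose maximal compact is $U$) gives $u\in U$ with $\tilde v_0 = u\cdot v_0$.

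Now decompose $(v_0)_\pm := \tfrac{1}{2}(v_0\pm \tau(v_0))\in V$ and likewise $(\tilde v_0)_\pm\in \tilde V$ (the inclusions hold since $\tau$ preserves both real forms). The commutation $\tau\rho^{\mathbb{C}}(u)=\rho^{\mathbb{C}}(u)\tau$ gives at once $(\tilde v_0)_\pm = u\cdot(v_0)_\pm$. Define the inner Cartan involutions $\theta_v:=\rho^G_V(g)^{-1}\circ\theta\circ \rho^G_V(g)$ of $V$ and $\tilde\theta_{\tilde v}:=\rho^{\tilde G}_{\tilde V}(\tilde g)^{-1}\circ\tilde\theta\circ\rho^{\tilde G}_{\tilde V}(\tilde g)$ of $\tilde V$; these are inner Cartan involutions in the sense of Definition \ref{d}, and a direct computation shows that the corresponding Cartan decompositions of $v$ and $\tilde v$ are $v_\pm = \rho^G_V(g)^{-1}((v_0)_\pm)$ and $\tilde v_\pm = \rho^{\tilde G}_{\tilde V}(\tilde g)^{-1}((\tilde v_0)_\pm)$. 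Chaining these with $(\tilde v_0)_\pm=u\cdot(v_0)_\pm$ yields $\tilde v_\pm = \tilde g^{-1}ug\cdot v_\pm \in G^{\mathbb{C}}v_\pm$, which is exactly $Gv_\pm\sim \tilde G\tilde v_\pm$.

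The main obstacle I expect is the second step, namely verifying that the two real minimal vectors $v_0$ and $\tilde v_0$ lie in a common $U$-orbit. This relies on engineering the Hermitian inner product via Corollary \ref{m} so that $\mathcal{M}(G,V)\cup \mathcal{M}(\tilde G,\tilde V)\subset \mathcal{M}(G^{\mathbb{C}},V^{\mathbb{C}})$, and then invoking the complex Kempf--Ness conclusion in Theorem \ref{RS}(2); the compatibility of the triple is precisely what guarantees both inclusions simultaneously and that the same $\tau$ balances all three actions, which in turn makes the transfer through $u\in U$ respect the $\tau$-decomposition.
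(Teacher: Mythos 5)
Your proposal is correct and follows essentially the same route as the paper's proof: pass to minimal vectors in the closed real orbits, use Corollary \ref{m} together with Theorem \ref{RS}(2) to place them in a common $U$-orbit, note that the $U$-action commutes with $\tau$ (equivalently, preserves the $W$- and $iW$-components), and conjugate the inner Cartan involutions back by $g$ and $\tilde g$. Your write-up just makes explicit a few steps the paper leaves implicit.
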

\begin{proof} Since $G^{\mathbb{C}}v\subset V^{\mathbb{C}}$, is closed, then so are the real orbits: $Gv\subset V$, and $\tilde{G}\tilde{v}\subset \tilde{V}$ by Proposition \ref{linearly}, thus we can choose minimal vectors $X\in Gv$ and $\tilde{X}\in \tilde{G}\tilde{v}$. Now since $X$ and $\tilde{X}$ are also minimal vectors in $G^{\mathbb{C}}v$, then $\tilde{X}\in U\cdot X$ (by Corollary \ref{m}). So $X$ and $\tilde{X}$ have components which lie in the same $G^{\mathbb{C}}$-orbit, this follows since the $U$-action preserves the $W$-components and $iW$-components. But there exist $g\in G$ and $\tilde{g}\in \tilde{G}$, such that $g\cdot v=X$ and $\tilde{g}\cdot \tilde{v}=\tilde{X}$. So by conjugating our fixed inner Cartan involution of $\rho^G_V$ by the action of $g$, and similarly for $\rho^{\tilde{G}}_{\tilde{V}}$ by the action of $\tilde{g}$ we obtain the result. The theorem is proved.  \end{proof}

Following the proof of the theorem, then an interesting corollary is the following:

\begin{cor} \label{k} Let $\Big{(}\rho^G_V, \rho^{\tilde{G}}_{\tilde{V}}, \rho^U_W\Big{)}$ be a compatible triple. Suppose $v\in V$ and $\tilde{v}\in \tilde{V}$ are such that: $\tilde{G}\tilde{v}\sim Gv$. Then $Gv\cap V_+\neq \emptyset$ (respectively $Gv\cap V_-\neq\emptyset$) if and only if $\tilde{G}\tilde{v}\cap \tilde{V}_+\neq\emptyset$ (respectively $\tilde{G}\tilde{v}\cap \tilde{V}_-\neq \emptyset$). \end{cor}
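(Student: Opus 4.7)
The plan is to exploit the fact, noted just after the Lemma on minimal vectors for balanced actions, that $V_+\cup V_-\subseteq \mathcal{M}(G,V)$ and $\tilde{V}_+\cup \tilde{V}_-\subseteq \mathcal{M}(\tilde{G},\tilde{V})$. Together with the Kempf--Ness-type statement in Theorem \ref{RS} and the cross-orbit transfer of minimal vectors in Corollary \ref{m}, this should let me shuttle a minimal vector in $V_+$ to one in $\tilde{V}_+$ through the complex orbit. I will fix, once and for all, the natural inner Cartan involutions from Corollary \ref{v}: take $\tau$ the conjugation of $V^{\mb{C}}$ with fix-point set $W$ and put $\theta:=\tau|_V$, $\tilde\theta:=\tau|_{\tilde V}$. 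Then by construction $V_+=V\cap W$, $V_-=V\cap iW$, and analogously $\tilde V_\pm=\tilde V\cap W$, $\tilde V\cap iW$.

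Assume $Gv\cap V_+\neq\emptyset$ and pick $X\in Gv\cap V_+$. Since $V_+\subseteq\mathcal{M}(G,V)$, the vector $X$ is minimal, so by Theorem \ref{RS}(1) the orbit $Gv$ is closed in $V$; Proposition \ref{linearly} then gives that $G^{\mb{C}}v$ is closed in $V^{\mb{C}}$. The hypothesis $\tilde G\tilde v\sim Gv$ means $\tilde v\in G^{\mb{C}}v$, hence $G^{\mb{C}}\tilde v=G^{\mb{C}}v$ is also closed, and applying Proposition \ref{linearly} once more shows $\tilde G\tilde v$ is closed in $\tilde V$. Thus by Theorem \ref{RS}(1) we may choose a minimal vector $\tilde X\in\tilde G\tilde v$.

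By Corollary \ref{m}, both $X$ and $\tilde X$ are minimal for the complex action $G^{\mb{C}}\curvearrowright V^{\mb{C}}$, and both lie in the single closed complex orbit $G^{\mb{C}}v=G^{\mb{C}}\tilde v$. Applying Theorem \ref{RS}(2) to the complex group, there is $u\in U$ with $\tilde X=u\cdot X$. Since $X\in V_+\subseteq W$ and the compatible representation $\rho^U_W$ preserves $W$, we get $\tilde X=u\cdot X\in W$; combined with $\tilde X\in\tilde V$ this gives $\tilde X\in\tilde V\cap W=\tilde V_+$, so $\tilde G\tilde v\cap\tilde V_+\neq\emptyset$. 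The reverse implication is obtained by interchanging the roles of $V$ and $\tilde V$, and the $V_-$ statement follows by exactly the same argument after replacing $W$ by $iW$ throughout (the $U$-action is $\mathbb{C}$-linear, so preserves $iW$ as well).

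There is no real obstacle: the argument is essentially an assembly of Theorem \ref{RS}, Proposition \ref{linearly} and Corollary \ref{m}. The only point requiring care is the identification $V_+=V\cap W$, $\tilde V_+=\tilde V\cap W$ under the \emph{same} conjugation $\tau$, so that the relation $\tilde X=u\cdot X$ indeed places $\tilde X$ in the correct eigenspace on the tilde side; this is exactly what Corollary \ref{v} provides.
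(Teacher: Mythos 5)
Your proof is correct and follows essentially the same route as the paper's: use $V_\pm\subset\mathcal{M}(G,V)$ to get closedness, transfer it to $\tilde{G}\tilde{v}$, and then move a minimal vector across the closed complex orbit via the $U$-action, which preserves $W$ and $iW$. The only difference is that you make explicit the appeal to Theorem \ref{RS}(2) and the identification $V_\pm=V\cap W$, $V\cap iW$, which the paper leaves implicit.
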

\begin{proof} If $v_+\in Gv\cap V_+$ then as $V_+\subset \mathcal{M}(G,V)$, the real orbit: $Gv\subset V$, must be closed. Thus $\tilde{G}\tilde{v}\subset \tilde{V}$ must also be closed, and so we may choose a minimal vector $\beta\in \tilde{G}\tilde{v}$. But since $v_+\in W$ and $Uv_+\subset W$, because $U$ acts on $W$, then by Lemma \ref{m}, we have $\beta\in Uv_+\subset W$, thus $\beta\in \tilde{V}\cap W=\tilde{V}_+$. The other case is identical, since $U\cdot iW\subset iW$. The corollary is proved.  \end{proof}

Thus by letting $\tilde{G}:=U$ and $\tilde{V}:=W$ and $\rho^{\tilde{G}}_{\tilde{V}}:=\rho^U_W$ then: $\Big{(}\rho^G_V, \rho^U_W, \rho^U_W\Big{)}$ is a compatible triple and we get a new version of (\cite{W2}, Theorem 5.5 (case 2)), in view of inner Cartan involutions of the action: 

\begin{thm} \label{t} Let $(\rho^G_{V}, \rho^U_{W})$ be a compatible pair, then the following two statements hold:
\begin{enumerate}
\item{} Let $v\in V$, then the following statements are equivalent: 

\begin{enumerate}[label=\Alph*]
\item{}There exist $w\in W$ such that $Uw\sim Gv$. 
\item{}There exist an inner Cartan involution $V\xrightarrow{\theta} V$ such that $\theta(v)=v$.
\item{}There exist $w\in W$ such that $Uw\cap Gv\neq \emptyset$. 
\end{enumerate}

\item{} Let $v\in V$, then the following statements are equivalent: 
\begin{enumerate}[label=\Alph*]
\item{}There exist $iw\in iW$ such that $U\cdot iw\sim Gv$. 
\item{}There exist an inner Cartan involution $V\xrightarrow{\theta} V$ such that $\theta(v)=-v$. 
\item{}There exist $iw\in iW$ such that $U\cdot iw\cap Gv\neq \emptyset$. 
\end{enumerate}
\end{enumerate} \end{thm}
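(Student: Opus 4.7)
The plan is to establish the cycle $(A)\Rightarrow (B)\Rightarrow (C)\Rightarrow (A)$ in part (1), and then observe that part (2) follows by the identical argument after replacing the compact real slice $W$ by its ``imaginary'' counterpart $iW$. The two parts are formally parallel because the balancing involution $\tau$ of $\rho^{\mb{C}}$ furnished by Corollary~\ref{v} satisfies $V_+=V\cap W$ and $V_-=V\cap iW$, so the sign discrepancy in $(B)$ merely reflects which $\tau$-eigenspace the relevant orbit representative lands in. The implication $(C)\Rightarrow (A)$ will be immediate from Definition~\ref{aa}: if $x\in Uw\cap Gv$, then $w\in G^{\mb{C}}x=G^{\mb{C}}v$, so $Uw\sim Gv$.

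For $(A)\Rightarrow (B)$, I would apply Corollary~\ref{k} to the compatible triple $(\rho^G_V,\rho^U_W,\rho^U_W)$, taking $\tilde G=U$, $\tilde V=W$, and $\tilde v=w$. Since the balancing involution of the compact representation $\rho^U_W$ is $1_W$ by Corollary~\ref{v}, its $+1$-eigenspace is all of $W$, so $Uw\cap W_+=Uw$ is trivially non-empty. Corollary~\ref{k} then forces $Gv\cap V_+\neq\emptyset$: there exists $g\in G$ with $g\cdot v\in V_+$, equivalently $\theta_0(g\cdot v)=g\cdot v$ where $\theta_0$ is the chosen balancing involution of $\rho^G_V$. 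Setting $\theta:=\rho^G_V(g)^{-1}\circ\theta_0\circ\rho^G_V(g)$ then produces, in the sense of Definition~\ref{d}, an inner Cartan involution with $\theta(v)=v$.

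For $(B)\Rightarrow (C)$, given an inner Cartan involution $\theta=\rho^G_V(g)\circ\theta_0\circ\rho^G_V(g)^{-1}$ satisfying $\theta(v)=v$, the vector $v':=g^{-1}\cdot v\in Gv$ satisfies $\theta_0(v')=v'$, so $v'\in V_+=V\cap W\subset W$. Taking $w:=v'$ then yields $w\in Uw\cap Gv$, as required.

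Part (2) goes through verbatim with the compatible triple $(\rho^G_V,\rho^U_{iW},\rho^U_W)$: since $\tau$ is anti-linear and fixes $W$, it acts as $-1$ on $iW$, so the balancing involution of $\rho^U_{iW}$ is $-1_{iW}$, its $-1$-eigenspace is all of $iW$, and $U$ preserves $iW$ by complex-linearity. Running the same argument yields $Gv\cap V_-\neq\emptyset$, and hence an inner Cartan involution $\theta$ with $\theta(v)=-v$. The real work sits in Corollary~\ref{k}, which itself packages the minimal-vector/orbit-closure machinery of Theorem~\ref{closed}; given that tool, the main obstacle here is purely bookkeeping, namely keeping the identifications $V_+=V\cap W$, $V_-=V\cap iW$ and the inner-conjugacy of Definition~\ref{d} straight.
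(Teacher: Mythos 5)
Your proposal is correct and follows essentially the same route as the paper: both specialize the preceding compatible-triple results to $\Big{(}\rho^G_V,\rho^U_W,\rho^U_W\Big{)}$, use that the inner Cartan involution of the compact factor is trivial (so $W_+=W$, resp.\ $(iW)_-=iW$) to conclude $Gv\cap V_+\neq\emptyset$ (resp.\ $Gv\cap V_-\neq\emptyset$), and then conjugate the fixed involution by the witnessing $g\in G$; the only cosmetic difference is that you invoke Corollary \ref{k} where the paper applies Theorem \ref{closed} directly, and these rest on the same minimal-vector argument.
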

\begin{proof} We prove case (1) as case (2) is identical. ($A\Rightarrow B$).  Let $v\in V$ and write $v=v_++v_-$ w.r.t our inner Cartan involution: $\theta$. If there exist $w\in W$ such that $Gv\sim Uw$, then by Theorem \ref{closed}, it follows that $Gv_-\sim U w_-=\{0\}$, since the inner Cartan involution of $\rho^U_W$ is just the identity, and thus there exist $g\in G$ such that $g\cdot v\in Uw$, i.e $\theta(g\cdot v)=g\cdot v$. Therefore by conjugating $\theta$ by the action of $g$, we get a new inner Cartan involution $\theta'$, such that $\theta'(v)=v$. ($B\Rightarrow C$). Now if $\theta'(v)=v$ for some inner Cartan involution, then since $\theta'$ is conjugated to $\theta$ by definition, then it follows that there exist $g\in G$ such that $\theta(g\cdot v)=g\cdot v$, i.e $g\cdot v\in V_+\subset W$, and thus $Gv\cap U\cdot (g\cdot v)\neq \emptyset$, but $Uw=U\cdot (g\cdot v)$. ($C\Rightarrow A$). Finally if $v'\in Gv\cap Uw$ then clearly $Gv\sim Uv'$ for $v'\in W$. Thus the equivalences are established, and so the theorem is proved.  \end{proof}

Observe that the equivalence $A\Leftrightarrow C$ of case (1) is precisely Theorem \ref{W1r}. Now combining Corollary \ref{k} with Theorem \ref{t} we get the following invariance result of compatible real orbits:

\begin{cor}\label{q} Let $\Big{(}\rho^G_V, \rho^{\tilde{G}}_{\tilde{V}}, \rho^U_W\Big{)}$ be a compatible triple. Suppose $v\in V$ and $\tilde{v}\in\tilde{V}$ are such that $Gv\sim \tilde{G}\tilde{v}$. Then there exist an inner Cartan involution $V\xrightarrow{\theta} V$ such that $\theta(v)=v$ (respectively $\theta(v)=-v$) if and only if there exist an inner Cartan involution $\tilde{V}\xrightarrow{\tilde{\theta}}\tilde{V}$ such that $\tilde{\theta}(\tilde{v})=\tilde{v}$ (respectively $\tilde{\theta}(\tilde{v})=-\tilde{v}$).   \end{cor}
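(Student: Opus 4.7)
The plan is a short chaining argument: Theorem \ref{t} converts each of the conditions ``some inner Cartan involution fixes $v$'' (resp.\ ``some inner Cartan involution sends $v$ to $-v$'') into the geometric condition that the orbit $Gv$ meets $W$ (resp.\ $iW$); Corollary \ref{k} then transports the latter condition across the pair of compatible orbits $Gv\sim \tilde{G}\tilde{v}$; and a second application of Theorem \ref{t} unfolds the conclusion on the $\tilde{V}$ side.

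First I would invoke Corollary \ref{v} to fix a conjugation map $\tau$ of $V^{\mathbb{C}}$ with fix points $W$ balancing $\rho^{\mathbb{C}}$, so that $\theta:=\tau_{|V}$ and $\tilde{\theta}:=\tau_{|\tilde{V}}$ are distinguished inner Cartan involutions of $\rho^G_V$ and $\rho^{\tilde{G}}_{\tilde{V}}$ whose eigenspace decompositions satisfy $V_+=V\cap W$, $V_-=V\cap iW$, $\tilde{V}_+=\tilde{V}\cap W$, and $\tilde{V}_-=\tilde{V}\cap iW$. Since by Definition \ref{d} all inner Cartan involutions of $\rho^G_V$ are $G$-conjugate to $\theta$, the statement ``there exists an inner Cartan involution $\theta'$ with $\theta'(v)=v$'' is equivalent to $Gv\cap V_+\neq\emptyset$, and similarly on the tilded side for $\tilde{v}$.

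For the positive case, by the equivalence $B\Leftrightarrow C$ of case (1) of Theorem \ref{t} applied to the compatible pair $(\rho^G_V,\rho^U_W)$, the existence of an inner Cartan involution $\theta'$ with $\theta'(v)=v$ is equivalent to the existence of $w\in W$ with $Uw\cap Gv\neq\emptyset$, which is just $Gv\cap W\neq\emptyset$, i.e.\ $Gv\cap V_+\neq\emptyset$. Since $Gv\sim\tilde{G}\tilde{v}$, Corollary \ref{k} gives $Gv\cap V_+\neq\emptyset$ iff $\tilde{G}\tilde{v}\cap\tilde{V}_+\neq\emptyset$; running Theorem \ref{t} in reverse for the pair $(\rho^{\tilde{G}}_{\tilde{V}},\rho^U_W)$ then produces an inner Cartan involution $\tilde{\theta}'$ of $\tilde{V}$ with $\tilde{\theta}'(\tilde{v})=\tilde{v}$. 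The converse direction is symmetric by swapping the roles of $v$ and $\tilde{v}$.

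The case $\theta'(v)=-v$ is completely parallel, simply replacing case (1) of Theorem \ref{t} by case (2), which exchanges $W$ with $iW$, and using the $\tilde{V}_-$ half of Corollary \ref{k}. No genuine obstacle is expected: the compatibility of the triple is exactly what makes the common bridge $W$ (resp.\ $iW$) available to both Theorem \ref{t} and Corollary \ref{k}, and the inclusion $V_+\cup V_-\subseteq\mathcal{M}(G,V)$ (used inside Corollary \ref{k}) is what guarantees the closedness of $G^{\mathbb{C}}v$ needed to invoke that corollary.
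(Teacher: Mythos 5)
Your proof is correct and follows essentially the same route as the paper: the paper likewise obtains this as a combination of Theorem \ref{t} (converting the fixed-point condition into $Gv$ meeting $W$, resp. $iW$) with the transport of that condition across the compatible orbits, which is exactly the content of Corollary \ref{k}. The only cosmetic difference is that the paper re-derives the Corollary \ref{k} step inline via minimal vectors rather than citing it, whereas you invoke it directly.
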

\begin{proof} It is enough to consider the case where $\theta(v)=v$. If $\theta$ is an inner Cartan involution of $V$ such that $\theta(v)=v$, then by Theorem \ref{t} case (1), $Gv\cap Uw\neq \emptyset$ for some $w\in W$. Thus the minimal vectors of $G^{\mathbb{C}}v$ is just $Uw\subset W$. In particular $\tilde{G}\tilde{v}$ must be closed as well, and thus $\tilde{G}\tilde{v}\cap Uw\neq \emptyset$, so we can choose an inner Cartan involution $\tilde{\theta}$ of $\rho^{\tilde{G}}_{\tilde{V}}$ such that $\tilde{\theta}(\tilde{v})=\tilde{v}$. The converse is identical, and so the corollary is proved.  \end{proof}

\begin{cor} Let $(\rho^G_{V}, \rho^U_{W})$ be a compatible pair. Let $v_1\in V$, and $G^{\mathbb{C}}v_1\cap V=Gv_1\cup Gv_2\cup\cdots\cup Gv_k$ for some natural number $k\geq 1$. Then there exist an inner Cartan involution $\theta_j$ of $\rho^{G}_V$ for some $1\leq j\leq k$ such that $\theta_j(v_j)=v_j$ (respectively $\theta_j(v_j)=-v_j$) if and only if there exist inner Cartan involutions: $\theta_i$ for all $1\leq i\leq k$ such that $\theta_i(v_i)=v_i$ (respectively $\theta_i(v_i)=-v_i$).   $\square$  \end{cor}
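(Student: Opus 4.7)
The plan is to observe that this is almost immediate from Corollary~\ref{q}, once one unpacks what the hypothesis says. Since by assumption $v_1,\dots,v_k$ all lie in the single complex orbit $G^{\mathbb{C}}v_1$, we have $v_i\in G^{\mathbb{C}}v_j$ for every pair $i,j$, so by Definition~\ref{aa} the real orbits satisfy $Gv_i\sim Gv_j$. Moreover, a real representation is trivially compatible with itself, so $\bigl(\rho^G_V,\rho^G_V,\rho^U_W\bigr)$ is a compatible triple (all three Lie‐group data are already part of the given compatible pair).

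Now suppose that for some $j$ there is an inner Cartan involution $\theta_j$ of $\rho^G_V$ with $\theta_j(v_j)=v_j$. Fix an index $i$ and apply Corollary~\ref{q} to the compatible triple $\bigl(\rho^G_V,\rho^G_V,\rho^U_W\bigr)$ with the pair $(v,\tilde v):=(v_j,v_i)$; this is legal because $Gv_j\sim Gv_i$ by the previous paragraph. The corollary then produces an inner Cartan involution $\theta_i$ of $\rho^G_V$ with $\theta_i(v_i)=v_i$. Running this for every $1\le i\le k$ gives the desired family. The ``$-v$'' case is identical, using the second alternative in the statement of Corollary~\ref{q}. The reverse implications are trivial: if all $v_i$'s admit such involutions, then in particular any fixed $j$ does.

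The only real thing to check is that the hypothesis on closedness hidden in Theorem~\ref{t}/Corollary~\ref{q} is not an additional assumption here. It is not: the existence of an inner Cartan involution fixing (or negating) $v_j$ forces via Theorem~\ref{t} that $Gv_j\cap Uw\neq\emptyset$ (resp.\ $Gv_j\cap U\cdot iw\neq\emptyset$) for some $w\in W$, hence $G^{\mathbb{C}}v_j$ meets $\mathcal{M}(G^{\mathbb{C}},V^{\mathbb{C}})$ and is therefore closed by Theorem~\ref{RS}(1). This closedness is what powers Corollary~\ref{q} through Theorem~\ref{closed}, and it is automatic from the hypothesis on the distinguished index $j$. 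There is no genuine obstacle; the content of the corollary is essentially a finite iteration of Corollary~\ref{q} over the finitely many real orbits making up $G^{\mathbb{C}}v_1\cap V$.
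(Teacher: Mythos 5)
Your argument is correct and is exactly the deduction the paper intends (the corollary is stated with $\square$ as an immediate consequence of Corollary~\ref{q} applied to the self-compatible triple $\bigl(\rho^G_V,\rho^G_V,\rho^U_W\bigr)$ and the orbits $Gv_i\sim Gv_j$ inside the single complex orbit). Your added remark that closedness is not an extra hypothesis but follows from the existence of the fixing involution via Theorem~\ref{t} and Theorem~\ref{RS}(1) is also consistent with how the paper's proof of Corollary~\ref{q} itself proceeds.
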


For non-closed orbits we can also apply Theorem \ref{closed} to their boundaries:

\begin{cor} \label{end} Suppose we have compatible triple: $\Big{(}\rho^G_V, \rho^{\tilde{G}}_{\tilde{V}}, \rho^U_W\Big{)}$.  Let $v\in V$ and $\tilde{v}\in \tilde{V}$. Assume $G^{\mb{C}}v$ is not closed, and $Gv\sim \tilde{G}\tilde{v}$. Let $p\in \mathfrak{p}$ and $\tilde{p}\in \tilde{\mathfrak{p}}$ be such that the limits exist: $e^{tp}\cdot v\rightarrow \alpha\in \overline{Gv}-Gv$ and $e^{t\tilde{p}}\cdot \tilde{v}\rightarrow \tilde{\alpha}\in \overline{\tilde{G}\tilde{v}}-\tilde{G}\tilde{v}$ where $G\alpha$ and $\tilde{G}\tilde{\alpha}$ are closed (Theorem \ref{RS}). Then there exist inner Cartan involutions $\theta$ and $\tilde{\theta}$ of $V$ and $\tilde{V}$ respectively, such that if $\alpha=\alpha_++\alpha_-$ and $\tilde{\alpha}=\tilde{\alpha}_++\tilde{\alpha}_-$ are the Cartan decompositions, then: $$G\alpha_+\sim \tilde{G}\tilde{\alpha}_+, \ and \ G\alpha_-\sim\tilde{G}\tilde{\alpha}_-.$$  \end{cor}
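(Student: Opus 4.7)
The plan is to reduce the statement to Theorem \ref{closed} by showing that the two limit vectors $\alpha$ and $\tilde{\alpha}$ already form a compatible pair whose complex orbit is closed, after which Theorem \ref{closed} applied to $(\alpha, \tilde{\alpha})$ yields precisely the claimed Cartan decomposition matching.

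First, I would unwind the hypothesis: $Gv\sim \tilde{G}\tilde{v}$ means $\tilde{v}\in G^{\mathbb{C}}v$, hence $G^{\mathbb{C}}v = G^{\mathbb{C}}\tilde{v}$ and in particular $\overline{G^{\mathbb{C}}v}=\overline{G^{\mathbb{C}}\tilde{v}}$. By Theorem \ref{RS} part (3), $G\alpha\subset V$ and $\tilde{G}\tilde{\alpha}\subset \tilde{V}$ are closed real orbits. Applying Proposition \ref{linearly} to each of the real forms $G\subset G^{\mathbb{C}}$ and $\tilde{G}\subset G^{\mathbb{C}}$ translates this into the closedness of the complex orbits: both $G^{\mathbb{C}}\alpha$ and $G^{\mathbb{C}}\tilde{\alpha}$ are closed in $V^{\mathbb{C}}$. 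Since $\alpha\in\overline{Gv}\subset \overline{G^{\mathbb{C}}v}$ and $\tilde{\alpha}\in\overline{\tilde{G}\tilde{v}}\subset \overline{G^{\mathbb{C}}\tilde{v}}=\overline{G^{\mathbb{C}}v}$, these are two closed $G^{\mathbb{C}}$-orbits living in the same closure. Invoking the uniqueness statement of Theorem \ref{RS} part (3) for the action of the linearly complex reductive group $G^{\mathbb{C}}$ on $V^{\mathbb{C}}$ forces $G^{\mathbb{C}}\alpha = G^{\mathbb{C}}\tilde{\alpha}$. By Definition \ref{aa} this is exactly the relation $G\alpha \sim \tilde{G}\tilde{\alpha}$.

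Once $G\alpha\sim \tilde{G}\tilde{\alpha}$ has been established with $G^{\mathbb{C}}\alpha$ closed, Theorem \ref{closed} applies verbatim to the pair $(\alpha,\tilde{\alpha})$ in place of $(v,\tilde{v})$. It produces inner Cartan involutions $\theta$ of $V$ and $\tilde{\theta}$ of $\tilde{V}$ such that, writing $\alpha = \alpha_+ + \alpha_-$ and $\tilde{\alpha}=\tilde{\alpha}_+ + \tilde{\alpha}_-$ for the corresponding Cartan decompositions, one has $G\alpha_+\sim \tilde{G}\tilde{\alpha}_+$ and $G\alpha_-\sim \tilde{G}\tilde{\alpha}_-$. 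This is the desired conclusion.

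The main conceptual point is verifying that the uniqueness of the closed orbit in $\overline{G^{\mathbb{C}}v}$ is available in our setting. This relies on the fact that $G^{\mathbb{C}}$, being linearly complex reductive, falls within the broader class of real reductive groups for which the full content of Theorem \ref{RS} holds, as noted in the remark following its statement and in \cite{CR}; the only delicate step is combining this with Proposition \ref{linearly} to move between real and complex closedness of the limit orbits $G\alpha$ and $G^{\mathbb{C}}\alpha$. Everything else in the argument is a direct bookkeeping step feeding Theorem \ref{closed}.
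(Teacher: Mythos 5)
Your proposal is correct and follows essentially the same route as the paper: use the uniqueness of the closed $G^{\mathbb{C}}$-orbit in $\overline{G^{\mathbb{C}}v}$ (Theorem \ref{RS}) to conclude $G\alpha\sim\tilde{G}\tilde{\alpha}$, then invoke Theorem \ref{closed}. Your version merely makes explicit the intermediate step (passing from closedness of the real orbits $G\alpha$, $\tilde{G}\tilde{\alpha}$ to closedness of the complex orbits via Proposition \ref{linearly}) that the paper leaves implicit.
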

\begin{proof} Since there is a unique closed $G^{\mb{C}}$-orbit in the closure $\overline{G^{\mb{C}}v}$ (Theorem \ref{RS}), and $Gv\sim \tilde{G}\tilde{v}$, then the real orbits in the closures must be compatible, i.e $G\alpha\sim \tilde{G}\tilde{\alpha}$. Thus we may apply Theorem \ref{closed}, and the corollary follows. \end{proof}

We end this section with an example illustrating the falsehood of Theorem \ref{t} in the case where both groups are non-compact:

\begin{ex} \textbf{[Not all compatible real orbits need to intersect]}. Let $G:=SL_2(\mathbb{R})\subset G^{\mb{C}}:=SL_2(\mb{C})\supset U:=SU(2)$ be the standard matrix representations, and consider the adjoint actions of these groups on their Lie algebras respectively. It is easy to see that $G$ is compatible with $U$. We can find $v'\neq v\in \mathfrak{g}\cap \mathfrak{u}$ such that $Gv\subset G^{\mb{C}}v\supset Gv'$ but $Gv\neq Gv'$. Consider the induced product action of the semi-simple groups: $$H:=G\times G\subset G^{\mb{C}}\times G^{\mb{C}}\supset G\times U:=\tilde{H},$$ acting on $\mathfrak{h}:=\mathfrak{g}\times \mathfrak{g}$ and $\tilde{\mathfrak{h}}:=\mathfrak{g}\times \mathfrak{u}$ respectively. Then $\Big{(} H, \tilde{H}, U\times U \Big{)}$ is a compatible triple, and $\Big{(} \mathfrak{h}, \tilde{\mathfrak{h}}, \mathfrak{u}\times\mathfrak{u} \Big{)}$ is also a compatible triple. Thus we have the setup of compatible representations. Now we note that: $$H\cdot (v,v)\subset H^{\mb{C}} (v,v)\supset \tilde{H}\cdot (v',v'),$$ however if there exist $(v_1,v_2)\in H\cdot (v,v)\cap \tilde{H}\cdot (v',v')$, then there exist $g\in G$ such that $g\cdot v=v'$ which is impossible. Hence $H\cdot (v,v)\sim \tilde{H}\cdot (v',v')$ are compatible real orbits, but cannot intersect.

 \end{ex}

\section{Applications to Wick-rotations of arbitrary signatures}

\subsection{The isometry action of $O(n,\mb{C})$ on tensors} \label{tensors}

In this subsection we consider Wick-rotations and recall the setup from \cite{W2}. We use the isometry action of the complex orthogonal group: $O(n,\mb{C})$ on a tensor product space, induced from the isometry action of the holomorphic metric, and apply the results of Section 5 to obtain necessary conditions for the existence of a Wick-rotation at a common fix point $p$. We begin by observing that we indeed have the setup of compatible representations (see Section 4).

\

Suppose now that $(M,g)\subset (M^{\mb{C}}, g^{\mb{C}})\supset (\tilde{M}, \tilde{g})$ are Wick-rotated at $p\in M\cap\tilde{M}$, and consider now the complex isometry action $\rho^{\mb{C}}$ of $O(n,\mathbb{C})$ on $T_pM^{\mathbb{C}}$: $$g\cdot v:=g(v), \ g\in O(n,\mb{C}), \ v\in T_pM^{\mb{C}}.$$ Now by using an isomorphism: $T_pM^{\mathbb{C}}\xrightarrow{\psi} T_pM^{\mathbb{C}}$, as in Corollary \ref{sw}, then we have a compatible triple: $\Big{(}T_pM, \psi^{-1}(T_p\tilde{M}), W\Big{)}$, and we know by Lemma \ref{op}, that the corresponding pseudo-orthogonal groups: $\Big{(}O(p,q),O(\tilde{p},\tilde{q}), O(n)\Big{)}$ also form a compatible triple (by definition). Thus the corresponding real isometry actions of our pseudo-inner products: $g(-,-), \ \tilde{g}(-,-)$ and $g^{\mb{C}}_{|_W}(-,-)$, are restrictions of $\rho^{\mb{C}}$. Denote them by $\rho^{O(p,q)}_{T_pM}$, $\rho^{O(\tilde{p},\tilde{q})}_{\psi^{-1}(T_p\tilde{M})}$ and $\rho^{O(n)}_{W}$ respectively, then they form a compatible triple: $\Big{(} \rho^{O(p,q)}_{T_pM}, \rho^{O(\tilde{p},\tilde{q})}_{\psi^{-1}(T_p\tilde{M})}, \rho^{O(n)}_{W}\Big{)}$, in the sense of Definition \ref{comp}. 

The map $\psi$ and the isometry action $\rho^{\mb{C}}$ naturally extends tensorially to complexified tensors: $v^{\mathbb{C}}\in \mathcal{V}^{\mathbb{C}}:=\Big{(}\bigotimes_{i=1}^k T_pM^{\mathbb{C}}\Big{)}\bigotimes\Big{(}\bigotimes_{i=1}^m ({T_pM^\mb{C}})^*\Big{)}$ at a point $p$. Denote $\Psi$ for the extension map of $\psi$ to tensors $\mathcal{V}^{\mb{C}}$, i.e $\Psi(-):=\psi\cdot (-)$. Then it is easy to check that the triple: $\Big{(}\mathcal{V}, \Psi^{-1}(\tilde{\mathcal{V}}), \mathcal{W}\Big{)}$ also form a compatible triple, where we define: 

$$\mathcal{V}:=\Big{(}\bigotimes_{i=1}^k T_pM\Big{)}\bigotimes\Big{(}\bigotimes_{i=1} ^m (T_pM)^*\Big{)}, \ \ \tilde{\mathcal{V}}:=\Big{(}\bigotimes_{i=1}^k T_p\tilde{M}\Big{)}\bigotimes\Big{(}\bigotimes_{i=1} ^m (T_p\tilde{M})^*\Big{)},$$ and $\mathcal{W}:=\Big{(}\bigotimes_{i=1}^k W\Big{)}\bigotimes\Big{(}\bigotimes_{i=1}^m W^*\Big{)}.$

Thus the real isometry tensor actions also naturally form a compatible triple: $\Big{(} \rho^{O(p,q)}_{\mathcal{V}}, \rho^{O(\tilde{p},\tilde{q})}_{\Psi^{-1}(\tilde{\mathcal{V}})},  \rho^{O(n)}_{\mathcal{W}}\Big{)}$. 

Let $\{e_1,\dots, e_p, \dots, e_n\}$ be a pseudo-orthonormal basis of the metric $g$, and $\theta$ the Cartan involution w.r.t this basis. Then $$\{y_1,\dots, y_n\}:=\{e_1,\dots, e_p, ie_{p+1}, \dots, ie_n\},$$ is an orthonormal basis of $g^{\mb{C}}$. Note that the span of $\{y_1,\dots, y_n\}$ is precisely the compact real slice $W$, and moreover the conjugation map $\tau$ of $W$ in $T_pM^{\mb{C}}$ restricts to $\theta$. We can extend the holomorphic metric $g^{\mb{C}}$ (at $p$) to a holomorphic inner product $\textbf{g}^{\mb{C}}$ on $\mathcal{V}^{\mb{C}}$ by defining: $$\textbf{g}^{\mb{C}}\Big{(}\otimes_{i=1}^{k} v^{\mb{C}}_i\otimes_{j=1}^{m} w^{\mb{C}^*}_j, \otimes_{s=1}^{k} \tilde{v}^{\mb{C}}_s\otimes_{t=1}^{m} \tilde{w}^{\mb{C}^*}_t\Big{)}:=\sum_{1\leq i,s\leq n} g^{\mb{C}}(v^{\mb{C}}_i, \tilde{v}^{\mb{C}}_s)+\sum_{1\leq j,t\leq n}g^{\mb{C}}(w^{\mb{C}}_j, \tilde{w}^{\mb{C}}_t),$$ using the isomorphism: $$T_pM^{\mb{C}}\xrightarrow{{v^{\mb{C}}}^*} {(T_pM^{\mb{C}})}^*, \ v\mapsto g^{\mb{C}}(v,-).$$ 

We see that $\mathcal{V}\subset (\mathcal{V}^{\mb{C}}, \textbf{g}^{\mb{C}})\supset \Psi^{-1}(\tilde{\mathcal{V}})$ are real forms (i.e real slices). Denote $\textbf{g}$ for the induced pseudo-inner product on $\mathcal{V}$. The Cartan involution $\theta$ of $g$ extends in the obvious way to a Cartan involution $\Theta$ of $\textbf{g}$, by $$\otimes_{i=1}^{k} v_i\otimes_{j=1}^{m} v^*_j \mapsto \otimes_{i=1}^{k} \theta(v_i)\otimes_{j=1}^{m} v^*_j\circ\theta,$$ which is just the action of $\theta$ on tensors, i.e $\Theta=\rho^{O(p,q)}_{\mathcal{V}}(\theta)(v)$. Now the inner Cartan involutions of the action (w.r.t \textbf{g}) are just those conjugate to $\Theta$ by definition (see definition in Section 3). This means that the inner Cartan involutions are precisely those which are extensions from a Cartan involution of the metric $g$. 

Moreover because $T_pM$ and $\psi^{-1}(T_p\tilde{M})$ are both compatible with $W$, then we also have that $\mathcal{V}$ and $\Psi^{-1}(\tilde{\mathcal{V}})$ are compatible with the $O(n)$-invariant Hermitian inner product: $\textbf{g}^{\mb{C}}(\cdot, \mathcal{T}(\cdot))$, where $\mathcal{T}$ is the conjugation map of $\mathcal{W}\subset \mathcal{V}^{\mb{C}}$ defined by the action: $\mathcal{T}(v^{\mb{C}}):=\tau\cdot v^{\mb{C}}$. Thus the isometry actions lend themselves to the results of Section 5.

\begin{rem} The isometry tensor product action and everything defined in this section extends in the natural way to finite sums of the form: $$\bigoplus_{k,m}\Big{(}\Big{(}\bigotimes_{i=1}^k T_pM^{\mathbb{C}}\Big{)}\bigotimes\Big{(}\bigotimes_{i=1}^m ({T_pM^\mb{C}})^*\Big{)}\Big{)}.$$ Thus from heron and to the end of this paper we assume the isometry tensor action on such sums and thus replace: $\mathcal{V}^{\mathbb{C}}$ with this sum. \end{rem}

\begin{defn} \label{h} Let $M$ and $\tilde{M}$ be two Wick-rotatable real slices at $p\in M\cap \tilde{M}$. Then two tensors $v\in \mathcal{V}$ and $\tilde{v}\in \tilde{\mathcal{V}}$ are said to be \textsl{Wick-rotatable} at $p$, if they lie in the same $O(n,\mathbb{C})$-orbit, i.e $$O(n,\mathbb{C})\cdot v=O(n,\mathbb{C})\cdot \tilde{v}.$$   \end{defn}

Note that if $v$ and $\tilde{v}$ are two Wick-rotatable tensors, then using the map $\Psi$ above, we see that $O(p,q)v\sim O(\tilde{p},\tilde{q})\Psi^{-1}(\tilde{v})$ are two compatible real orbits (see Definition \ref{aa}). 

The most obvious example of two Wick-rotatable tensors, are of course the real metrics themselves: $g\in T^2(T_pM)$ and $\tilde{g}\in T^2(T_p\tilde{M})$ at the common point $p$, simply because they are restrictions of the holomorphic metric at $p$. Thus from the metrics it follows that the real Levi-Civita connections: $\nabla \in T^2(T_pM)$ and $\tilde{\nabla}\in T^2(T_p\tilde{M}) $ must also be restrictions of the holomorphic Levi-Civita connection: $\nabla^{\mb{C}}$, on the tangent spaces at $p$. Thus furthermore the real Riemann tensors: $R$ and $\tilde{R}$ restricted to the tangent spaces at $p$ are also restrictions of the holomorphic Riemann tensor. As seen in \cite{W1}, one can for instance view them as vectors: $R\in End(\mathfrak{o}(p,q))$ and $\tilde{R}\in End(\mathfrak{o}(\tilde{p},\tilde{q}))$. From the Riemann tensors it also follows that the real Ricci curvatures: $ric_g\in T^2(T_pM)$ and $ric_{\tilde{g}}\in T^2(T_p\tilde{M})$ and the real Ricci operators: $Ric_g\in End(T_pM)$ and $Ric_{\tilde{g}}\in End(T_p\tilde{M})$ must also be Wick-rotatable respectively.

\subsection{Purely electric/magnetic spaces} \label{purely}
Let $(M,g)$ be a pseudo-Riemannian space of signature $(p,q)$, and let $p\in M$ be a point, and $\theta\in O(p,q)$ be a Cartan involution of $g_p(-,-)$. Consider the isometry tensor action of $O(p,q)$ on $\mathcal{V}$ from the previous section: $$O(p,q)\xrightarrow{\rho^{O(p,q)}_{\mathcal{V}}} GL(\mathcal{V}).$$ Then $\theta$ naturally extends to an involution $\Theta:=\rho^{O(p,q)}_{\mathcal{V}}(\theta)$ on $\mathcal{V}$, and the metric naturally induces a pseudo-inner product: $\textbf{g}(-,-)$  on $\mathcal{V}$ such that $\Theta$ is a Cartan involution. 

Let now $R\in \mathcal{V}$ be the Riemann tensor of $M$ at $p$ for $\mathcal{V}$ some tensor product. If there exist a Cartan involution $\Theta$ such that $\Theta(R)=R$ (respectively $\Theta(R)=-R$), then the space $(M,g)$ at $p$ is called \textsl{Riemann purely electric} (RPE) (respectively \textsl{Riemann purely magnetic} (RPM)). If there is such a $\Theta$ for the Weyl tensor at $p$, then $(M,g)$ at $p$ is called \textsl{purely electric} (PE) (respectively \textsl{purely magnetic} (PM)).  

\subsection{Invariance theorem for Wick-rotation at a point $p$}

We now follow the notation of Section \ref{tensors} for the isometry action on tensor products, and apply the results of Section 5 to these actions. For the results in this section and the next we can for instance consider the Wick-rotatable tensors mentioned in the last paragraph after Defn \ref{h}. Recall the result given in \cite{W1}, where a Wick-rotation of a Riemannian real slice and an arbitrary pseudo-Riemannian real slice was considered. There the following result was proven:

\begin{thm} [\cite{W1}] \label{HH} Let $(M,g)$ and $(\tilde{M},\tilde{g})$ be Wick-rotated at $p\in M\cap\tilde{M}$. Assume $(\tilde{M},\tilde{g})$ is Riemannian. Then the pseudo-Riemannian space $(M,g)$ is Riemann purely electric (RPE) at $p$.  \end{thm}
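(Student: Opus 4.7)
The plan is to apply Theorem \ref{t} case (1) to the Riemann tensors, viewed as vectors in the isometry tensor representation of Section \ref{tensors}. First, by Corollary \ref{sw} we may replace the given Wick-rotation by a standard one, so we have a compatible triple of real slices $\bigl(T_pM,T_p\tilde{M},W\bigr)\subset T_pM^{\mb{C}}$. Because $\tilde{g}_p$ is positive definite, $T_p\tilde{M}$ is itself a compact real slice, so we may simply take $W:=T_p\tilde{M}$. On the tensor level this collapses the compatible triple of representations to the compatible pair $\bigl(\rho^{O(p,q)}_{\mathcal{V}},\rho^{O(n)}_{\mathcal{W}}\bigr)$ with $\tilde{\mathcal{V}}=\mathcal{W}$.

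Next I would invoke the observation, made at the end of Section \ref{tensors}, that the real Riemann tensors $R\in\mathcal{V}$ and $\tilde{R}\in\mathcal{W}$ are both restrictions of the same holomorphic Riemann tensor $R^{\mb{C}}\in\mathcal{V}^{\mb{C}}$ of $(M^{\mb{C}},g^{\mb{C}})$ at $p$. In particular $R$ and $\tilde{R}$ are Wick-rotatable tensors in the sense of Definition \ref{h}; equivalently, in the language of Definition \ref{aa}, the real orbits $O(p,q)R$ and $O(n)\tilde{R}$ are compatible.

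Now I would apply Theorem \ref{t} case (1) to the compatible pair above with $v:=R$ and $w:=\tilde{R}$. The compatibility $O(p,q)R\sim O(n)\tilde{R}$ is exactly condition (A), so the implication $(A)\Rightarrow(B)$ produces an inner Cartan involution $\Theta$ of $\rho^{O(p,q)}_{\mathcal{V}}$ satisfying $\Theta(R)=R$. By the discussion in Section \ref{tensors}, the inner Cartan involutions of this tensor action are precisely the tensor extensions of Cartan involutions $\theta$ of the pseudo-inner product $g_p$. Hence there is a Cartan involution $\theta$ of $(T_pM,g_p)$ whose tensorial extension fixes $R$, which is exactly the RPE condition of Section \ref{purely}.

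I do not expect a serious obstacle here; the proof is essentially a packaging argument. The only point that requires care is the identification of the compact real slice $W$ with $T_p\tilde{M}$ in the Riemannian case (so that the compatible triple reduces to a compatible pair and the Riemann tensor $\tilde{R}$ genuinely lies in $\mathcal{W}$), together with the verification that $R$ and $\tilde{R}$ share an $O(n,\mb{C})$-orbit, which is immediate from their common origin as restrictions of $R^{\mb{C}}$.
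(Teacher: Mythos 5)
Your argument is correct and is exactly the route the paper itself indicates: Theorem \ref{HH} is cited from \cite{W1}, and the text immediately after Theorem \ref{con} remarks that ``Theorem \ref{t} (case 1) is precisely Theorem \ref{HH}'', i.e.\ one applies Theorem \ref{t}(1), $(A)\Rightarrow(B)$, to the Riemann tensors in the compatible pair $\bigl(\rho^{O(p,q)}_{\mathcal{V}},\rho^{O(n)}_{\mathcal{W}}\bigr)$ and uses that inner Cartan involutions of the tensor action are tensorial extensions of Cartan involutions of $g_p$. The only cosmetic point is that, strictly speaking, one should work with $\Psi^{-1}(\tilde{R})\in\Psi^{-1}(\tilde{\mathcal{V}})$ rather than literally setting $W=T_p\tilde{M}$ (the standard Wick-rotation guarantees compatibility only after the adjustment by $\psi$), but this does not affect the conclusion about $(M,g)$.
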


We note in the case where $(\tilde{M},\tilde{g})$ is Riemannian, then the complex orbit: $O(n,\mb{C})v\subset \mathcal{V}^{\mb{C}}$, for two Wick-rotatable tensors is always closed. Moreover any Cartan involution for a Riemannian space is just the identity ($\theta=1$), and thus when extended to tensors, this is just the identity as well ($\Theta=1$).

Thus for arbitrary signatures the following result is a generalisation:

\begin{thm}\label{con} Let $(M,g)$ and $(\tilde{M},\tilde{g})$ be Wick-rotated at $p\in M\cap\tilde{M}$ of arbitrary signatures. Suppose $v\in \mathcal{V}$ and $\tilde{v}\in \tilde{\mathcal{V}}$ are two Wick-rotated tensors at $p$. Assume $O(n,\mb{C})v$ is closed. Then there exist Cartan involutions $\theta$ and $\tilde{\theta}$ of $g(-,-)$ and $\tilde{g}(-,-)$ at $p$ respectively, such that if $v=v_++v_-$ and $\tilde{v}=\tilde{v}_++\tilde{v}_-$ are the Cartan decompositions w.r.t the extended Cartan involutions on $\mathcal{V}$ and $\tilde{\mathcal{V}}$, then $v_+$ and $\tilde{v}_+$ are Wick-rotated at $p$ and so are $v_-$ and $\tilde{v}_-$ at $p$.   \end{thm}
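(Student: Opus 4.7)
The plan is to reduce the statement directly to Theorem \ref{closed} applied to the compatible triple of real isometry tensor actions constructed in Section \ref{tensors}. First I would recall from that section that we have the compatible triple
\[
\Big{(}\rho^{O(p,q)}_{\mathcal{V}}, \rho^{O(\tilde{p},\tilde{q})}_{\Psi^{-1}(\tilde{\mathcal{V}})}, \rho^{O(n)}_{\mathcal{W}}\Big{)}
\]
of real representations, where $\psi$ is the isomorphism of Corollary \ref{sw} forcing the compatible-triple structure on tangent spaces and $\Psi$ is its tensorial extension. The Wick-rotation hypothesis says $O(n,\mathbb{C})v = O(n,\mathbb{C})\tilde{v}$ in $\mathcal{V}^{\mathbb{C}}$, and since $\Psi \in O(n,\mathbb{C})$ acts on tensors inside the complex orbit, the vector $\Psi^{-1}(\tilde{v}) \in \Psi^{-1}(\tilde{\mathcal{V}})$ lies in the same $O(n,\mathbb{C})$-orbit as $v$. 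In the terminology of Definition \ref{aa}, this exactly says $O(p,q)v \sim O(\tilde{p},\tilde{q}) \Psi^{-1}(\tilde{v})$.

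Next, since $O(n,\mathbb{C})v$ is closed by hypothesis, I would invoke Theorem \ref{closed} to produce inner Cartan involutions $\Theta$ of $\rho^{O(p,q)}_{\mathcal{V}}$ and $\tilde{\Theta}'$ of $\rho^{O(\tilde{p},\tilde{q})}_{\Psi^{-1}(\tilde{\mathcal{V}})}$, along with Cartan decompositions $v = v_+ + v_-$ and $\Psi^{-1}(\tilde{v}) = u_+ + u_-$ such that $O(p,q) v_{\pm} \sim O(\tilde{p},\tilde{q}) u_{\pm}$. I would then use the key observation in Section \ref{tensors}: the inner Cartan involutions of $\rho^{O(p,q)}_{\mathcal{V}}$ are, by definition, exactly the tensorial extensions of Cartan involutions of the metric $g$ at $p$. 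Hence $\Theta = \rho^{O(p,q)}_{\mathcal{V}}(\theta)$ for an honest Cartan involution $\theta$ of $g$ at $p$, and similarly $\tilde{\Theta}'$ extends a Cartan involution of the pseudo-inner product $\tilde{g}\circ\psi^{-1}$ on $\psi^{-1}(T_p\tilde{M})$.

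To translate this back to the original tangent space $T_p\tilde{M}$, I would push $\tilde{\Theta}'$ forward by $\Psi$, producing $\tilde{\Theta} := \Psi\circ \tilde{\Theta}'\circ \Psi^{-1}$, which is the tensor extension of a Cartan involution $\tilde{\theta} := \psi\circ(\cdot)\circ\psi^{-1}$ of $\tilde{g}$ at $p$. Setting $\tilde{v}_{\pm} := \Psi(u_{\pm})$ gives the $\pm 1$-eigendecomposition of $\tilde{v}$ under $\tilde{\Theta}$. Finally, the compatibility $O(p,q) v_{\pm} \sim O(\tilde{p},\tilde{q}) u_{\pm}$ is exactly the statement $O(n,\mathbb{C}) v_{\pm} = O(n,\mathbb{C}) u_{\pm}$; applying $\Psi \in O(n,\mathbb{C})$ to the right-hand side yields $O(n,\mathbb{C}) v_{\pm} = O(n,\mathbb{C})\tilde{v}_{\pm}$, which is precisely the Wick-rotation condition of Definition \ref{h} for the pieces.

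The main obstacle I anticipate is the bookkeeping around $\psi$: the compatible-triple structure is built on $\psi^{-1}(T_p\tilde{M})$, not $T_p\tilde{M}$ itself, so the inner Cartan involution produced by Theorem \ref{closed} naturally lives on the wrong side and must be transported back via $\Psi$. One must check that this transport preserves the status of being a Cartan involution of a metric (straightforward, since $\psi$ is a complex isometry identifying the two real slices as pseudo-inner product spaces) and that the eigenspace splitting is carried over correctly under $\Psi$. Once this identification is made clean, the conclusion that $v_{\pm}$ and $\tilde{v}_{\pm}$ are Wick-rotated is immediate from the fact that $\Psi \in O(n,\mathbb{C})$, so no new geometric input is required beyond Theorem \ref{closed}.
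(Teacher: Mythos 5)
Your proposal is correct and follows exactly the paper's route: form the compatible triple $\bigl(\rho^{O(p,q)}_{\mathcal{V}}, \rho^{O(\tilde{p},\tilde{q})}_{\Psi^{-1}(\tilde{\mathcal{V}})}, \rho^{O(n)}_{\mathcal{W}}\bigr)$, note $O(p,q)v\sim O(\tilde{p},\tilde{q})\Psi^{-1}(\tilde{v})$, apply Theorem \ref{closed}, and transport back via $\Psi(-)=g\cdot(-)$ for $g\in O(n,\mathbb{C})$. You simply spell out more carefully than the paper does the bookkeeping of pushing the inner Cartan involution and the eigenspace splitting forward from $\Psi^{-1}(\tilde{\mathcal{V}})$ to $\tilde{\mathcal{V}}$.
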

\begin{proof} We apply Theorem \ref{closed} to the compatible triple: 
\\
 $\Big{(} \rho^{O(p,q)}_{\mathcal{V}}, \rho^{O(\tilde{p},\tilde{q})}_{\Psi^{-1}(\tilde{\mathcal{V}})},  \rho^{O(n)}_{\mathcal{W}}\Big{)}$, as defined in Section \ref{tensors}, together with    the compatible real orbits: $O(p,q)v\sim O(\tilde{p},\tilde{q})\Psi^{-1}(\tilde{v})$. The results then follow to $\tilde{v}$, as $\Psi(-)=g\cdot -$ for some $g\in O(n,\mb{C})$.\end{proof} 

Thus from the theorem there are Cartan involutions such that the components must be Wick-rotated also at $p$. Note that Theorem \ref{t} (case 1), is precisely Theorem \ref{HH}. Recall now the definition given in Section \ref{purely}, then we have the following invariance result for Wick-rotation at a point which also extends Theorem \ref{HH}:

\begin{cor} \textbf{[Invariance of Wick-rotation]}. Let $(M,g)$ and $(\tilde{M},\tilde{g})$ be Wick-rotated at $p\in M\cap\tilde{M}$ of arbitrary signatures. Then $M$ is (PE), (RPE), (PM) or (RPM) if and only if $\tilde{M}$ is (PE), (RPE), (PM) or (RPM) respectively.  \end{cor}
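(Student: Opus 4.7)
The plan is to deduce all four invariances in a single stroke by invoking Corollary~\ref{q} on the compatible triple of isometry tensor actions $\bigl(\rho^{O(p,q)}_{\mathcal{V}},\ \rho^{O(\tilde{p},\tilde{q})}_{\Psi^{-1}(\tilde{\mathcal{V}})},\ \rho^{O(n)}_{\mathcal{W}}\bigr)$ constructed in Section~\ref{tensors}. For the (RPE) and (RPM) equivalences I would take $v=R$ and $\tilde{v}=\tilde{R}$, the Riemann tensors of $(M,g)$ and $(\tilde M,\tilde g)$ at the common point $p$; for the (PE) and (PM) equivalences I would take $v=C$ and $\tilde{v}=\tilde C$, the Weyl tensors at $p$.

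Two preparatory checks are needed before Corollary~\ref{q} can be applied. First, one must verify that $R,\tilde R$ and $C,\tilde C$ are Wick-rotated tensors in the sense of Definition~\ref{h}. For the Riemann tensors this is already observed in Section~\ref{tensors}: the holomorphic Levi-Civita connection of $(M^{\mb{C}},g^{\mb{C}})$ restricts at $p$ to the real Levi-Civita connections of both $M$ and $\tilde M$, so the holomorphic Riemann tensor restricts to $R$ and to $\tilde R$, placing them in the same $O(n,\mb{C})$-orbit inside $\mathcal{V}^{\mb{C}}$ after the identification by $\Psi$ from Corollary~\ref{sw}. The Weyl tensor is an algebraic polynomial expression in the Riemann tensor, Ricci tensor, scalar curvature and metric, all of which are Wick-rotated, so $C,\tilde C$ inherit the same property. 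Second, I would invoke the description given at the end of Section~\ref{tensors}: the inner Cartan involutions of $\rho^{O(p,q)}_{\mathcal{V}}$ on $\mathcal{V}$ are exactly the tensor extensions $\Theta=\rho^{O(p,q)}_{\mathcal{V}}(\theta)$ of Cartan involutions $\theta$ of $g_p$, and analogously for the $\tilde g_p$ side. Consequently the definitions in Section~\ref{purely} translate literally: ``$M$ is (RPE) at $p$'' becomes ``there exists an inner Cartan involution $\Theta$ of $\mathcal{V}$ with $\Theta(R)=R$'', and (RPM), (PE), (PM) translate identically via $\pm R$ and $\pm C$.

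With these two translations in place, Corollary~\ref{q} applied to the compatible real orbits $O(p,q)\cdot R\sim O(\tilde p,\tilde q)\cdot\Psi^{-1}(\tilde R)$ yields the (RPE) and (RPM) invariances simultaneously, and the same argument applied to $O(p,q)\cdot C\sim O(\tilde p,\tilde q)\cdot\Psi^{-1}(\tilde C)$ yields the (PE) and (PM) invariances. No further analysis is required. The main subtlety---more a matter of careful bookkeeping than a genuine obstacle---is tracking the isomorphism $\Psi$ from Corollary~\ref{sw}: one must be sure that transporting $\tilde R$ and $\tilde C$ through $\Psi^{-1}$ really produces compatible real orbits in $\mathcal{V}^{\mb{C}}$, and that the resulting Cartan involutions $\tilde\theta$ of $\tilde g_p$ produced by Corollary~\ref{q} correspond, after transport back through $\Psi$, to honest Cartan involutions of $\tilde g_p$ itself. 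Given the equivariance of the tensor extension $\Psi=\rho^{\mb{C}}(g)$ for some $g\in O(n,\mb{C})$, this is immediate and the corollary follows.
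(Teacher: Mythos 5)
Your proposal is correct and follows exactly the paper's own route: apply Corollary \ref{q} to the compatible triple $\bigl(\rho^{O(p,q)}_{\mathcal{V}},\ \rho^{O(\tilde{p},\tilde{q})}_{\Psi^{-1}(\tilde{\mathcal{V}})},\ \rho^{O(n)}_{\mathcal{W}}\bigr)$ with $v$ the Riemann tensor for (RPE)/(RPM) and the Weyl tensor for (PE)/(PM). The preparatory checks you spell out (Wick-rotatability of $R,\tilde R, C,\tilde C$ and the identification of inner Cartan involutions with tensor extensions of Cartan involutions of $g_p$) are exactly the observations the paper makes in Section \ref{tensors}, so your write-up is simply a more detailed version of the paper's one-line proof.
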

\begin{proof} This is precisely Corollary \ref{q} with $v=R$ (respectively $v=W$) and $\tilde{v}:=\tilde{R}$ (respectively $\tilde{v}=\tilde{W}$) being the Riemann tensors at $p$ (respectively the Weyl tensors at $p$), applied to the compatible triple: 
\\
 $\Big{(} \rho^{O(p,q)}_{\mathcal{V}}, \rho^{O(\tilde{p},\tilde{q})}_{\Psi^{-1}(\tilde{\mathcal{V}})},  \rho^{O(n)}_{\mathcal{W}}\Big{)}$, as defined in Section \ref{tensors}.    \end{proof}

One may conjecture that Theorem \ref{con} also hold for non-closed orbits: $O(n,\mb{C})v$, so this is a natural follow-up question to ask. However for a non-closed orbit: $O(n,\mb{C})v$, we do have the following result on the boundaries of the orbits:

\begin{cor} Let $(M,g)$ and $(\tilde{M},\tilde{g})$ be Wick-rotated at $p\in M\cap\tilde{M}$ of arbitrary signatures. Let $v\in \mathcal{V}$ and $\tilde{v}\in \tilde{\mathcal{V}}$. Assume $O(n,\mb{C})v$ is not closed, and that $v$ is Wick-rotatable to $\tilde{v}$ at $p$. Let $x\in \mathfrak{p}$ and $\tilde{x}\in \tilde{\mathfrak{p}}$ be such that the limits exist: $e^{tx}\cdot v\rightarrow \alpha\in \overline{O(p,q)v}-O(p,q)v$ and $e^{t\tilde{x}}\cdot \tilde{v}\rightarrow \tilde{\alpha}\in \overline{O(\tilde{p},\tilde{q})\tilde{v}}-O(\tilde{p},\tilde{q})\tilde{v}$ where $O(p,q)\alpha$ and $O(\tilde{p},\tilde{q})\tilde{\alpha}$ are closed (Theorem \ref{RS}). Then there exist extended Cartan involutions $\Theta$ and $\tilde{\Theta}$ of $\mathcal{V}$ and $\tilde{\mathcal{V}}$ respectively, such that if $\alpha=\alpha_++\alpha_-$ and $\tilde{\alpha}=\tilde{\alpha}_++\tilde{\alpha}_-$ are the Cartan decompositions, then $\alpha_+$ is Wick-rotated at $p$ to $\tilde{\alpha}_+$ and $\alpha_-$ is Wick-rotated at $p$ to $\tilde{\alpha}_-$.  \end{cor}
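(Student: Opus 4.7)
The plan is to deduce this corollary directly from Corollary~\ref{end} applied to the compatible triple of isometry tensor actions
\[
\big(\rho^{O(p,q)}_{\mathcal{V}},\ \rho^{O(\tilde p,\tilde q)}_{\Psi^{-1}(\tilde{\mathcal{V}})},\ \rho^{O(n)}_{\mathcal{W}}\big)
\]
constructed in Section~\ref{tensors}. Everything geometric---existence of the compatible triple, the $O(n)$-invariant Hermitian inner product compatible with all three real forms, and the identification of inner Cartan involutions of the tensor action with extensions of Cartan involutions of the underlying metric---was already set up there, so the work reduces to verifying the hypotheses of Corollary~\ref{end} and translating its conclusion back through $\Psi$.

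First I would pull $\tilde v$ and $\tilde\alpha$ into the real form $\Psi^{-1}(\tilde{\mathcal{V}})$. Since $\Psi$ is induced by $\psi\in O(n,\mathbb{C})$ and is therefore an $O(n,\mathbb{C})$-module isomorphism, the Wick-rotatability assumption $O(n,\mathbb{C})v=O(n,\mathbb{C})\tilde v$ gives $\Psi^{-1}(\tilde v)\in O(n,\mathbb{C})v$, so the real orbits $O(p,q)v$ and $O(\tilde p,\tilde q)\Psi^{-1}(\tilde v)$ are compatible. Transporting the degenerating element by $\Psi$, set $x':=\psi^{-1}\tilde x\psi$; after choosing Cartan involutions coherently with the compatible triple (as in Corollary~\ref{aab}), $x'$ lies in the $\mathfrak{p}$-part of $\Psi^{-1}(\mathfrak{o}(\tilde p,\tilde q))$, and
\[
e^{tx'}\Psi^{-1}(\tilde v)=\Psi^{-1}(e^{t\tilde x}\tilde v)\longrightarrow \Psi^{-1}(\tilde\alpha),
\]
with $O(\tilde p,\tilde q)\Psi^{-1}(\tilde\alpha)$ closed since $O(\tilde p,\tilde q)\tilde\alpha$ is. Together with the hypothesis that $O(n,\mathbb{C})v$ is not closed, all hypotheses of Corollary~\ref{end} are met.

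Applying Corollary~\ref{end} then produces inner Cartan involutions of $\mathcal{V}$ and of $\Psi^{-1}(\tilde{\mathcal{V}})$ whose Cartan decompositions $\alpha=\alpha_++\alpha_-$ and $\Psi^{-1}(\tilde\alpha)=\beta_++\beta_-$ satisfy
\[
O(p,q)\alpha_+\sim O(\tilde p,\tilde q)\beta_+,\qquad O(p,q)\alpha_-\sim O(\tilde p,\tilde q)\beta_-.
\]
Push the second involution forward by $\Psi$ to obtain an inner Cartan involution $\tilde\Theta$ of $\tilde{\mathcal{V}}$; by $O(n,\mathbb{C})$-equivariance of $\Psi$ the components satisfy $\tilde\alpha_\pm=\Psi(\beta_\pm)$ and therefore lie in the same $O(n,\mathbb{C})$-orbit as $\alpha_\pm$, which is exactly the statement that $\alpha_\pm$ and $\tilde\alpha_\pm$ are Wick-rotated at $p$. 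By the characterisation of inner Cartan involutions of the tensor action in Section~\ref{tensors}, the involutions $\Theta$ and $\tilde\Theta$ are indeed extensions of Cartan involutions of $g$ and $\tilde g$ at $p$.

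The only real subtlety---and I do not expect it to be a genuine obstacle---is the bookkeeping around $\Psi$: one must verify that pushing a Cartan involution and a degenerating element forward by $\Psi$ preserves the compatible-triple structure and the $\pm$-eigenspace decomposition. This is automatic because $\psi$ was constructed in Corollary~\ref{sw} precisely so that $(T_pM,\Psi^{-1}(T_p\tilde M),W)$ forms a compatible triple and $\Psi$ intertwines the group actions, so no new analytic input beyond Corollary~\ref{end} and the tensor-action setup of Section~\ref{tensors} is required.
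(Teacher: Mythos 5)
Your proposal is correct and follows the same route as the paper: the paper's proof is exactly an application of Corollary~\ref{end} to the compatible triple $\big(\rho^{O(p,q)}_{\mathcal{V}},\rho^{O(\tilde p,\tilde q)}_{\Psi^{-1}(\tilde{\mathcal{V}})},\rho^{O(n)}_{\mathcal{W}}\big)$ and the compatible orbits $O(p,q)v\sim O(\tilde p,\tilde q)\Psi^{-1}(\tilde v)$, with the results transported back via $\Psi(-)=g\cdot(-)$ for $g\in O(n,\mathbb{C})$. Your extra bookkeeping about conjugating the degenerating element and the Cartan involution through $\Psi$ is exactly the detail the paper leaves implicit, and it is handled correctly (uniqueness of the closed orbit in the closure makes the specific choice of $\tilde{\mathfrak{p}}$ immaterial).
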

\begin{proof} We apply Corollary \ref{end} to the compatible triple: 
\\
 $\Big{(} \rho^{O(p,q)}_{\mathcal{V}}, \rho^{O(\tilde{p},\tilde{q})}_{\Psi^{-1}(\tilde{\mathcal{V}})},  \rho^{O(n)}_{\mathcal{W}}\Big{)}$, as defined in Section \ref{tensors}, and to the compatible real orbits: $O(p,q)v\sim O(\tilde{p},\tilde{q})\Psi^{-1}(\tilde{v})$. \end{proof}

\subsection{A note on Wick-rotations of the same signatures}

Suppose $M$ and $\tilde{M}$ are Wick-rotated at a common point $p$ and have the same signatures. Then if $v\in \mathcal{V}$ and $\tilde{v}\in \tilde{\mathcal{V}}$ are Wick-rotated at $p$, we may choose $\Psi$ (in Definition \ref{h}) such that $\Psi^{-1}(\tilde{v})\in \mathcal{V}$. Thus we have the following:

\begin{prop} \label{a} Suppose $(M,g)$ and $(\tilde{M},\tilde{g})$ are Wick-rotated at $p\in M\cap\tilde{M}$ and have the same signatures $p+q=n$. Let $v\in\mathcal{V}$ and $\tilde{v}\in \tilde{\mathcal{V}}$ be two Wick-rotatable tensors at $p$. Assume there is a unique real orbit in the complex orbit $O(n,\mb{C})v$, i.e $O(n,\mb{C})v\cap \mathcal{V}=O(p,q)v$. Then there is a homeomorphism: $$\mathcal{V}\supset O(g_p(-,-),T_pM)\cdot v\cong O(\tilde{g}_p(-,-), T_p\tilde{M})\cdot \tilde{v}\subset \tilde{\mathcal{V}}.$$ \end{prop}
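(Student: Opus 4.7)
The plan is to exhibit the homeomorphism as the restriction of the tensor isomorphism $\Psi$ to the real orbit $O(p,q)v$. Since the signatures agree, I may sharpen the construction of Corollary \ref{sw} by choosing the real slice $V\subset T_pM^{\mathbb{C}}$ of signature $(\tilde p,\tilde q)$ to be $T_pM$ itself (it trivially forms a compatible triple with $T_pM$ and $W$). With this choice the extension $\psi: T_pM^{\mathbb{C}}\to T_pM^{\mathbb{C}}$ lies in $O(n,\mathbb{C})$ and satisfies $\psi(T_pM)=T_p\tilde M$. Consequently $\psi\, O(p,q)\,\psi^{-1}=O(\tilde p,\tilde q)$, since an element $\psi k\psi^{-1}$ preserves $g^{\mathbb{C}}$ and stabilises $T_p\tilde M$, hence restricts to an isometry of $\tilde g_p$. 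Extending tensorially, $\Psi$ becomes a linear isomorphism $\mathcal{V}\to\tilde{\mathcal{V}}$ satisfying $\Psi(k\cdot x)=(\psi k\psi^{-1})\cdot \Psi(x)$ for all $k\in O(p,q)$.

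Next I transport $\tilde v$ back into $\mathcal{V}$. Because $v$ and $\tilde v$ are Wick-rotatable, $\Psi^{-1}(\tilde v)\in O(n,\mathbb{C})\cdot v$; because $\Psi^{-1}(\tilde v)\in \mathcal{V}$, the hypothesis $O(n,\mathbb{C})v\cap \mathcal{V}=O(p,q)v$ gives some $g\in O(p,q)$ with $\Psi^{-1}(\tilde v)=g\cdot v$, i.e.\ $\tilde v=\Psi(g\cdot v)$. Combining this with the conjugation identity of the previous paragraph,
\[
\Psi\bigl(O(p,q)\cdot v\bigr)=\bigl(\psi O(p,q)\psi^{-1}\bigr)\cdot \Psi(v)=O(\tilde p,\tilde q)\cdot \Psi(v)=O(\tilde p,\tilde q)\cdot \tilde v,
\]
where the last equality uses $\tilde v=(\psi g\psi^{-1})\cdot \Psi(v)\in O(\tilde p,\tilde q)\cdot \Psi(v)$.

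It remains only to recognise that the restriction $\phi:=\Psi\bigl|_{O(p,q)v}:O(p,q)v\to O(\tilde p,\tilde q)\tilde v$ is a homeomorphism. Surjectivity has just been shown, injectivity is immediate since $\Psi$ is a bijection on $\mathcal V^{\mathbb C}$, and continuity of $\phi$ and $\phi^{-1}$ is free, since both are restrictions of continuous linear isomorphisms between the ambient real vector spaces $\mathcal V$ and $\tilde{\mathcal V}$. The only genuinely non-formal step is the opening manoeuvre, namely that the equality of signatures permits us to insist that the conjugating map $\psi$ lies in $O(n,\mathbb{C})$ and actually carries $T_pM$ onto $T_p\tilde M$; everything else is routine bookkeeping with the tensor action, with the single-real-orbit hypothesis entering only to identify $\Psi^{-1}(\tilde v)$ inside $O(p,q)v$.
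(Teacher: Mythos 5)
Your proposal is correct and follows essentially the same route as the paper: choose $\psi$ to be an isometry $T_pM\cong T_p\tilde M$ (possible because the signatures agree), use the single-real-orbit hypothesis to place $\Psi^{-1}(\tilde v)$ inside $O(p,q)v$, and obtain the homeomorphism as the restriction of the linear isomorphism $\Psi$. You merely spell out the conjugation identity $\psi\,O(p,q)\,\psi^{-1}=O(\tilde p,\tilde q)$ and the continuity bookkeeping that the paper leaves implicit.
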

\begin{proof} Since $v\in\mathcal{V}$ and $\tilde{v}\in\tilde{\mathcal{V}}$ are Wick-rotatable tensors at $p$, then $\Psi^{-1}(\tilde{v})\in O(n,\mathbb{C})v$, where we may choose $\psi$ to be an isomorphism: $T_pM\cong T_p\tilde{M}$. Thus $\Psi^{-1}(\tilde{v})\in \mathcal{V}$, and so $$O(\tilde{g}_p(-,-), T_p\tilde{M})\cdot \tilde{v}\cong O(p,q)\cdot\Psi^{-1}(\tilde{v})=O(p,q)v:=O(g_p(-,-),T_pM)\cdot v.$$ \end{proof}

Thus for two Wick-rotated Riemannian real slices at a common point we have:

\begin{cor} Suppose $(M,g)$ and $(\tilde{M},\tilde{g})$ are Wick-rotated Riemannian slices at $p\in M\cap\tilde{M}$. Let $v\in\mathcal{V}$ and $\tilde{v}\in \tilde{\mathcal{V}}$ be two Wick-rotatable tensors at $p$. Then there is a diffeomorphism of embedded submanifolds: $$\mathcal{V}\supset O(g_p(-,-), T_pM))\cdot v\cong O(\tilde{g}_p(-,-), T_p\tilde{M})\cdot\tilde{v}\subset \tilde{\mathcal{V}}.$$ \end{cor}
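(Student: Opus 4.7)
The plan is to verify the hypotheses of Proposition \ref{a} in the Riemannian setting and then upgrade the resulting homeomorphism to a diffeomorphism.

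First I would observe that since both $(M,g)$ and $(\tilde{M},\tilde{g})$ are Riemannian, the pseudo-orthogonal groups $O(g_p(-,-),T_pM)$ and $O(\tilde{g}_p(-,-),T_p\tilde{M})$ are both isomorphic to the compact group $O(n)$, and they act as compact real forms of $O(n,\mb{C})$ on the respective tensor spaces. Consequently $\mathcal{V}$ and $\tilde{\mathcal{V}}$ play the role of compact real slices inside $\mathcal{V}^{\mb{C}}$.

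Next I would check the unique-real-orbit hypothesis in Proposition \ref{a}, namely $O(n,\mb{C})v\cap\mathcal{V}=O(n)\cdot v$. Using a $U$-invariant Hermitian inner product as in Corollary \ref{m}, compatibility with the compact real slice $\mathcal{V}$ yields
\[
\mathcal{V}\;\subseteq\;\mathcal{M}\big(O(n),\mathcal{V}\big)\;\subseteq\;\mathcal{M}\big(O(n,\mb{C}),\mathcal{V}^{\mb{C}}\big),
\]
so every $v\in\mathcal{V}$ is a minimal vector. By Theorem \ref{RS}(1), $O(n,\mb{C})v$ is closed, and by Theorem \ref{RS}(2) applied to $G^{\mb{C}}=O(n,\mb{C})$ with maximally compact subgroup $O(n)$,
\[
O(n,\mb{C})v\cap\mathcal{V}\;\subseteq\;O(n,\mb{C})v\cap\mathcal{M}\big(O(n,\mb{C}),\mathcal{V}^{\mb{C}}\big)\;=\;O(n)\cdot v,
\]
with the reverse inclusion obvious. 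Proposition \ref{a} now yields a homeomorphism $\Phi: O(\tilde{g}_p(-,-),T_p\tilde{M})\cdot\tilde{v}\to O(g_p(-,-),T_pM)\cdot v$, realized concretely as the restriction of the complex linear isomorphism $\Psi^{-1}$ (extending an isomorphism $T_pM^{\mb{C}}\cong T_p\tilde{M}^{\mb{C}}$) to the source orbit.

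Finally I would promote $\Phi$ to a diffeomorphism. Since $O(n)$ is compact, its orbits in the finite-dimensional real representations $\mathcal{V}$ and $\tilde{\mathcal{V}}$ are closed and embedded submanifolds, each diffeomorphic to the smooth homogeneous space $O(n)/O(n)_v$ via the orbit map. The map $\Psi$ is a complex linear isomorphism, hence in particular a real-analytic diffeomorphism $\mathcal{V}^{\mb{C}}\to\mathcal{V}^{\mb{C}}$, so its restriction $\Phi$ to an embedded submanifold landing in another embedded submanifold is smooth, with smooth inverse given by the restriction of $\Psi$. Thus $\Phi$ is the required diffeomorphism of embedded submanifolds.

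The only genuine point to take care with is the uniqueness of the real orbit inside $O(n,\mb{C})v\cap\mathcal{V}$; once this is secured via the minimal vector description above, everything else reduces to the standard facts that complex linear maps are smooth and that compact Lie group orbits are embedded submanifolds.
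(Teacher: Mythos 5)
Your proof is correct and follows essentially the same route as the paper: both reduce the statement to Proposition \ref{a} by establishing that the complex orbit meets $\mathcal{V}$ in a single real orbit, and then observe that the identification is realized by the linear map $\Psi^{-1}$ between compact embedded orbits. The only difference is that where the paper simply cites (\cite{RS}, Proposition 8.3.1) for the unique-real-orbit hypothesis, you verify it directly from Corollary \ref{m} and parts (1)--(2) of Theorem \ref{RS}, which is a valid and self-contained substitute.
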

\begin{proof} By (\cite{RS}, Proposition 8.3.1), we can apply Proposition \ref{a}, and the result follows. \end{proof}

\subsection{Wick-rotatable metrics} \label{Wickm}

Here we will consider two pseudo-Riemannian metrics $(M,g)$ and $(\tilde{M},\tilde{g})$ of possibly different signature and give sufficient conditions 
when such are Wick-rotated. 
\begin{prop}
Assume that $v\in V$ and $\tilde{v}\in\tilde{V}$ have closed orbits $Gv$ and $\tilde{G}\tilde{v}$, and that their polynomial invariants are identical. Then they are Wick-rotated in the sense that there is a $G^\C\supset G, \tilde{G}$ so that 
\[Gv\subset G^\C v\supset \tilde{G}\tilde{v} \]
\end{prop}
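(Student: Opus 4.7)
The plan is to reduce the statement to the classical separation theorem of complex GIT. First, using Lemma \ref{wick} (or equivalently the tensor construction of Section \ref{tensors}), I would embed both $V$ and $\tilde V$ into a common holomorphic inner product space $V^{\mb C}$ as compatible real slices, so that the isometry groups $G$ and $\tilde G$ sit as compatible real forms of the same complex orthogonal group $G^{\mb C}=O(n,\mb C)$. The real representations $\rho^G_V$ and $\rho^{\tilde G}_{\tilde V}$ are then both restrictions of a single complex representation $\rho^{\mb C}$ on $V^{\mb C}$, so $Gv$ and $\tilde G\tilde v$ sit inside the common complex orbits $G^{\mb C}v$ and $G^{\mb C}\tilde v$ respectively.

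Next I would lift the two closed real orbits to closed complex orbits. Since $G$ and $\tilde G$ are linearly real reductive (or at worst the real points of a reductive algebraic group in the low-dimensional orthogonal case), Proposition \ref{linearly} implies that both $G^{\mb C}v$ and $G^{\mb C}\tilde v$ are closed in $V^{\mb C}$. This is where the closedness hypothesis on $Gv$ and $\tilde G\tilde v$ is used, and it sets up the complex GIT picture in which closed orbits of a linearly complex reductive group are separated by invariants.

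The heart of the argument is then a dictionary between real and complex invariants. Any $f\in\R[V]^G$ extends uniquely to a holomorphic polynomial on $V^{\mb C}$, and because $V\subset V^{\mb C}$ is totally real and $G\subset G^{\mb C}$ is Zariski dense, this extension automatically lies in $\mb C[V^{\mb C}]^{G^{\mb C}}$; conversely, the real and imaginary parts of any $F\in \mb C[V^{\mb C}]^{G^{\mb C}}$ restrict to $G$-invariant real polynomials on $V$. The same dictionary holds for $\tilde V$ and $\tilde G$. Hence the hypothesis that the polynomial invariants of $v$ and $\tilde v$ are identical upgrades to the statement $F(v)=F(\tilde v)$ for every $F\in \mb C[V^{\mb C}]^{G^{\mb C}}$. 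Since $G^{\mb C}$ is linearly complex reductive and both $G^{\mb C}v$ and $G^{\mb C}\tilde v$ are closed, the classical Kempf--Ness separation theorem forces $G^{\mb C}v=G^{\mb C}\tilde v$, which gives the required chain $Gv\subset G^{\mb C}v\supset \tilde G\tilde v$.

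The main obstacle is making precise what ``identical polynomial invariants'' should mean when $v\in V$ and $\tilde v\in\tilde V$ a priori live in distinct real vector spaces of possibly different signatures. The natural reading, and the one needed above, is that $\R[V]^G$ and $\R[\tilde V]^{\tilde G}$ are identified through their simultaneous complexification $\mb C[V^{\mb C}]^{G^{\mb C}}$, with the assumption being that evaluation at $v$ and at $\tilde v$ agree under this identification. Once this identification is pinned down via Lemma \ref{wick}, the remainder is a direct application of Kempf--Ness style separation, and I expect no further technical difficulty.
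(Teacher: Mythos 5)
Your proposal is correct and follows essentially the same route as the paper's (very terse) proof: lift closedness of the real orbits to closedness of the complex orbits via Proposition \ref{linearly}/Theorem \ref{clos}, then use that invariants separate closed orbits of a linearly complex reductive group to conclude $G^{\C}v=G^{\C}\tilde v$. The extra material you supply --- the common embedding via Lemma \ref{wick} and the dictionary $\R[V]^G\leftrightarrow\C[V^{\C}]^{G^{\C}}$ that makes ``identical invariants'' precise --- is exactly what the paper leaves implicit, and it is handled correctly.
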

\begin{proof}
Since $v$ and $\tilde{v}$ have identical invariants and their corresponding orbits are closed, then due to Thm. \ref{clos}, then the corresponding complex orbits are closed too. Then, since the invariants separate the complex orbits, the complex orbits are identical and the result follows. 
\end{proof}
Let now $\mathcal{V}^{(k)}$ be the vector space associated with the components of tensors, ref. Section \ref{tensors}, so that $\bigoplus_{i=0}^k\nabla^{(i)}\text{Riem}\in \mathcal{V}^{(k)}$, where $\nabla^{(i)}\text{Riem}$ indicates the $i$th covariant derivative of the Riemann curvature tensor. Then: 
\begin{thm}
Let $v^k \in \mathcal{V}^{(k)}$ and $\tilde{v}^k\in \tilde{\mathcal{V}}^{(k)}$ be the curvature tensors of $(M,g)$ and $(\tilde{M},\tilde{g})$, respectively. Assume that there exists points $q\in M$ and $\tilde{q}\in\tilde{M}$ so that the corresponding orbits $Gv^k$ and $\tilde{G}\tilde{v}^k$ are closed and their invariants are identical for all $k$. Then the metrics are Wick-rotated w.r.t a common point $q=\tilde{q}$.
\end{thm}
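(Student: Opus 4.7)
The plan is to combine the preceding proposition, applied at every finite jet level, with a Noetherian descent argument in $O(n,\mathbb{C})$, in order to produce a single complex-linear isometry that matches the full infinite jet of the Riemann tensor; then real analyticity of the two metrics will upgrade this jet-level match to a common holomorphic Riemannian germ.

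First I would set up a common ambient complex tensor space. Because the metrics are real analytic, there are holomorphic Riemannian germs $(M^{\mathbb{C}},g^{\mathbb{C}})$ at $q$ and $(\tilde{M}^{\mathbb{C}},\tilde{g}^{\mathbb{C}})$ at $\tilde{q}$. Using Lemma \ref{wick} I identify $T_qM^{\mathbb{C}}$ and $T_{\tilde q}\tilde{M}^{\mathbb{C}}$ with the standard holomorphic inner product space $(\mathbb{C}^n,\langle\cdot,\cdot\rangle)$, so that the real slices $T_qM$ and $T_{\tilde q}\tilde{M}$ of signatures $(p,q)$ and $(\tilde p,\tilde q)$ are embedded compatibly; consequently $v^k$ and $\tilde v^k$ both live inside one common complex tensor space $\mathcal{V}^{(k),\mathbb{C}}$. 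The preceding proposition, applied level by level using the hypotheses on closedness of orbits and identity of invariants, then gives
\[O(n,\mathbb{C})\cdot v^k=O(n,\mathbb{C})\cdot\tilde v^k\quad\text{for every }k\geq 0.\]

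Next I would convert this family of equalities into a single matching element. Define
\[S_k:=\bigl\{h\in O(n,\mathbb{C})\mid h\cdot v^k=\tilde v^k\bigr\}.\]
Each $S_k$ is nonempty by the above and Zariski-closed in $O(n,\mathbb{C})$, since the action is polynomial and a single point is Zariski-closed. The sequence $(S_k)_{k\geq 0}$ is decreasing. Because $O(n,\mathbb{C})$ is a Noetherian algebraic variety, this descending chain of Zariski-closed subsets stabilises at some $K$, so $\bigcap_{k\geq 0}S_k=S_K\neq\emptyset$. Any $g$ in this intersection simultaneously sends $\nabla^{(i)}\mathrm{Riem}|_q$ to $\nabla^{(i)}\widetilde{\mathrm{Riem}}|_{\tilde q}$ for every $i\geq 0$.

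Finally I would propagate $g$ from the infinite jet to a germ. Composing $g$ with the holomorphic exponential maps of $(M^{\mathbb{C}},g^{\mathbb{C}})$ at $q$ and of $(\tilde{M}^{\mathbb{C}},\tilde{g}^{\mathbb{C}})$ at $\tilde q$ produces a biholomorphism between small neighbourhoods of $q$ and $\tilde q$. The classical fact that a real analytic (hence holomorphic) metric is determined on a normal neighbourhood of a point by the infinite jet of its Riemann tensor, because the Taylor coefficients of the metric in normal coordinates are universal polynomial expressions in the covariant derivatives of curvature, implies this biholomorphism is a holomorphic isometry. Identifying $q$ with $\tilde q$ along it realises $(M,g)$ and $(\tilde M,\tilde g)$ as real slices of one and the same holomorphic Riemannian germ, which is precisely a Wick-rotation in the sense of the paper. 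The main obstacle is the step producing a single $g\in\bigcap_k S_k$ from a choice $g_k\in S_k$ at each finite level: the $g_k$ need not converge, and no purely topological compactness argument appears to force nonemptiness of the intersection. What saves the argument is that $O(n,\mathbb{C})$ carries an algebraic structure in which every $S_k$ is Zariski-closed, so the descending chain stabilises by the Noetherian property; the subsequent step from the infinite jet to a genuine germ is conceptually cleaner and rests on the standard representation of holomorphic metrics by their curvature jets in normal coordinates.
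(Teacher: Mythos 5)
Your proposal is correct and follows the same overall route as the paper: apply the preceding proposition at each order $k$ to place $v^k$ and $\tilde v^k$ in the same $O(n,\C)$-orbit, then use real analyticity to realise both metrics as real slices of a single holomorphic Riemannian germ. The difference is in rigor at the two places where the paper is terse. The paper simply asserts that the neighbourhoods of $q$ and $\tilde q$ can be embedded in a common complex neighbourhood on which the complexified curvature data extend to an analytic metric $g^{\C}$; it never addresses why the identifications witnessing $\tilde v^k\in O(n,\C)\cdot v^k$ at the various orders can be chosen coherently. Your Noetherian descent on the Zariski-closed, nonempty, nested sets $S_k=\{h: h\cdot v^k=\tilde v^k\}$ supplies exactly the missing element: a single $g\in O(n,\C)$ matching the entire curvature jet. (Do note that the nesting $S_{k+1}\subseteq S_k$ uses that $\mathcal{V}^{(k+1),\C}=\mathcal{V}^{(k),\C}\oplus V^{(k+1),\C}$ is an $O(n,\C)$-invariant decomposition, which holds here since the action preserves each tensor-type summand.) Your second step — conjugating $g$ by the holomorphic exponential maps and invoking the Cartan normal-coordinate expansion of an analytic metric in terms of its curvature jet — makes explicit the "maximal analytic extension to a metric $g^{\C}$" that the paper only gestures at. So the two proofs buy the same theorem, but yours closes a genuine gap in the published argument; the only caveat is that the jet-determines-germ step should be cited or stated for holomorphic (rather than merely real analytic) metrics, which is harmless since the normal-coordinate identities are formal and the geodesic flow of $g^{\C}$ is a holomorphic ODE.
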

\begin{proof}
By the above proposition,  $Gv^k$ and $\tilde{G}\tilde{v}^k$ are Wick-rotated for a $q\in M$ and $\tilde{q}\in \tilde{M}$,  for all $k$. Since the metrics are real analytic, there exists neighbourhoods $U\subset M$ and $\tilde{U}\subset \tilde{M}$, of $q$ and $\tilde{q}$ respectively, which can be embedded into a complex neighbourhood $U^{\C}$ so that $q=\tilde{q}\in U^\C$. The real analytic structure can now be extended to an analytic structure on $U^\C$ and the complexified orbit $G^\C v^k$ at $q=\tilde{q}$ give rise to complex curvature tensors. These can now be (maximally) analytically extended to an analytic metric $g^\C$ on a neighbourhood in $U^\C$ (for simplicity, call this neighbourhood $U^\C$). These real analytic structures $(U,g|_U)$ and $(\tilde{U},\tilde{g}|_{\tilde{U}})$ thus are Wick-rotated, both being restrictions of the complex holomorphic $(U^\C,g^\C)$. Since Wick-rotation is a local criterion, the theorem now follows. 
\end{proof}
This implies that for closed orbits, the metrics are necessarily Wick-rotated as long as their invariants are identical. If the orbits are not closed, we have a result which is point-wise. By evaluating the curvature tensors at a point, we can use the following result. 
\begin{thm}
Assume that 	$v\in V$ and $\tilde{v}\in\tilde{V}$ have identical invariants. Then there exist $p\in\mathfrak{p}$ and $\tilde{p}\in\tilde{\frak{p}}$ so that $v_0:=\lim_{t\rightarrow\infty}e^{tp}\cdot v$ and $\tilde{v}_0:=\lim_{t\rightarrow\infty}e^{t\tilde{p}}\cdot\tilde{v}$ are Wick-rotated. 
\end{thm}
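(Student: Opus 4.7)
The plan is to flow $v$ and $\tilde v$ out to the boundary of their orbits to produce closed orbits, and then invoke the proposition immediately preceding this theorem.

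First, I would apply Theorem \ref{RS}(3) to the real representation $\rho^G_V$ and the vector $v$: this yields $p \in \mf{p}$ such that the limit $v_0 := \lim_{t\to\infty} e^{tp}\cdot v$ exists in $V$, and the real orbit $G v_0 \subset \overline{Gv}$ is closed (in fact it is the unique closed $G$-orbit lying in the closure). Symmetrically, applying Theorem \ref{RS}(3) to $\rho^{\tilde G}_{\tilde V}$ and $\tilde v$ produces $\tilde p \in \tilde{\mf p}$ and $\tilde v_0 := \lim_{t\to\infty} e^{t\tilde p}\cdot \tilde v$ with $\tilde G \tilde v_0$ closed in $\tilde V$.

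Next, I would verify that $v_0$ and $\tilde v_0$ inherit identical polynomial invariants from $v$ and $\tilde v$. Every $G^{\mb C}$-invariant polynomial $P$ on $V^{\mb C}$ restricts to a $G$-invariant polynomial on $V$, hence is constant along the orbit $G v$; being polynomial it is continuous, so it is constant on $\overline{Gv}$, giving $P(v_0)=P(v)$. The same argument on the $\tilde G$-side gives $P(\tilde v_0)=P(\tilde v)$. The hypothesis that $v$ and $\tilde v$ have identical invariants then transfers verbatim to $v_0$ and $\tilde v_0$.

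Finally, because $G v_0$ and $\tilde G \tilde v_0$ are now closed real orbits with identical polynomial invariants, the preceding proposition applies to them directly and produces the inclusion $G v_0 \subset G^{\mb C} v_0 \supset \tilde G \tilde v_0$, which is precisely the statement that $v_0$ and $\tilde v_0$ are Wick-rotated. The only delicate point I anticipate is making precise how polynomial invariants evaluated at vectors in the two distinct real forms $V$ and $\tilde V$ can be compared; this is handled by viewing invariants as $G^{\mb C}$-invariants on the ambient $V^{\mb C}$ (which restrict consistently to both real forms) and using Theorem \ref{clos} together with Proposition \ref{linearly} to pass between real and complex orbits when needed.
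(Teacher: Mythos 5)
Your proposal is correct and follows essentially the same route as the paper: apply Theorem \ref{RS}(3) to flow $v$ and $\tilde{v}$ to limits with closed orbits, note that polynomial invariants are constant on orbit closures so the limits still have identical invariants, and then invoke the preceding proposition on closed orbits. The paper's own proof is just a terser version of exactly this argument (also noting the uniqueness of the closed orbit in each closure).
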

\begin{proof}
This follows from Theorem \ref{RS}, and the fact that a point in the orbit, $x\in Gv$ and any point its closure $x_0\in\overline{Gv}$ have identical invariants. Since there is a unique closed orbit in the closure $\overline{Gv}$, the result follows. 
\end{proof}
Note that we say that a metric $(M,g)$ is characterised by its invariants is exactly when $v$ has a closed orbit. This implies that for two metrics being characterised by its invariants which have identical invariants are related by Wick rotations. 

\subsection{Universal metrics}
A pseudo-Riemannian metric is called \emph{universal} if  all conserved symmetric rank-2 tensors constructed
  from the metric, the Riemann tensor and its covariant derivatives are multiples of the metric.
Hence, universal metrics are metrics which obey $T_{\mu\nu}=\lambda g_{\mu\nu}$, for all symmetric conserved tensors $T_{\mu\nu}$ constructed from the metric and the curvature tensors (recall that conserved implies $\nabla^{\mu}T_{\mu\nu}=0$) \cite{CGHP}. We note that this constuction can be lifted holomorphically to the holomorphic Riemannian manifold and thus implies $T^\C_{\mu\nu}=\lambda g^\C_{\mu\nu}$. Thus, universality is \emph{preserved under Wick rotation}. This straight-forwardly leads to:
\begin{prop}
Assume that $(M,g)$ and $(\tilde{M},\tilde{g})$ are two Wick-rotated pseudo-Riemannian manifolds. Then $(M,g)$ is universal if and only if $(\tilde{M},\tilde{g})$ is universal. 
\end{prop}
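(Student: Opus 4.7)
The plan is to upgrade the remark preceding the proposition into a proof by making precise the fact that every conserved symmetric rank-$2$ tensor constructed from $g$, $\mathrm{Riem}$, and its covariant derivatives admits a holomorphic lift to the ambient complex manifold, and then to use a real-analytic/identity-theorem argument to transfer a pointwise equality from one real slice to the other. Since $(M,g)$ and $(\tilde M,\tilde g)$ are Wick-rotated, they both embed as real slices of a common holomorphic Riemannian manifold $(M^{\C},g^{\C})$, and the Levi-Civita connection, Riemann tensor, and all covariant derivatives on $M$ are restrictions of their holomorphic counterparts on $M^{\C}$; the same holds for $\tilde M$.

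First I would fix a conserved symmetric $T_{\mu\nu}$ on $M$ built by some universal tensor-polynomial recipe (contractions of tensor products) from $g$, $\mathrm{Riem}$, and the $\nabla^{(k)}\mathrm{Riem}$. Applying the same recipe to $g^{\C}$, $\mathrm{Riem}^{\C}$, and $(\nabla^{\C})^{(k)}\mathrm{Riem}^{\C}$ produces a holomorphic symmetric $(0,2)$-tensor $T^{\C}_{\mu\nu}$ on $M^{\C}$ whose restriction to $M$ is $T_{\mu\nu}$. The covariant divergence lifts in the same way, so the hypothesis $\nabla^{\mu}T_{\mu\nu}=0$ on $M$, combined with real-analyticity and the fact that $M$ is a totally real slice, forces the holomorphic identity $(\nabla^{\C})^{\mu}T^{\C}_{\mu\nu}=0$ on $M^{\C}$. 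Restricting back to $\tilde M$ yields a conserved symmetric tensor $\tilde T_{\mu\nu}$ on $\tilde M$ built from $\tilde g$ and $\widetilde{\mathrm{Riem}}$ by the same universal recipe, and every such conserved tensor on $\tilde M$ arises this way by symmetry.

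Assume now that $(M,g)$ is universal, so $T_{\mu\nu}=\lambda\, g_{\mu\nu}$ on $M$. Because $T$ is built from curvature data, the proportionality factor $\lambda$ is itself a real-analytic scalar invariant of $g$, and therefore extends to a holomorphic scalar $\lambda^{\C}$ on $M^{\C}$. The relation $T^{\C}_{\mu\nu}-\lambda^{\C} g^{\C}_{\mu\nu}=0$ is a holomorphic tensor equation which vanishes on the totally real slice $M$; by the identity theorem applied componentwise in any local holomorphic chart, it must vanish on all of $M^{\C}$. Restricting to $\tilde M$ gives $\tilde T_{\mu\nu}=\tilde\lambda\,\tilde g_{\mu\nu}$ for every conserved $\tilde T$ constructed from $\tilde g$ and $\widetilde{\mathrm{Riem}}$, proving that $(\tilde M,\tilde g)$ is universal. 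The converse is identical by the symmetry of the Wick-rotation relation.

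The main obstacle is the bookkeeping at the first step: one must verify that the class of ``symmetric rank-$2$ conserved tensors constructed from $g$, $\mathrm{Riem}$, and its covariant derivatives'' really is closed under holomorphic extension and under restriction to either real slice. Once this is granted, the transfer of the identity $T=\lambda g$ is a clean application of the identity theorem for holomorphic functions on the totally real slice $M\subset M^{\C}$, and the proposition follows as sketched.
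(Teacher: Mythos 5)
Your argument is correct and is essentially the paper's own: the paper proves this proposition only by the remark immediately preceding it, namely that any conserved symmetric rank-$2$ tensor built from the metric and curvature lifts holomorphically to $(M^{\C},g^{\C})$, so $T_{\mu\nu}=\lambda g_{\mu\nu}$ propagates to $T^{\C}_{\mu\nu}=\lambda g^{\C}_{\mu\nu}$ and then restricts to the other real slice. Your version simply makes explicit the identity-theorem step on the totally real slice that the paper leaves implicit.
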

In the Riemannian case, all such metrics are classified. Indeed, all Riemannian universal spaces are locally homogeneous space where the isotropy group acts irreducibly on the tangent space \cite{Bleecker}. In other signatures this is no longer true as there are universal examples of both Kundt and Walker type which are not locally homogeneous \cite{CGHP, universal}. It is, however, interesting to study those that are Wick-rotatable to the Riemannian case and relate these to the irreducibly-acting isotropy group. 

As an example, consider the following four-dimensional Riemannian metric, 
\beq
g=g_{S^2}\oplus g_{S^2},
\eeq
where $g_{S^2}$ is the unit metric on the sphere. This has an isotropy group $O(2)\times O(2)\times \mathbb{Z}_2$, where the $\mathbb{Z}_2$ interchanges the two spheres. Each of the two spheres can be Wick-rotated to other two-dimensional spaces of constant curvature:
\[ g_{S^2}\longmapsto g_{dS}~~ (-+), \qquad -g_{AdS}~~(+-), \qquad -g_{H^2}~~(--),\]
where $(A)dS$ is (anti-)de Sitter space, and $H^2$ is the unit hyperbolic space. These can now be combined in various ways to get various Wick-related spaces being universal. For example, 
\[ g_N=g_{S^2}\oplus (-g_{H^2})\] 
is a universal metric of neutral signature, and  
\[ g_L=g_{S^2}\oplus g_{dS}\] 
is a universal metric of Lorentzian signature. 
We note that the interchange symmetry $\mathbb{Z}_2$ of the Riemannian metric is not necessarily an isotropy of the Wick-related metrics. Indeed, in both examples above, there exist vectors $X\in T_pM$ so that $g_{\bullet}(X,X)>0$, while $g_\bullet(A(X),A(X))<0$, where $A(X)$ is the action of the non-trivial element of $\mathbb{Z}_2$ on $X$. Thus, the $\mathbb{Z}_2$ action cannot be an isotropy of $g_\bullet$. On the other hand, the symmetry $\mathbb{Z}_2$ preserves the signature and maps metrics onto other Wick-related metrics. 

\subsection{On the set of tensors with identical invariants}

Let $(M,g)$ be a pseudo-Riemannian manifold and denote $v^{(l)}:=\nabla^{(l)}\text{Riem}$ for the $l$th covariant derivative of the Riemann tensor at a fixed point $p\in M$. Define $V^{(l)}$ to be a tensor product space for which $v^{(l)}\in V^{(l)}$. Set $\mathcal{V}^{(k)}:=\bigoplus_{l=0}^k V^{(l)}$, then it contains all the covariant derivatives $v^{(l)}$, up to order $k$ of the Riemann tensor at the fixed point $p$. The isometry group $O(p,q)$ of the pseudo-inner product $g(-,-)$ (at $p$) acts on $\mathcal{V}^{(k)}$ by the tensor product action (as defined in Section \ref{tensors}). Consider the algebra of polynomial invariants $\mathbb{R}[\mathcal{V}^{(k)}]^{O(p,q)}$ of the action. Let $I$ be the polynomial invariants restricted to the set of all the covariant derivatives of the Riemann tensor. Then $I$ is defined to be the set of \emph{polynomial curvature invariants}. Moreover let $I_k$ denote the polynomial invariants $\mathbb{R}[\mathcal{V}^{(k)}]^{O(p,q)}$ restricted to the set of all the $v^{(l)}$ up to $k$th order. Moreover, $I=I_k$ is finitely generated \cite{GW}, which means that we can find a finite number of generators for $I$: $I=\langle f_1,f_2,\dots, f_N\rangle$. Set $\mathcal{V}:=\mathcal{V}^{(k)}$ then the set of invariants $I$ defines a polynomial function: $$\mathfrak{I}: ~\mathcal{V}\longrightarrow \R^N, \ \ v\mapsto (f_1(v),\dots, f_N(v)).$$ We recall that the space $(M,g)$ is said to be VSI if $I=\{0\}$, and is said to be $VSI_k$ if $I_k=0$.

Let $x_p\in\mathcal{V}$, and consider the set $S:=\mathfrak{I}^{-1}(\mathfrak{I}(x_p))\subset \mathcal{V}$, which is the set of all tensors in $\mathcal{V}$ having identical invariants as $x_p$. 

Let $S_i$ be the connected components of $S$ so that $S=\cup_{i=1}^nS_i$, and $S_i\cap S_j=\emptyset,~ i\neq j$. 
Then, regarding  the topology of the set $S$:
\begin{prop} Let $S$ be as above. Then:
\begin{enumerate}
\item{} If $\mathfrak{I}(S)=0$ (VSI), then $S$ is connected, and $\{0\}\subset S$ is the unique closed orbit in $S$. 
\item{} If $S$ consists of $n\geq 1$ connected components, $S_i$, then there exist unique closed orbits $Gx_i\subset S_i$, for each $i=1,...,n$. These closed orbits are necessarily Wick-rotated. 
\end{enumerate}
\end{prop}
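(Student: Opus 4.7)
The plan is to combine Theorems \ref{RS} and \ref{clos} with the separation of closed complex orbits by polynomial invariants, and to use a Kempf--Ness retraction for the uniqueness statement in Part (2). The feature I would exploit throughout is the $G$-invariance of the polynomials in $\mathfrak{I}$: any path of the form $t\mapsto e^{tp}\cdot v$ with $p\in\mathfrak{p}$ remains in $S$, and hence by continuity in the connected component of $S$ containing $v$.

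For Part (1), I would first note that the complex nullcone of the reductive group $G^{\mb{C}}$ has $\{0\}$ as its unique closed orbit, so Theorem \ref{clos} forces $\{0\}$ to be the only closed real orbit in $S=\mathfrak{I}^{-1}(0)$. For connectedness, I would pick any $v\in S$; if $v\neq 0$ then $Gv$ is non-closed, so Theorem \ref{RS}(3) supplies a $p\in\mathfrak{p}$ with $e^{tp}\cdot v\to \alpha$ and $G\alpha$ closed in $\overline{Gv}\subset S$, forcing $\alpha=0$. The resulting path $[0,\infty]\to S$ then connects $v$ to $0$ inside $S$.

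For Part (2), the same retraction argument will show that each connected component $S_i$ contains at least one closed $G$-orbit: for $v\in S_i$ the path $t\mapsto e^{tp}\cdot v$ stays in $S_i$ and limits to some $\alpha$ with $G\alpha$ closed in $S_i$. Since polynomial invariants separate closed orbits of $G^{\mb{C}}$, all closed real orbits in $S$ lie in one common complex orbit $G^{\mb{C}}x$, and Theorem \ref{clos} writes $G^{\mb{C}}x\cap V=Gx_1\sqcup\cdots\sqcup Gx_m$ as a finite disjoint union of closed real orbits. This immediately yields the ``Wick-rotated'' conclusion via Definition \ref{aa}: any two closed orbits in $S$ share the common complex orbit $G^{\mb{C}}x$.

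The hard part will be establishing \emph{uniqueness} of the closed orbit in each component, which I plan to prove via the Kempf--Ness gradient flow. I would endow $V^{\mb{C}}$ with a $U$-invariant Hermitian inner product compatible with $V$ (Lemma \ref{b}) and consider $\|\cdot\|^2$ restricted to $S$. The negative gradient flow preserves $G$-orbits, hence $S$, and should produce a deformation retract of $S$ onto the minimal vectors $\mc{M}(G,V)\cap S=\bigsqcup_j Kx_j$; here Theorem \ref{RS}(2) guarantees that the intersection of each closed $Gx_j$ with the minimal vectors is a single $K$-orbit, and these $K$-orbits are pairwise closed and disjoint. Each component $S_i$ must then retract onto a connected subset of $\bigsqcup_j Kx_j$, forcing exactly one closed orbit per component. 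The main obstacle I expect is the careful justification of this retract when $K$ is disconnected (as for $K=O(p)\times O(q)\subset O(p,q)$), where one will likely need the full stratification-by-minimal-vectors machinery from the Kempf--Ness/Richardson--Slodowy circle to ensure that the flow does not mix distinct $K$-components across different $S_i$.
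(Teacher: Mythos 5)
Your Part (1) and the existence and Wick-rotation claims in Part (2) are sound. The direct path argument for connectedness in the VSI case (flowing any $v\in S$ to $0$ along $t\mapsto e^{tp}\cdot v$ and using that the limit lies in $\overline{Gv}\subset S$) is in fact more explicit than the paper's, which obtains Part (1) only as a corollary of the uniqueness statement in Part (2). Likewise, identifying all closed real orbits in $S$ as the finitely many real orbits inside the single closed complex orbit $G^{\mb{C}}x$ (via separation of closed complex orbits by invariants plus Theorem \ref{clos}) is exactly the right justification of the ``necessarily Wick-rotated'' claim, which the paper asserts without spelling out.

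The genuine gap is where you anticipate it: uniqueness of the closed orbit in each component $S_i$. The Kempf--Ness retraction you invoke --- a \emph{continuous} deformation retract of $S$ onto $\mathcal{M}(G,V)\cap S=\bigsqcup_j Kx_j$ --- is not supplied by anything you cite. Theorem \ref{RS}(3) gives, for each individual $v$, convergence of $e^{tp}\cdot v$ along a single one-parameter subgroup, with no continuity in $v$; upgrading this to a continuous limit map (Neeman/Schwarz-type results in the complex algebraic case, and their real reductive analogues) is heavy machinery that the paper neither cites nor needs, and it is precisely the step you leave open. The paper closes uniqueness with an elementary point-set argument that you could substitute directly: suppose $A_1,A_2$ are two distinct closed orbits in $S_i$, let $U_I$ ($I=1,2$) be the union of all orbits $Gx\subset S_i$ with $A_I\cap\overline{Gx}\neq\emptyset$, and set $C:=\overline{U_1}\cap\overline{U_2}$. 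If $C\neq\emptyset$, it is closed and contains no closed orbit (an orbit in $U_I$ has $A_I$ as the \emph{unique} closed orbit in its closure), yet any non-closed orbit in $C$ carries a closed orbit in its closure inside $C$ --- contradiction. If $C=\emptyset$, then $S_i\setminus(U_1\cup U_2)$ is nonempty (otherwise $S_i$ would be disconnected by the separated sets $U_1,U_2$), and any orbit there has a closed boundary orbit $A_3$ distinct from $A_1,A_2$; iterating produces infinitely many closed orbits in $S_i$, contradicting the finiteness you already established. This uses only the finiteness of closed orbits in $S$, the uniqueness of the closed orbit in each orbit closure, and basic topology --- no gradient flow, and no issue with $K$ being disconnected.
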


\begin{proof}
First we note that since the function $\mathfrak{I}$ is polynomial, the set $S$ is closed in $\mathcal{V}$. Recall also that there is a finite number of closed orbits in $S$, hence, also in each component $S_i$. Furthermore, each non-closed orbit has a unique closed orbit on its boundary. 

Let $S_i$ be one of the connected components and assume that  $A_1$ and $A_2$ are two disjoint closed orbits in $S_i$; $A_1,A_2\subset S$. Let \[ U_I:=\bigcup \left\{Gx\subset S_i: A_I\cap \overline{Gx}\neq \emptyset \right\}, \quad I=1,2; \] 
i.e., the union of sets having $A_I$ as part of their closure. Consider the intersection $V=\overline{U_1}\cap\overline{U_2}$. There are now two possibilities: 
\par $V\neq \emptyset$: Since the intersection of two closed sets are closed, $V$ is nonempty and closed. Moreover, there are no orbits $Gx$ in $V$ which are closed since orbits in $U_I$ have a unique closed orbit on their boundary (namely $A_I$). Choose therefore a non-closed orbit $Gx$ in $V$. Then $\overline{Gx}$ contains a (unique) closed orbit, but since this is necessarily in $V$, this leads to a contradiction. 
\par $V=\emptyset$: Then $U_1$ and $U_2$ are disconnected. Define $W:=S_i\setminus U_1\cup U_2$ which is necessarily nonempty. Using the same argument as above, there needs to be  a non-closed orbit  in $W$ with a closed orbit $A_3\subset \overline{W}$ in its closure. Note that $A_3$ cannot be $A_1$ or $A_2$, because then the non-closed orbit in $W$ should have been in $U_I$ (hence, not in $W$). We can now do the same as above and define 
\[ U_3:=\bigcup \left\{Gx\subset S_i: A_3\cap \overline{Gx}\neq \emptyset \right\}.\] 
Then this again implies that there is another closed orbit $A_4$, etc. This must terminate since there is a finite number of closed orbits in $S_i$. Hence, this leads to a contradiction and the closed orbit in $S_i$ is thus unique. 

The first part of the proposition now follows since $\{0\}$ is obviously the only closed orbit in $S$ for which $\mathfrak{I}(S)=0$. 
\end{proof}

So in the sense of curvature tensors with identical invariants, each component $S_i$ is characterised by its unique closed orbit. Of course, this closed orbit could be $S_i$ itself (which it would be in the Riemannian case), but in the pseudo-Riemannian case more complicated structures of $S_i$ are possible. We should also recall that this is the  structure at a point $p\in M$. To study the structure in a neighbourhood of $M$ is a considerably more difficult task. 
\\

Consider now a Wick-rotation at $p$: $(M,g)\subset (M^{\mb{C}},g^{\mb{C}})\supset (\tilde{M},\tilde{g})$. Let $v\in \mathcal{V}$ and $\tilde{v}\in \tilde{\mathcal{V}}$ be the curvature tensors of covariant derivatives of the Riemann tensors (respectively) of $l$th order, i.e $\tilde{v}\in O(n,\mb{C})v$ are Wick-rotatable. Consider the function $\mathfrak{I}$ defined for $(M,g)$ as above. Then we can also consider (in exactly the same way as for $\mathcal{V}$ above) a function: $$\tilde{\mathfrak{I}}: ~\tilde{\mathcal{V}}\longrightarrow \R^{\tilde{N}},$$ for $\tilde{\mathcal{V}}$, and $\tilde{I}=\tilde{I}_{\tilde{k}}$ is generated by a finite set of generators of $\mathbb{R}[\tilde{\mathcal{V}}]^{{O(\tilde{p},\tilde{q})}}$ (restricted to the covariant derivatives of the Riemann tensor up to some $\tilde{k}$th order) w.r.t the action of $O(\tilde{p},\tilde{q})$ on $\tilde{\mathcal{V}}$. Let $1\leq l\leq max\{k,\tilde{k}\}$ where $k$ is as above. Define analogously: $$\tilde{S}:=\tilde{\mathfrak{I}}^{-1}(\tilde{\mathfrak{I}}(\tilde{v}))=\cup_{j=1}^m\tilde{S}_j,$$ where $\tilde{S}_j$ are the connected components. Set $G:=O(p,q), \ \tilde{G}:=O(\tilde{p},\tilde{q})$ and $G^{\mb{C}}:=O(n,\mb{C})$, and denote by the previous proposition $\tilde{G}\tilde{v}_j$ (respectively $Gv_i$) for the unique closed orbits in each component $\tilde{S}_j$(respectively $S_i$). 

By Section 6.1 and the notation there, we can choose $\Psi(-):=g\cdot -$ for some $g\in G^{\mb{C}}$ such that $Gv\sim g(\tilde{G})\cdot (g\cdot\tilde{v})$ are compatible real orbits, where $\rho^{\tilde{G}}_{\tilde{\mathcal{V}}}\cong \rho^{g(\tilde{G})}_{g(\tilde{\mathcal{V})}}$ as real representations via $g$. Now the map $\Psi$ is a morphism of affine complex varieties, and therefore the algebra of polynomial invariants: $\mathbb{R}[\tilde{\mathcal{V}}]^{\tilde{G}}\cong \mathbb{R}[g(\tilde{\mathcal{V}})]^{g(\tilde{G})}$ of the actions are related precisely via the action of $g$. Thus the set $\tilde{S}$ is mapped to $g\cdot \tilde{S}$, and $\tilde{I}$ is mapped to $g\cdot \tilde{I}$ and so on.

We have the following result: 

\begin{cor} Let $S$ and $\tilde{S}$ be defined as above, and $\tilde{v}\in O(n,\mb{C})v$ be as above. Then
\begin{enumerate}

\item{} For all $1\leq i\leq n$ and $1\leq j\leq m$ the tensors: $v_i$ and $\tilde{v}_j$ are Wick-rotatable tensors.
\item{} $\mathfrak{I}(S)=0\Leftrightarrow\tilde{\mathfrak{I}}(\tilde{S})=0$. In particular if $\mathfrak{I}(S)=0$ then $S\cap\tilde{S}\neq \emptyset$.
\item{} If $(\tilde{M},\tilde{g})$ is Riemannian, (i.e $\tilde{G}:=O(n)$ is compact), then there exist a $g\in O(n,\mb{C})$ such that $S_i\cap g\cdot \tilde{S}\neq \emptyset$ for all $1\leq i\leq n$. Thus for all $1\leq i\leq n$ there exist Cartan involutions $\theta_i$ of the metric $g(-,-)$ such that $\theta_i\cdot v_i=v_i$.

\end{enumerate} \end{cor}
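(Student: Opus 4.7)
The plan is to handle the three parts in turn, using throughout that everything takes place inside the single complex orbit $O(n,\mb{C})v=O(n,\mb{C})\tilde v$. For part (1), I would start by noting that each $v_i$ lies in $S$ and therefore shares invariants with $v$, and similarly each $\tilde v_j$ shares invariants with $\tilde v$; since $v$ and $\tilde v$ themselves lie in a common complex orbit under $O(n,\mb{C})$, their complex invariants agree, so the $v_i$ and $\tilde v_j$ have identical complex invariants as well after the identification furnished by $\Psi$. The orbits $Gv_i$ and $\tilde G\tilde v_j$ are closed by construction as the unique closed orbits of $S_i$ and $\tilde S_j$, hence by Theorem \ref{clos} the complex orbits $O(n,\mb{C})v_i$ and $O(n,\mb{C})\tilde v_j$ are closed too; since closed complex orbits are separated by invariants, $O(n,\mb{C})v_i=O(n,\mb{C})\tilde v_j$, which is exactly Definition \ref{h}.

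Part (2) is immediate: if $\mathfrak I(S)=0$ then $\mathfrak I(v)=0$, and as $\tilde v$ sits in the same complex orbit as $v$ the complex invariants vanish on $\tilde v$ as well, giving $\tilde{\mathfrak I}(\tilde v)=0$ and hence $\tilde{\mathfrak I}(\tilde S)=0$; the converse is symmetric. For the ``in particular'' claim, every generator of either invariant ring vanishes at the origin, so $0\in S\cap \tilde S$ inside $\mathcal V^{\mb{C}}$.

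For part (3), I would first exploit the compactness of $\tilde G=O(n)$: its invariants separate real orbits, so $\tilde S=\tilde G\tilde v$ is a single closed orbit. Then, using Corollary \ref{sw} and Lemma \ref{wick}, fix $\Psi$ once and for all so that $\bigl(\rho^{O(p,q)}_{\mathcal V},\rho^{O(n)}_{\tilde{\mathcal V}},\rho^{O(n)}_{\mathcal W}\bigr)$ is a compatible triple, with $\tilde G$ itself playing the role of the compact real form. For each $i$, part (1) ensures that $Gv_i\sim \tilde G\tilde v$ as compatible closed real orbits, and Theorem \ref{t}, case (1), applied to the compatible pair $\bigl(\rho^G_{\mathcal V},\rho^{O(n)}_{\tilde{\mathcal V}}\bigr)$, provides an inner Cartan involution $\theta_i$ with $\theta_i(v_i)=v_i$ together with a common point of $Gv_i\cap \tilde G\tilde v$, i.e.\ of $S_i\cap \tilde S$ inside $\mathcal V^{\mb{C}}$. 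Transporting back by the fixed $\Psi$ produces the single $g\in O(n,\mb{C})$ with $S_i\cap g\cdot \tilde S\neq \emptyset$ for every $i$. Finally, the identification in Section \ref{tensors} between inner Cartan involutions of the tensor action and tensor extensions of Cartan involutions of $g(-,-)$ upgrades each $\theta_i$ to a Cartan involution of the metric, as required.

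The main obstacle I anticipate is the uniformity in $i$ of the element $g\in O(n,\mb{C})$ in part (3): a naive orbit-by-orbit argument would produce a different identification for each component $S_i$. The resolution is to decouple the two steps, fixing the compatible triple once via Lemma \ref{wick} (independently of $i$) and only afterwards applying Theorem \ref{t} orbit by orbit inside the fixed triple, which is possible precisely because the compact target $\tilde S=\tilde G\tilde v$ is a single orbit and hence common to all $i$.
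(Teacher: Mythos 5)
Your proof is correct and follows essentially the same route as the paper: parts (1)--(2) rest on the fact that closed orbits with identical invariants lie in one closed complex orbit (Theorem \ref{clos} together with the separation of closed complex orbits by invariants, as in the proposition of Section \ref{Wickm}), and part (3) combines the single-orbit structure of $\tilde{S}$ for compact $\tilde{G}=O(n)$ with Theorem \ref{t}, fixing $\Psi$ once to get a uniform $g$. The only difference is cosmetic: in (1) you work directly with the closed orbits $Gv_i$ and $\tilde{G}\tilde{v}_j$, which lets you bypass the paper's case split on whether $Gv$ itself is closed.
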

\begin{proof} For all cases it is enough to assume $Gv\sim \tilde{G}\tilde{v}$ are compatible (see the paragraph before the statement). For case (1), suppose first that $Gv\subset V$ is closed, thus so is $\tilde{G}\tilde{v}\subset \tilde{\mathcal{V}}$. Hence since $v\in S_j$ for some $j$, then $Gv\subset S_j$ and is the unique closed orbit in $S_j$. Similarly $\tilde{G}\tilde{v}\subset \tilde{S}_i$ is the unique closed orbit for some $i$. So because $Gv\sim Gv_j$ for all $j$ and $\tilde{G}\tilde{v}\sim \tilde{G}\tilde{v}_i$ for all $i,j$ by the previous proposition, then also $Gv_j\sim \tilde{G}\tilde{v}_i$ for all $i,j$, and the closed case follows. Suppose now that $Gv$ is not closed. Then let $Gx\subset \overline{Gv}$ and $\tilde{G}\tilde{x}\subset \overline{\tilde{G}\tilde{v}}$ be the unique closed orbits in the closures. Now since $x$ and $v$ (respectively $\tilde{x}$ and $\tilde{v}$) have the same invariants then there are $i,j$ such that $Gx\subset S_i$ and $\tilde{G}\tilde{x}\subset \tilde{S}_j$. But since $Gv\sim \tilde{G}\tilde{v}$, then also $Gx\sim \tilde{G}\tilde{x}$ by uniqueness of closed orbits in the closure: $\overline{G^{\mb{C}}v}$, and so the statement follows.

For case (2), if $\mathcal{J}(S)=0$, then $0\in S$, and thus if $Gv_j\subset S_j$ is closed and $\tilde{G}\tilde{v}_i\subset \tilde{S}_i$ is closed, then by the proof of (1): $\tilde{G}\tilde{v}_i\sim Gv_j\sim G\cdot 0=\{0\}$, proving that $\mathcal{J}(\tilde{v}_i)=0$, and thus $\mathcal{J}(\tilde{S})=0$. The converse is symmetric so identical. The second statement follows since $0\in S\cap\tilde{S}$. 

For case (3), since $\tilde{S}=\tilde{G}\tilde{v}$, and $Gv_j\sim \tilde{G}\tilde{v}$ for all $j$ by following the proof of (1), then $Gv_j\cap \tilde{G}v\neq \emptyset$, i.e it follows that $Gv_j\cap \tilde{S}\neq\emptyset$, and so $S_j\cap \tilde{S}\neq \emptyset$. Now by Theorem \ref{t}, the last part of the statement follows.
 \end{proof}

\section*{Acknowledgements} 
This work was supported through the Research Council of Norway, Toppforsk
grant no. 250367: \emph{Pseudo-Riemannian Geometry and Polynomial Curvature Invariants:
Classification, Characterisation and Applications.}

\appendix

\end{document}